\documentclass[journal,web]{ieeecolor}

\usepackage{generic}
\usepackage{cite}

\usepackage{amsmath}
\usepackage{amssymb}
\usepackage{amsfonts}
\usepackage{mathtools}
\usepackage{bm}

\renewcommand{\baselinestretch}{0.90}

\renewcommand{\baselinestretch}{0.97}

\usepackage{titlesec}

\titlespacing*{\section}{0pt}{0.8ex}{0.25ex}
\titlespacing*{\subsection}{0pt}{0.45ex}{0.10ex}
\titlespacing*{\subsubsection}{0pt}{0.35ex}{0.08ex}

\usepackage{graphicx}
\usepackage{float}
\usepackage{placeins}
\usepackage{stfloats}

\usepackage[
    caption=false,
    font=normalsize,
    labelfont=sf,
    textfont=sf
]{subfig}

\usepackage{booktabs}
\usepackage{array}
\usepackage{multirow}

\usepackage{algorithm}
\usepackage{algorithmic}

\usepackage{textcomp}
\usepackage{verbatim}
\usepackage{balance}
\usepackage{url}

\usepackage{xcolor}
\usepackage{tikz}
\usepackage{scalerel}

\usetikzlibrary{
    positioning,
    arrows.meta,
    svg.path
}

\makeatletter
\let\IEEEorigmakecaption\@makecaption
\long\def\@makecaption#1#2{%
    {\IEEEorigmakecaption{#1}{#2}}%
}
\makeatother

\def\BibTeX{%
    {\rm B\kern-.05em%
    {\sc i\kern-.025em b}\kern-.08em%
    T\kern-.1667em%
    \lower.7ex\hbox{E}\kern-.125emX}%
}

\definecolor{orcidlogocol}{HTML}{A6CE39}

\tikzset{
    orcidlogo/.pic={
        \fill[orcidlogocol]
        svg{
            M256,128
            c0,70.7-57.3,128-128,128
            C57.3,256,0,198.7,0,128
            C0,57.3,57.3,0,128,0
            C198.7,0,256,57.3,256,128z
        };

        \fill[white]
        svg{
            M86.3,186.2H70.9V79.1h15.4v48.4V186.2z
        }
        svg{
            M108.9,79.1h41.6
            c39.6,0,57,28.3,57,53.6
            c0,27.5-21.5,53.6-56.8,53.6h-41.8V79.1z
            M124.3,172.4h24.5
            c34.9,0,42.9-26.5,42.9-39.7
            c0-21.5-13.7-39.7-43.7-39.7h-23.7V172.4z
        }
        svg{
            M88.7,56.8
            c0,5.5-4.5,10.1-10.1,10.1
            c-5.6,0-10.1-4.6-10.1-10.1
            c0-5.6,4.5-10.1,10.1-10.1
            C84.2,46.7,88.7,51.3,88.7,56.8z
        };
    }
}

\newcommand{\orcidicon}[1]{%
    \href{https://orcid.org/#1}{%
        \mbox{%
            \scalerel*{%
                \begin{tikzpicture}[yscale=-1,transform shape]
                    \pic{orcidlogo};
                \end{tikzpicture}
            }{|}%
        }%
    }%
}

\newtheorem{assumption}{Assumption}
\newtheorem{definition}{Definition}
\newtheorem{lemma}{Lemma}
\newtheorem{theorem}{Theorem}
\newtheorem{remark}{Remark}
\newtheorem{proposition}{Proposition}

\makeatletter

\def\@begintheorem#1#2{%
    \par
    \topsep=1pt plus 0.5pt minus 0.5pt\relax
    \partopsep=0pt\relax
    \itemsep=0pt\relax
    \parsep=0pt\relax
    \trivlist
    \item[\hskip\labelsep
    {\bfseries\color{black}#1\ #2.}]%
    \itshape\color{black}%
}

\def\@opargbegintheorem#1#2#3{%
    \par
    \topsep=1pt plus 0.5pt minus 0.5pt\relax
    \partopsep=0pt\relax
    \itemsep=0pt\relax
    \parsep=0pt\relax
    \trivlist
    \item[\hskip\labelsep
    {\bfseries\color{black}#1\ #2\ (#3).}]%
    \itshape\color{black}%
}

\@ifundefined{proof}{}{%
    \let\IEEEoriginalproof\proof
    \renewcommand{\proof}{%
        \color{black}%
        \IEEEoriginalproof
    }%
}

\makeatother

\newcommand{\spow}[2]{%
    \left\lceil #1\right\rfloor^{#2}%
}

\newcommand{\bx}{\bm x}

\allowdisplaybreaks[3]

\definecolor{myblue}{RGB}{0,80,160}

\makeatletter
\let\NAT@parse\undefined
\makeatother

\usepackage[
    colorlinks=true,
    linkcolor=myblue,
    citecolor=myblue,
    urlcolor=myblue,
    filecolor=myblue,
    pdfborder={0 0 0}
]{hyperref}

\AtBeginDocument{%
    \hypersetup{
        colorlinks=true,
        linkcolor=myblue,
        citecolor=myblue,
        urlcolor=myblue,
        filecolor=myblue,
        pdfborder={0 0 0}
    }%
}

\begin{document}

\title{
Predefined-Time Integral Reinforcement Learning for Saturated Unknown Nonlinear Multi-Agent Systems Under FDI Attacks and Disturbances
}

\author{
Tien Dat Vu and
Minh Doan
\thanks{
T. D. Vu and M. Doan are with the Faculty of Mechanical Engineering,
Ho Chi Minh City University of Technology (HCMUT),
Vietnam National University Ho Chi Minh City (VNU-HCM),
Ho Chi Minh City, Vietnam
(e-mail: dat.vuv@hcmut.edu.vn; minh.doan@hcmut.edu.vn).
}
\thanks{
Corresponding author: Minh Doan
(e-mail: minh.doan@hcmut.edu.vn).
}
}

\maketitle
\maketitle

\begin{abstract}
This paper addresses secure leader--follower formation of unknown nonlinear
multi-agent systems under actuator constraints, external disturbances, and
false-data-injection (FDI) attacks. The graph-coupled coordination-error
dynamics are formulated as local zero-sum differential games, where a
nonquadratic input utility yields saturation-compatible secure policies and
actuator-channel FDI and disturbances act as adversarial inputs. To eliminate
explicit dependence on the unknown nonlinear drift, an integral
Bellman--Isaacs identity enables critic-only learning from finite trajectory
data. A two-power state-cost structure and a deadline-parameterized critic
update connect optimal learning with predefined-time stabilization. Unlike
fixed-time methods whose settling-time bound is determined by preselected
gains, the proposed framework assigns the overall deadline first and allocates
it among data informativity, critic learning, the reinforcement window, and
formation convergence. Practical predefined-time convergence of the critic
and formation errors to bounded residual sets is established independently of
initial conditions, while secure actuator constraints are satisfied by
construction. Simulations validate the framework under FDI attacks,
disturbances, input constraints, and different initial conditions.
\end{abstract}

\begin{IEEEkeywords}
False-data-injection attack, formation control, integral reinforcement
learning, multi-agent systems, predefined-time practical stability,
secure control.
\end{IEEEkeywords}

\section{Introduction}
\label{sec:introduction}

Formation control of networked autonomous systems increasingly operates
under uncertain nonlinear dynamics, actuator limitations, external
disturbances, and adversarial networked conditions. These difficulties are
particularly relevant in cyber--physical multi-agent systems, where exchanged
information and actuator commands may be corrupted by false-data-injection
(FDI) attacks and the resulting effects can propagate through the interaction
graph and disrupt collective motion
\cite{OlfatiSaber2007Consensus,Ren2007InformationConsensus,
MoSinopoli2010FDI,Pasqualetti2013AttackDetection,
Teixeira2015SecureControl}.

Zero-sum differential games provide a natural framework for modeling the
competition between secure control and adversarial channels, with optimality
characterized by the Hamilton--Jacobi--Isaacs (HJI) equation
\cite{Isaacs1965DifferentialGames,BasarOlsder1999DynamicGames}.
Because direct HJI solution is intractable for general unknown nonlinear
systems, adaptive dynamic programming and reinforcement learning approximate
the value function from data
\cite{Vrabie2009DirectAdaptiveOptimalControl,
Vamvoudakis2010OnlineActorCritic}.
Integral reinforcement learning (IRL) further replaces the pointwise
optimality equation by a finite-window Bellman identity, allowing the
unknown drift to be removed from critic regression
\cite{Modares2014OptimalTrackingIRL}. Meanwhile, actuator limits should be
embedded directly into the optimization; nonquadratic input utilities yield
saturation-compatible hyperbolic-tangent policies, while same-channel FDI
can be modeled separately as an adversarial input
\cite{AbuKhalaf2005Nonquadratic}.

Convergence time is equally important. Finite-time stability permits
initial-condition-dependent settling times
\cite{BhatBernstein2000FiniteTime}, whereas fixed-time stability guarantees
a uniform bound independent of the initial state
\cite{Polyakov2012FixedTime}. That bound, however, is generally induced by
chosen gains rather than assigned beforehand. Predefined-time stability
reverses this design logic by prescribing the deadline first and selecting
the gains accordingly. Kokolakis \emph{et al.} developed predefined-time RL
for nonlinear optimal feedback control
\cite{Kokotakis2025PredefinedTimeRL}, but only for a single nonlinear system,
without graph-coupled formation, actuator-channel FDI, or saturation-aware
multi-agent interaction. Gong \emph{et al.} established fixed-time RL for
secure formation of second-order multi-agent systems under FDI attacks
\cite{Gong2025SecureFormationFDI}, but the convergence horizon is fixed-time
rather than designer-assigned predefined-time, and general unknown nonlinear
saturated agents are not addressed.

Our recent works resolved complementary parts of this problem. Fixed-time
resilient IRL was first developed for unknown input-constrained nonlinear
systems under FDI attacks and disturbances
\cite{VuDoan2026FixedTimeResilientIRL}, and was subsequently strengthened to
predefined-time resilient IRL with designer-assigned learning and
closed-loop horizons
\cite{VuDoan2026PredefinedTimeResilientIRL}; both concern single nonlinear
systems. In parallel, fixed-time IRL was extended to saturated nonlinear
leader--follower multi-agent systems under FDI attacks
\cite{VuDoan2026FixedTimeMASIRL}. That formulation achieves secure formation
within an initial-condition-independent fixed-time bound, but the bound is
determined by the selected gains and cannot be assigned directly a priori.
Thus, a unified multi-agent framework combining unknown nonlinear dynamics,
bounded actuation, adversarial channels, data-driven HJI learning, and a
designer-assigned convergence deadline remains missing.

This paper develops a predefined-time resilient IRL framework to close this
gap. The graph-coupled coordination-error dynamics are decomposed into
unknown drift, self and neighboring secure-control channels, and local and
neighboring adversarial channels. A saturation-aware local zero-sum HJI game
is combined with a two-power state cost that generates the dissipation
structure required for predefined-time analysis. An integral Bellman--Isaacs
residual removes the unknown drift from critic learning, while finite replay
data preserve informativity after online excitation fades. The total
convergence horizon is assigned first and then explicitly allocated among
data informativity, critic learning, the reinforcement window, and formation
convergence.


The main contributions are fourfold.
\emph{First}, an end-to-end resilient architecture integrates graph-coupled secure formation, saturation-aware graphical games, IRL, FDI/disturbance attenuation, and predefined-time certification for unknown nonlinear multi-agent systems.
\emph{Second}, a cost-induced predefined-time HJI construction employs low- and high-order state penalties to generate the required two-power dissipation without imposing restrictive conditions on the unknown ideal value function.
\emph{Third}, a critic-only integral learning law enables explicit gain synthesis from an assigned critic-learning horizon, with finite replay informativity replacing persistent excitation.
\emph{Fourth}, a unified pointwise--integral Lyapunov analysis combines a derivative-limited moving-window inequality with explicit deadline allocation to establish practical predefined-time convergence of both learning and formation while preserving actuator constraints by construction.


\textbf{Notation:}
The notation used in this paper is standard. For a vector \(\mathbf z\), \(\|\mathbf z\|\) denotes the Euclidean norm. For a symmetric matrix \(A\), \(A>0\) (\(A\ge0\)) denotes positive definiteness (positive semidefiniteness). For a matrix \(\mathbf M\), \(\|\mathbf M\|\) denotes the induced norm; if \(\mathbf M=\mathbf M^\top\), then \(\lambda_{\min}(\mathbf M)\) and \(\lambda_{\max}(\mathbf M)\) denote its minimum and maximum eigenvalues. The symbols \(\operatorname{col}(\cdot)\), \(\operatorname{diag}(\cdot)\), and \(\mathbf I_n\) denote column stacking, a diagonal matrix, and the \(n\times n\) identity matrix, respectively, while \(\mathbf 1_n\in\mathbb R^n\) denotes the \(n\)-dimensional all-ones vector. The notation \(AC_{\mathrm{loc}}([0,\infty);\mathbb R^n)\) denotes the space of locally absolutely continuous functions from \([0,\infty)\) to \(\mathbb R^n\), i.e., functions absolutely continuous on every compact subinterval. A continuous function \(\alpha:[0,\infty)\to[0,\infty)\) belongs to class \(\mathcal K_\infty\) if \(\alpha(0)=0\), \(\alpha\) is strictly increasing, and \(\alpha(s)\to\infty\) as \(s\to\infty\). For any \(r>0\) and \(x_c\in\mathbb R^n\), \(\mathcal B_r(x_c):=\{x\in\mathbb R^n:\|x-x_c\|<r\}\) and \(\overline{\mathcal B}_r(x_c):=\{x\in\mathbb R^n:\|x-x_c\|\le r\}\) denote the open and closed Euclidean balls of radius \(r\) centered at \(x_c\), respectively.

\section{Preliminaries and Problem Formulation}

\subsection{Preliminaries}

Consider a graph \(\mathcal G=(\mathcal V,\mathcal E,\mathbf A)\), where \(\mathcal V=\{1,\ldots,N\}\), \(\mathcal E\subseteq\mathcal V\times\mathcal V\), and \(\mathbf A=[a_{ij}]\in\mathbb R^{N\times N}\) is the adjacency matrix. The weight \(a_{ij}>0\) means that follower \(i\) receives information from follower \(j\), and \(a_{ij}=0\) otherwise. Define \(\mathcal N_i:=\{j\in\mathcal V:(j,i)\in\mathcal E\}\), \(\mathbf L:=\mathbf D_{\mathcal G}-\mathbf A\), \(\mathbf D_{\mathcal G}:=\operatorname{diag}\{\sum_{j=1}^{N}a_{ij}\}\), \(\mathbf B_0:=\operatorname{diag}(b_{10},\ldots,b_{N0})\), and \(\mathbf H:=\mathbf L+\mathbf B_0\), where \(b_{i0}\ge0\) is the pinning gain from the leader to follower \(i\). This leader-rooted construction follows the secure formation-control setting of \cite{Gong2025SecureFormationFDI} and will be used to define the distributed formation error.

\begin{assumption}[Graph connectivity \cite{Hong2008DistributedObservers,BermanPlemmons1994}]
\label{ass:graph_connectivity}
The leader is globally reachable from the follower graph. Equivalently, \(\mathbf H\) is nonsingular. For a directed graph, there exists a positive diagonal matrix \(\mathbf\Pi:=\operatorname{diag}(\pi_1,\ldots,\pi_N)\) such that \(\mathbf H_s:=\frac{\mathbf\Pi\mathbf H+\mathbf H^\top\mathbf\Pi}{2}>0\).
\end{assumption}

\begin{definition}[Symmetric input constraint]
\label{def:symmetric_input_constraint}
For each follower \(i\), the actuator input is said to satisfy a
symmetric input constraint if
\begin{equation}
    u_{ci}\in\mathbb U_i
    :=
    \left\{
        u_{ci}\in\mathbb R^m:
        |u_{ci\ell}|<\bar u_{i\ell},\
        \ell=1,\ldots,m
    \right\},
    \label{eq:symmetric_input_constraint}
\end{equation}
where \(\bar u_{i\ell}>0\) is the admissible magnitude of the
\(\ell\)-th input channel.
\begin{equation}
    \bar{\mathbf U}_i
    =
    \operatorname{diag}
    (\bar u_{i1},\ldots,\bar u_{im}),
    \label{eq:Ubar_def}
\end{equation}
\end{definition}

This symmetric bounded-input setting is standard in constrained optimal
control and saturation-aware ADP/HJI designs
\cite{AbuKhalaf2006HJIInputSaturation,
AbuKhalaf2005Nonquadratic,
Bai2019AdaptiveRLSaturation}. It will be used in Section~\ref{Sec3} to
construct a nonquadratic saturation-compatible utility and to obtain a
bounded HJI policy satisfying \(u_{ci}\in\mathbb U_i\) by construction.

\subsection{Problem Formulation}

Consider a leader--follower network with one leader and \(N\) nonlinear
followers. The leader and follower \(i\) satisfy, respectively,
\begin{equation}
\dot x_0=f_0(x_0),\qquad x_0\in\mathbb R^n,
\label{eq:leader_model}
\end{equation}
and
\begin{equation}
\dot x_i
=
f_i(x_i)
+
\mathbf g_i(x_i)u_i
+
\mathbf d_i(x_i)\omega_i,
\qquad
i=1,\ldots,N,
\label{eq:follower_model}
\end{equation}
where \(x_i\in\mathbb R^n\), \(u_i\in\mathbb R^m\), and
\(\omega_i\in\mathbb R^{n_{\omega_i}}\) is an unknown disturbance.
Following the control-channel FDI model in
\cite{Gong2025SecureFormationFDI}, let
\begin{equation}
u_i(t)=u_{ci}(t)+u_{ai}(t),
\label{eq:input_decomposition}
\end{equation}
where \(u_{ci}\in\mathbb R^m\) is the defender-generated secure control
and \(u_{ai}\in\mathbb R^m\) is the FDI signal. Assume
\(\omega_i,u_{ai}\in L_2([0,\infty))\cap L_\infty([0,\infty))\),
with \(\|\omega_i(t)\|\le\bar\omega_i\) and
\(\|u_{ai}(t)\|\le\bar u_{ai}\) for all \(t\ge0\). Thus,
\begin{equation}
\dot x_i
=
f_i(x_i)
+
\mathbf g_i(x_i)u_{ci}
+
\mathbf g_i(x_i)u_{ai}
+
\mathbf d_i(x_i)\omega_i .
\label{eq:attacked_follower_model}
\end{equation}
The symmetric input constraint of
Definition~\ref{def:symmetric_input_constraint} is imposed on the
admissible secure policy in the HJI design; no saturation map is included
in the plant model.

\begin{remark}[FDI location and actuator constraint]
\label{rem:fdi_location}
The model in \eqref{eq:attacked_follower_model} assumes that the actuator-side FDI
signal \(u_{ai}\) is injected after the bounded secure command \(u_{ci}\)
has been generated. Hence, the constraint
\(|u_{ci,\ell}|<\bar u_{i\ell}\) applies to the defender-generated command
only, whereas the plant receives \(u_{ci}+u_{ai}\). If the attack were
instead injected before a physical saturation element, the plant input
would be \(\operatorname{sat}(u_{ci}+u_{ai})\), which would lead to a
different game formulation. The present work adopts the former architecture.
\end{remark}

\begin{definition}[$\mathcal{L}_2$-gain]
\label{def2}
Consider a nonlinear system with exogenous input \(v\in\mathbb{R}^{n_v}\) and regulated output \(y_r\in\mathbb{R}^{n_r}\). The system has finite \(\mathcal{L}_2\)-gain not exceeding \(\lambda>0\) if there exists a nonnegative function \(\alpha(\cdot)\), with \(\alpha(0)=0\), such that every admissible trajectory satisfies \(\int_0^T\|y_r(t)\|^2dt\le\lambda^2\int_0^T\|v(t)\|^2dt+\alpha(x(0))\) for all \(T\ge0\). For \(x(0)=0\), this gives \(\|y_r\|_2\le\lambda\|v\|_2\). Equivalently, the bound is certified by a storage function \(\mathcal{S}(x)\ge0\) satisfying \(\dot{\mathcal{S}}+\|y_r\|^2-\lambda^2\|v\|^2\le0\), which provides the standard link to the corresponding HJB/HJI formulation.
\end{definition}

\begin{assumption}[Regularity]
\label{ass2}
Let \(\Omega_i\subset\mathbb R^n\) be a compact admissible operating region containing the origin. The functions \(f_0(x_0)\), \(f_i(x_i)\), \(\mathbf g_i(x_i)\), and \(\mathbf d_i(x_i)\) are locally Lipschitz on \(\Omega_i\). The desired offsets \(h_i(t)\) are continuously differentiable, and \(h_i(t)\), \(\dot h_i(t)\) are bounded.
\label{ass:regularity}
\end{assumption}

\begin{definition}[Admissible secure policy \cite{AbuKhalafLewis2005,Vamvoudakis2010OnlineActorCritic}]
For the local graphical game defined above, a secure control policy \(u_{ci}=u_{ci}(x_i)\) is admissible on \(\Omega_i\), denoted by \(u_{ci}\in\Psi_i(\Omega_i)\), if \(u_{ci}(x_i)\) is continuous on \(\Omega_i\), \(u_{ci}(0)=0\), \(u_{ci}(x_i)\in\mathbb U_i\) for all \(x_i\in\Omega_i\), the disturbance-and-attack local coordination-error dynamics under \(u_{ci}\) are asymptotically stable on \(\Omega_i\), and the associated infinite-horizon cost \(J_i\) is finite for every \(x_i(0)\in\Omega_i\).
\end{definition}
\begin{remark}
Instead of merely assuming an initial admissible policy as in classical
PI/ADP formulations
\cite{AbuKhalafLewis2005,Vrabie2009DirectAdaptiveOptimalControl},
Appendix~A in \cite{VuDoan2026FixedTimeMASIRL} reuses the replay data collected by
the IRL mechanism to construct and certify such a policy.
\end{remark}

Let \(h_i(t)\in\mathbb R^n\) be the desired offset of follower \(i\), and
define
\(
e_i:=x_i-x_0-h_i
\)
and
\(
\chi_i:=
\sum_{j\in\mathcal N_i}a_{ij}(e_i-e_j)+b_{i0}e_i
\).
Equivalently,
\(
\chi_i=
\sum_{j\in\mathcal N_i}
a_{ij}[(x_i-h_i)-(x_j-h_j)]
+b_{i0}[(x_i-h_i)-x_0]
\).
With
\(e:=\operatorname{col}(e_1,\ldots,e_N)\) and
\(\chi:=\operatorname{col}(\chi_1,\ldots,\chi_N)\),
\(
\chi=(\mathbf H\otimes\mathbf I_n)e
\).
Under Assumption~\ref{ass:graph_connectivity},
\(
\chi=0\Longleftrightarrow e=0
\)
and
\(
\|e\|
\le
\|(\mathbf H^{-1}\otimes\mathbf I_n)\|\,\|\chi\|
\).
Hence, formation tracking is studied through the stabilization of
\(\chi\).

For compactness, let
\(
F_i^e:=
f_i(x_i)-f_0(x_0)-\dot h_i
\),
\(
\mathbf g_i:=\mathbf g_i(x_i)
\), and
\(
\mathbf d_i:=\mathbf d_i(x_i)
\).
From \eqref{eq:attacked_follower_model},
\(
\dot e_i=
F_i^e+\mathbf g_i u_{ci}
+\mathbf g_i u_{ai}
+\mathbf d_i\omega_i
\),
and
\(
\dot\chi_i=
\sum_{j\in\mathcal N_i}a_{ij}(\dot e_i-\dot e_j)
+b_{i0}\dot e_i
\).
Therefore,
\(
\dot{\chi}_i
=
\sum_{j\in\mathcal N_i}
a_{ij}
(
F_i^e-F_j^e
+\mathbf g_i u_{ci}
-\mathbf g_j u_{cj}
)
+
b_{i0}
(
F_i^e+\mathbf g_i u_{ci}
)
+
\sum_{j\in\mathcal N_i}
a_{ij}
(
\mathbf g_i u_{ai}
-\mathbf g_j u_{aj}
+\mathbf d_i\omega_i
-\mathbf d_j\omega_j
)
+
b_{i0}
(
\mathbf g_i u_{ai}
+\mathbf d_i\omega_i
)
\).

Define
\begin{equation}
\kappa_i
=
b_{i0}
+
\sum_{j\in\mathcal N_i}a_{ij},
\qquad
\mathbf G_{ii}^{\chi}(x_i)
=
\kappa_i\mathbf g_i.
\label{eq:kappa_self_G}
\end{equation}
Let
\(\mathcal N_i=\{j_1,\ldots,j_{n_i^{\mathrm{nb}}}\}\),
\(n_i^{\mathrm{nb}}:=|\mathcal N_i|\),
\(x_{\mathcal N_i}:=
\operatorname{col}(x_{j_1},\ldots,x_{j_{n_i^{\mathrm{nb}}}})\), and
\(u_{c\mathcal N_i}:=
\operatorname{col}(u_{cj_1},\ldots,u_{cj_{n_i^{\mathrm{nb}}}})\).
Then
\begin{equation}
\mathbf G_{i\mathcal N}^{\chi}(x_{\mathcal N_i})
=
\big[
-a_{ij_1}\mathbf g_{j_1},
\ldots,
-a_{ij_{n_i^{\mathrm{nb}}}}
\mathbf g_{j_{n_i^{\mathrm{nb}}}}
\big],
\label{eq:neighbor_G}
\end{equation}
where
\(\mathbf g_{j_\ell}=\mathbf g_{j_\ell}(x_{j_\ell})\), so that
\(
\mathbf G_{i\mathcal N}^{\chi}(x_{\mathcal N_i})
u_{c\mathcal N_i}
=
-\sum_{j\in\mathcal N_i}
a_{ij}\mathbf g_j(x_j)u_{cj}
\).

Define
\begin{equation}
F_i^{\chi}
=
\sum_{j\in\mathcal N_i}
a_{ij}(F_i^e-F_j^e)
+
b_{i0}F_i^e.
\label{eq:FX_def}
\end{equation}
Equivalently,
\(
F_i^{\chi}
=
\kappa_i
[f_i(x_i)-f_0(x_0)-\dot h_i]
-
\sum_{j\in\mathcal N_i}
a_{ij}
[f_j(x_j)-f_0(x_0)-\dot h_j]
\),
which collects the nonlinear drift, leader dynamics, and desired-offset
dynamics.

Likewise,
\(
\Xi_i^{\chi}
=
\kappa_i
(\mathbf g_i u_{ai}+\mathbf d_i\omega_i)
-
\sum_{j\in\mathcal N_i}
a_{ij}
(\mathbf g_j u_{aj}+\mathbf d_j\omega_j)
\).
Define
\begin{equation}
\nu_i
=
\operatorname{col}
\big(
\omega_i,
\omega_{j_1},
\ldots,
\omega_{j_{n_i^{\mathrm{nb}}}},
u_{ai},
u_{aj_1},
\ldots,
u_{aj_{n_i^{\mathrm{nb}}}}
\big).
\label{eq:nu_def}
\end{equation}
A compatible adversarial input matrix is
\begin{align}
\mathbf D_i^{\chi}(x_i,x_{\mathcal N_i})
=
\big[
&\kappa_i\mathbf d_i,\,
-a_{ij_1}\mathbf d_{j_1},
\ldots,
-a_{ij_{n_i^{\mathrm{nb}}}}
\mathbf d_{j_{n_i^{\mathrm{nb}}}},
\nonumber\\
&\kappa_i\mathbf g_i,\,
-a_{ij_1}\mathbf g_{j_1},
\ldots,
-a_{ij_{n_i^{\mathrm{nb}}}}
\mathbf g_{j_{n_i^{\mathrm{nb}}}}
\big].
\label{eq:DX_matrix}
\end{align}
Thus,
\begin{equation}
\Xi_i^{\chi}
=
\mathbf D_i^{\chi}(x_i,x_{\mathcal N_i})\nu_i.
\label{eq:Xi_Dnu}
\end{equation}

Therefore,
\begin{align}
\dot\chi_i
&=
F_i^{\chi}(\xi_i)
+
\mathbf G_{ii}^{\chi}(x_i)u_{ci}
+
\mathbf G_{i\mathcal N}^{\chi}(x_{\mathcal N_i})
u_{c\mathcal N_i}
\nonumber\\
&\quad
+
\mathbf D_i^{\chi}(x_i,x_{\mathcal N_i})\nu_i,
\label{eq:X_dynamics_final}
\end{align}
where
\begin{equation}
\xi_i
=
\operatorname{col}
\big(
x_i,
x_{\mathcal N_i},
x_0,
\dot h_i,
\dot h_{\mathcal N_i}
\big),
\qquad
\dot h_{\mathcal N_i}
=
\operatorname{col}
\big(
\dot h_{j_1},
\ldots,
\dot h_{j_{n_i^{\mathrm{nb}}}}
\big).
\label{eq:zeta_def}
\end{equation}

\begin{remark}[Reconstruction of the admissible operating region]
\label{remark3}
The origin is retained as the nominal regulation target and is not excluded from the operating region \(\Omega_i\). For the fixed-time comparison analysis, choose an arbitrary \(r_{i,-}>0\) such that \(B_{r_{i,-}}(0)\subset\Omega_i\), consistently with the practical terminal region established below. Three mutually exclusive cases are considered. If \(\chi_i=0\), the nominal coordination objective has already been achieved and, by admissibility, \(u_{ci}(0)=0\) while the associated infinite-horizon value is finite, so no convergence estimate is required. If \(0<\|\chi_i\|<r_{i,-}\), the error already lies in the prescribed terminal neighborhood \(B_{r_{i,-}}(0)\), and the fixed-time comparison inequalities used to prove entrance into that neighborhood need not be invoked. The nontrivial case is therefore \(\|\chi_i\|\ge r_{i,-}\), for which define \(\Omega_i^{r}:=\{\chi_i\in\Omega_i:\|\chi_i\|\ge r_{i,-}\}\). Since \(\Omega_i\) is compact and \(\{\chi_i:\|\chi_i\|\ge r_{i,-}\}\) is closed, \(\Omega_i^{r}\) is compact, \(0\notin\Omega_i^{r}\), and \(\inf_{\chi_i\in\Omega_i^{r}}\|\chi_i\|\ge r_{i,-}>0\). Consequently, \(V_i^{*}(\chi_i)/\|\chi_i\|^{2}\) and \(\|\nabla_{\chi_i}V_i^{*}(\chi_i)\|/\|\chi_i\|\) are well defined on \(\Omega_i^{r}\) (see the proof of Lemma~\ref{lem:local_value_bounds}), which is precisely the region where the fixed-time decay argument is required. If a disturbance or FDI signal drives the trajectory away from the origin, no exclusion assumption is imposed: while \(0<\|\chi_i\|<r_{i,-}\), the trajectory remains inside the prescribed terminal neighborhood, and once \(\|\chi_i\|\ge r_{i,-}\), the same comparison argument applies again. Thus, excluding the origin from \(\Omega_i^{r}\) is purely analytical and does not remove it from either the HJI operating domain or the admissible-policy formulation.
\end{remark}

\section{Secure Saturation-Aware HJI Formulation}
\label{Sec3}

\providecommand{\bx}{\bm x}


\subsection{Graphical Game and Predefined-Time Shaped HJI}

\begingroup
\setlength{\abovedisplayskip}{3pt}
\setlength{\belowdisplayskip}{3pt}
\setlength{\abovedisplayshortskip}{2pt}
\setlength{\belowdisplayshortskip}{2pt}
\setlength{\jot}{1pt}

Accordingly, the lumped term in \eqref{eq:X_dynamics_final} can be equivalently decomposed as
\begin{equation}
\mathbf D_i^\chi(x_i,x_{\mathcal N_i})\nu_i
=
\mathbf B_i^\chi(x_i)\varpi_i
+
\sum_{j\in\mathcal N_i}
\mathbf B_{ij}^\chi(x_j)\varpi_j.
\label{eq:adversarial_channel_decomposition}
\end{equation}
Here, \(\mathbf B_i^\chi(x_i)=\kappa_i[\mathbf d_i(x_i)\ \mathbf g_i(x_i)]\), \(\varpi_i=\operatorname{col}\{\omega_i,u_{ai}\}\), \(\varpi_j=\operatorname{col}\{\omega_j,u_{aj}\}\), and \(\mathbf B_{ij}^\chi(x_j)=[-a_{ij}\mathbf d_j(x_j)\ -a_{ij}\mathbf g_j(x_j)]\). This decomposition distinguishes the locally acting adversarial signal from those entering through neighboring agents and directly yields the local graphical-game representation
\begin{align}
\dot{\chi}_i
={}&
F_i^{\chi}(\xi_i)
+\mathbf G_{ii}^{\chi}(x_i)u_{ci}
+\mathbf G_{i\mathcal N}^{\chi}(x_{\mathcal N_i})u_{c\mathcal N_i}
\nonumber\\[-1mm]
&+
\mathbf B_i^{\chi}(x_i)\varpi_i
+\sum_{j\in\mathcal N_i}
\mathbf B_{ij}^{\chi}(x_j)\varpi_j.
\label{eq:hji_local_dynamics_game}
\end{align}

\endgroup

\begin{assumption}[Local state sufficiency/Markov condition]
\label{assk}
For each follower \(i\), on the operating region \(\Omega_i\), any two admissible physical configurations associated with the same \(\chi_i\), under identical control and adversarial inputs, are assumed to induce the same right-hand side of \eqref{eq:hji_local_dynamics_game}.
\label{ass:local_state_sufficiency}
\end{assumption}

Thus, the last two terms in \eqref{eq:hji_local_dynamics_game} are exactly the decomposed form of the lumped adversarial channel \(\mathbf D_i^{\chi}(x_i,x_{\mathcal N_i})\nu_i\) in \eqref{eq:X_dynamics_final}. The neighboring secure inputs \(u_{c\mathcal N_i}\) are treated as fixed local coupling signals when the HJI equation of follower \(i\) is formed. For a fixed neighboring secure policy \(u_{c,-i}:=\{u_{cj}\}_{j\in\mathcal N_i}\), define the local graphical-game cost (From~Definition~\ref{def2})
\begin{align}
J_i
&=
\int_0^\infty
\Big[
Q_{ii}(\chi_i)
+\mathcal U_i(u_{ci})
+\sum_{j\in\mathcal N_i}\mathcal U_{ij}(u_{cj})
\nonumber\\
&\qquad
-\gamma_i^2\varpi_i^\top\mathbf T_{ii}\varpi_i
-\gamma_i^2
\sum_{j\in\mathcal N_i}
\varpi_j^\top\mathbf T_{ij}\varpi_j
\Big]dt ,
\label{eq:graphical_game_cost}
\end{align}
where \(Q_{ii}:\mathbb R^n\to\mathbb R_{\ge0}\) is a continuous positive-definite state penalty satisfying
\begin{equation}
Q_{ii}(0)=0,
\qquad
Q_{ii}(\chi_i)>0,
\quad
\forall\,\chi_i\neq0,
\label{eq:Qii_general_condition}
\end{equation}
\(\mathbf T_{ii}=\mathbf T_{ii}^{\top}>0\), \(\mathbf T_{ij}=\mathbf T_{ij}^{\top}>0\), and \(\gamma_i>0\). The functions \(\mathcal U_i\) and \(\mathcal U_{ij}\) penalize the secure control efforts of follower \(i\) and its neighbors, whereas the negative quadratic terms represent the maximizing role of the disturbances and same-channel FDI attacks.

\begin{remark}
The state penalty \(Q_{ii}\) is deliberately kept general at this stage. Its predefined-time shaping is introduced only after the nominal HJI equation has been obtained. Hence, the graphical-game formulation itself does not assume a prescribed decay condition on the unknown value function and does not require any value-dependent quantity to be inserted artificially into the running cost.
\end{remark}


Define \(\varpi_{-i}:=\{\varpi_j:j\in\mathcal N_i\}\). The corresponding local value function is
\begin{equation}
V_i^\ast(\chi_i(0))
=
\min_{u_{ci}}
\max_{\varpi_i,\varpi_{-i}}
J_i\big(
\chi_i(0),
u_{ci},
u_{c,-i},
\varpi_i,
\varpi_{-i}
\big).
\label{eq:nominal_game_value}
\end{equation}


\begin{lemma}[Nash saddle condition]
\label{lem:nash_saddle_condition}
A set of policies \(\{u_{ci}^\ast,\varpi_i^\ast\}_{i=1}^{N}\) is a Nash saddle equilibrium if, for every follower \(i\), \(J_i(u_{ci}^\ast,u_{c,-i}^\ast,\varpi_i,\varpi_{-i})\le J_i(u_{ci}^\ast,u_{c,-i}^\ast,\varpi_i^\ast,\varpi_{-i}^\ast)\le J_i(u_{ci},u_{c,-i}^\ast,\varpi_i^\ast,\varpi_{-i}^\ast)\) for all admissible \(u_{ci}\), \(\varpi_i\), and \(\varpi_{-i}\).
\end{lemma}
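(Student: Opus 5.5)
The "lemma" is really a definition phrased as a characterization: the double inequality \emph{is} the Nash saddle property. So the proof should show that it captures the intended game-theoretic content: no unilateral deviation is profitable, and the local saddle value coincides with \(V_i^\ast\) in \eqref{eq:nominal_game_value}. I will fix \(i\), freeze the neighbouring secure policies at \(u_{c,-i}^\ast\), and view \(J_i\) in \eqref{eq:graphical_game_cost} as a two-player zero-sum payoff. The minimizer is \(u_{ci}\). The maximizer is the stacked adversarial signal \((\varpi_i,\varpi_{-i})\), which is legitimate by the decomposition \eqref{eq:adversarial_channel_decomposition}.

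\textbf{Step 1 (no profitable deviation).} The left inequality says the adversarial coalition cannot increase \(J_i\) by deviating from \((\varpi_i^\ast,\varpi_{-i}^\ast)\) while the defenders keep \(u_c^\ast\). The right inequality says defender \(i\) cannot decrease \(J_i\) by deviating from \(u_{ci}^\ast\) while the adversaries keep \(\varpi^\ast\) and the neighbours keep \(u_{c,-i}^\ast\). Holding these for every \(i\) is exactly the graphical-game Nash condition: each agent's policy is a best response to the others', with the disturbance/FDI channels acting as maximizing players.

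\textbf{Step 2 (saddle value).} Write \(\bar J_i:=J_i(u_{ci}^\ast,u_{c,-i}^\ast,\varpi_i^\ast,\varpi_{-i}^\ast)\). Take the sup over \((\varpi_i,\varpi_{-i})\) on the left and the inf over admissible \(u_{ci}\in\Psi_i(\Omega_i)\) on the right. This gives
\begin{align*}
\min_{u_{ci}}\max_{\varpi}J_i
\le \max_{\varpi}J_i(u_{ci}^\ast,\cdot)
\le \bar J_i
\le \min_{u_{ci}}J_i(\cdot,\varpi^\ast)
\le \max_{\varpi}\min_{u_{ci}}J_i.
\end{align*}
The weak duality \(\max\min\le\min\max\) always holds, so all four terms are equal. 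Hence \(\bar J_i=V_i^\ast(\chi_i(0))\), the pair \((u_{ci}^\ast,\varpi^\ast)\) attains the value in \eqref{eq:nominal_game_value}, and the two policies are interchangeable saddle strategies.

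\textbf{Main obstacle.} The delicate point is well-posedness: all the \(J_i\) involved must be finite, so that the inequalities are meaningful. This is where admissibility enters. For \(u_{ci}\in\Psi_i(\Omega_i)\), the cost is finite by definition. The signals \(\varpi\) lie in \(L_2\), so the terms \(-\gamma_i^2\varpi^\top\mathbf T\varpi\) are integrable. The terms \(Q_{ii}\) and \(\mathcal U_i\) are nonnegative, so \(J_i\) is bounded above and the sup over \(\varpi\) is well defined. I would state these facts explicitly and then invoke the standard zero-sum saddle argument of \cite{BasarOlsder1999DynamicGames} to conclude.
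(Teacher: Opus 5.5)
This is essentially the paper's route. The paper gives no argument beyond identifying the lemma as the standard zero-sum saddle-point characterization of Ba\c{s}ar--Olsder and reading it as ``no profitable unilateral deviation,'' which is your Step~1. Your Step~2 is a correct standard addition that the paper does not spell out: the min--max chain plus weak duality shows that \(\bar J_i\) equals the upper value \(V_i^\ast(\chi_i(0))\) in \eqref{eq:nominal_game_value}, and that \(u_{ci}^\ast\) and \((\varpi_i^\ast,\varpi_{-i}^\ast)\) are minimax and maximin strategies, respectively.

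Your well-posedness paragraph has the direction wrong. For a fixed \(\varpi\in L_2\), nonnegativity of \(Q_{ii}\) and \(\mathcal U_i\) bounds \(J_i\) from \emph{below}, by \(-\gamma_i^2\int_0^\infty\big(\varpi_i^\top\mathbf T_{ii}\varpi_i+\sum_{j\in\mathcal N_i}\varpi_j^\top\mathbf T_{ij}\varpi_j\big)\,dt>-\infty\). That lower bound is what the infimum over \(u_{ci}\) needs; it says nothing about the supremum over \(\varpi\). Finiteness of \(\sup_{\varpi}J_i\) is an \(\mathcal L_2\)-gain property, i.e., the existence of a nonnegative storage function as in Definition~\ref{def2}, which requires \(\gamma_i\) large enough. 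At \(u_{ci}^\ast\) this finiteness is already implied by the left saddle inequality once \(\bar J_i<\infty\), so the remark is not needed for your chain.

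Also, interchangeability is a statement about two distinct saddle pairs and needs the short cross-inequality argument. A single pair only gives the minimax/maximin property above.
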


\begin{remark}
Lemma~\ref{lem:nash_saddle_condition} is the standard saddle-point characterization of a zero-sum dynamic game; see \cite{BasarOlsder1999DynamicGames}. In the present context, the defender cannot decrease the local cost by changing \(u_{ci}\) alone, whereas the adversarial player cannot increase it by changing \(\varpi_i\) alone once the saddle policies are reached.
\end{remark}

The Hamiltonian associated with the original cost \eqref{eq:graphical_game_cost} is \(\bar{\mathcal H}_i=Q_{ii}(\chi_i)+\mathcal U_i(u_{ci})+\sum_{j\in\mathcal N_i}\mathcal U_{ij}(u_{cj})-\gamma_i^2\varpi_i^\top\mathbf T_{ii}\varpi_i-\gamma_i^2\sum_{j\in\mathcal N_i}\varpi_j^\top\mathbf T_{ij}\varpi_j+\nabla V_i^{\ast\top}[F_i^\chi(\xi_i)+\mathbf G_{ii}^\chi(x_i)u_{ci}+\mathbf G_{i\mathcal N}^\chi(x_{\mathcal N_i})u_{c\mathcal N_i}+\mathbf B_i^\chi(x_i)\varpi_i+\sum_{j\in\mathcal N_i}\mathbf B_{ij}^\chi(x_j)\varpi_j]\).

\begin{proposition}[Nominal local HJI equation]
\label{prop:nominal_hji}
If \(V_i^\ast\) is continuously differentiable and is the value function of \eqref{eq:nominal_game_value}, then \(V_i^\ast\) satisfies
\begin{equation}
0
=
\min_{u_{ci}\in\mathbb U_i}
\max_{\varpi_i,\varpi_{-i}}
\bar{\mathcal H}_i
(
\chi_i,
\nabla V_i^\ast,
u_{ci},
\varpi_i,
\varpi_{-i}
).
\label{eq:nominal_hji_equation}
\end{equation}
\end{proposition}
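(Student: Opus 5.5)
The plan is the standard dynamic-programming (Bellman optimality) argument for zero-sum differential games, applied to the local coordination-error dynamics \eqref{eq:hji_local_dynamics_game}. Assumption~\ref{ass:local_state_sufficiency} makes the right-hand side a function of \(\chi_i\) and the inputs only. The neighboring secure policies \(u_{c,-i}\) are frozen as coupling signals. Hence \(V_i^\ast\) in \eqref{eq:nominal_game_value} is a well-defined function of \(\chi_i\) alone, and the cost \eqref{eq:graphical_game_cost} is time-invariant.

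\textbf{Step 1: principle of optimality.} Because the dynamics and running cost are time-invariant, for any \(\Delta t>0\) the value splits as
\[
V_i^\ast(\chi_i(t))=\min_{u_{ci}}\max_{\varpi_i,\varpi_{-i}}\Big\{\int_t^{t+\Delta t} r_i\,d\tau+V_i^\ast(\chi_i(t+\Delta t))\Big\}.
\]
Here \(r_i\) denotes the integrand of \eqref{eq:graphical_game_cost}. The min is taken over \(u_{ci}\in\mathbb U_i\), consistently with admissibility. This uses the saddle structure of Lemma~\ref{lem:nash_saddle_condition}: existence of the saddle lets min and max be interchanged, so the upper and lower values coincide.

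\textbf{Step 2: differentiate.} Subtract \(V_i^\ast(\chi_i(t))\) and divide by \(\Delta t\). Since \(V_i^\ast\in C^1\), the chain rule along \eqref{eq:hji_local_dynamics_game} gives
\[
\frac{V_i^\ast(\chi_i(t+\Delta t))-V_i^\ast(\chi_i(t))}{\Delta t}\to\nabla V_i^{\ast\top}\dot\chi_i .
\]
Also, \(\frac1{\Delta t}\int_t^{t+\Delta t} r_i\,d\tau\to r_i(t)\) for inputs continuous at \(t\). Letting \(\Delta t\to0\) yields \(0=\min\max\bar{\mathcal H}_i\), where \(\bar{\mathcal H}_i\) is the Hamiltonian defined above, evaluated at constant input values.

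\textbf{Step 3: make the limit rigorous.} Argue the two inequalities separately.
\begin{itemize}
\item[(\(\le\))] Fix a constant secure input \(u\in\mathbb U_i\) on \([t,t+\Delta t]\). Let the adversary play near-optimally, then pass to the limit. This shows \(0\le\max\bar{\mathcal H}_i(u,\cdot)\) for every \(u\), hence \(0\le\min\max\bar{\mathcal H}_i\).
\item[(\(\ge\))] Fix constant adversarial values. Use the optimal secure policy and obtain the reverse inequality.
\end{itemize}
The limits are uniform because of two facts. First, by Assumption~\ref{ass:regularity} the dynamics are locally Lipschitz on the compact region \(\Omega_i\), so \(\chi_i\) stays in a compact set over small intervals. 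Second, \(\varpi_i,\varpi_{-i}\) are bounded by \(\bar\omega_i,\bar u_{ai}\), so \(\dot\chi_i\) is bounded and \(\|\chi_i(t+\Delta t)-\chi_i(t)\|=O(\Delta t)\).

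\textbf{Expected obstacle.} The hardest step is justifying the interchange of min and max and the existence of the max over \(\varpi\). I would handle it by completing the square in \(\varpi_i\) and \(\varpi_j\). The Hamiltonian is strictly concave in \(\varpi_i,\varpi_j\) because \(\gamma_i^2\mathbf T_{ii}>0\) and \(\gamma_i^2\mathbf T_{ij}>0\), and these terms are separable from \(u_{ci}\). The maximum is therefore attained uniquely at
\[
\varpi_i=\tfrac{1}{2\gamma_i^2}\mathbf T_{ii}^{-1}\mathbf B_i^{\chi\top}\nabla V_i^\ast,\qquad
\varpi_j=\tfrac{1}{2\gamma_i^2}\mathbf T_{ij}^{-1}\mathbf B_{ij}^{\chi\top}\nabla V_i^\ast .
\]
Because \(\bar{\mathcal H}_i\) is additively separable in \(u_{ci}\) and \((\varpi_i,\varpi_{-i})\), minimizing over \(u_{ci}\in\mathbb U_i\) commutes with this maximization (Isaacs' condition). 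The resulting min--max equation is exactly \eqref{eq:nominal_hji_equation}.
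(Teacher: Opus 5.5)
Your proposal is the standard dynamic-programming derivation that the paper takes for granted. The paper gives no separate proof of Proposition~\ref{prop:nominal_hji}; it moves directly, as in your last paragraph, to the stationarity conditions of \(\bar{\mathcal H}_i\) that produce \(u_{ci}^\ast\), \(\varpi_i^\ast\), and \(\varpi_j^\ast\), so your route is essentially the intended one.

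Two points should be tightened, though neither changes the argument. First, the bounds \(\bar\omega_i,\bar u_{ai}\) restrict the actual exogenous signals, not the unconstrained maximizing player whose maximizer you compute. Uniformity of the limit must therefore come from the strictly concave penalty \(-\gamma_i^2\varpi^\top\mathbf T\varpi\), which makes large adversarial inputs unprofitable over short windows. Second, Lemma~\ref{lem:nash_saddle_condition} is only a characterization, not an existence result, so you cannot invoke a saddle or an optimal secure policy. Instead, obtain the \((\ge)\) direction by fixing a constant \(\varpi\) against arbitrary \(u_{ci}\in\mathbb U_i\), which gives \(0\ge\max_{\varpi}\min_{u_{ci}}\bar{\mathcal H}_i\), and then close with the separability (Isaacs) identity you already noted.
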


\begin{theorem}[Normalized predefined-time comparison~\cite{VuDoan2026PredefinedTimeResilientIRL}]
\label{thm:normalized_predefined_time}
Let \(V:\mathbb R^n\to\mathbb R_{\geq0}\) be positive definite, radially unbounded, and locally absolutely continuous along the trajectories of a forward-complete system. Let \(\alpha,\beta,p,q,r,T_s>0\) satisfy \(0<p<q\) and \(pr<1<qr\), and define \(\gamma_{p,q,r}:=\dfrac{\Gamma_{\!E}\!\left(\frac{1-pr}{q-p}\right)\Gamma_{\!E}\!\left(\frac{qr-1}{q-p}\right)}{\alpha^r\Gamma_{\!E}(r)(q-p)}\left(\dfrac{\alpha}{\beta}\right)^{\frac{1-pr}{q-p}}\), where \(\Gamma_{\!E}(z):=\int_0^\infty t^{z-1}e^{-t}dt\), \(z>0\), is the Euler gamma function. If \(\dot V(\bx(t))\leq-\dfrac{\gamma_{p,q,r}}{T_s}\left[\alpha V^p(\bx(t))+\beta V^q(\bx(t))\right]^r\) for almost all \(t\) such that \(V(\bx(t))>0\), then the origin is globally predefined-time stable and \(T(\bx_0)\leq T_s\) for all \(\bx_0\in\mathbb R^n\).
\end{theorem}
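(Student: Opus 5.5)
The approach is the standard comparison argument: reduce the differential inequality to a scalar ODE and compute the worst-case settling time exactly through a Beta-function integral. Fix $\bx_0\neq0$ and let $v(t):=V(\bx(t))$. Since $v$ is locally absolutely continuous and $\dot v<0$ almost everywhere on $\{v>0\}$, the function $v$ is nonincreasing as long as it is positive. Define the first hitting time $T(\bx_0):=\inf\{t\ge0:v(t)=0\}$. On $[0,T(\bx_0))$ we have $v>0$, and the map $\Phi(s):=\int_0^{s}\frac{d\sigma}{[\alpha\sigma^p+\beta\sigma^q]^r}$ is well defined. Finiteness near $\sigma=0$ follows because the integrand behaves like $\sigma^{-pr}$ with $pr<1$.

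The function $\Phi(v(t))$ is absolutely continuous on compact subintervals of $[0,T(\bx_0))$, being the composition of a locally Lipschitz function on $(0,\infty)$ with an AC function bounded away from zero there. By the chain rule,
\[
\frac{d}{dt}\Phi(v(t))=\frac{\dot v(t)}{[\alpha v^p+\beta v^q]^r}\le-\frac{\gamma_{p,q,r}}{T_s}
\]
almost everywhere. Integrating gives $\Phi(v(t))\le\Phi(v(0))-\gamma_{p,q,r}t/T_s$. Since $\Phi\ge0$, this forces $T(\bx_0)\le T_s\,\Phi(v(0))/\gamma_{p,q,r}\le T_s\,\Phi(\infty)/\gamma_{p,q,r}$. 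Convergence at $s\to\infty$ holds because the integrand behaves like $\sigma^{-qr}$ with $qr>1$.

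The key step, and the main computational obstacle, is the identity $\Phi(\infty)=\gamma_{p,q,r}$. I would write $[\alpha\sigma^p+\beta\sigma^q]^r=\alpha^r\sigma^{pr}(1+\tfrac{\beta}{\alpha}\sigma^{q-p})^r$ and substitute $w=\tfrac{\beta}{\alpha}\sigma^{q-p}$. Then $\sigma=(\alpha w/\beta)^{1/(q-p)}$ and $d\sigma=\frac{1}{q-p}(\alpha/\beta)^{1/(q-p)}w^{\frac{1}{q-p}-1}dw$. The integral becomes
\[
\frac{1}{\alpha^r(q-p)}\left(\frac{\alpha}{\beta}\right)^{\frac{1-pr}{q-p}}\int_0^\infty\frac{w^{a-1}}{(1+w)^{r}}\,dw,\qquad a:=\frac{1-pr}{q-p}.
\]
The remaining integral is the Beta function $B(a,r-a)=\Gamma_{\!E}(a)\Gamma_{\!E}(r-a)/\Gamma_{\!E}(r)$, and $r-a=\frac{qr-1}{q-p}$. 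Both arguments are positive precisely because $pr<1<qr$, and together these reproduce $\gamma_{p,q,r}$. The exponent bookkeeping in this substitution is where I would be most careful.

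Finally, I would conclude stability. Once $v$ reaches zero it stays at zero: any excursion to $v>0$ would require $v$ to increase on a set where $\dot v<0$ almost everywhere, which contradicts absolute continuity. Positive definiteness of $V$ then gives $\bx(t)=0$ for $t\ge T(\bx_0)$. Lyapunov stability follows from monotonicity of $v$ together with positive definiteness and radial unboundedness, since sublevel sets are compact and contain the origin; forward completeness ensures the solution exists up to the hitting time. Combining these with $T(\bx_0)\le T_s$ for every $\bx_0$ yields global predefined-time stability.
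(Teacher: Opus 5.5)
Your proposal is correct and follows the argument the paper relies on. The theorem itself is only imported by citation, but the proof of Lemma~\ref{lem:Gamma_comparison} uses exactly your separation-of-variables bound $T(\bx_0)\le\int_0^{V(\bx_0)}d\sigma/[\cdots]$, and your substitution $w=(\beta/\alpha)\sigma^{q-p}$ correctly evaluates $\int_0^\infty[\alpha\sigma^p+\beta\sigma^q]^{-r}d\sigma$ as a Beta integral with arguments $(1-pr)/(q-p)$ and $(qr-1)/(q-p)$, which reproduces $\gamma_{p,q,r}$. One point to state explicitly: your Lyapunov-stability step uses continuity of $V$, so that $V(\bx)\to0$ as $\bx\to0$ and $V$ is bounded away from zero on spheres; this is the standard convention for positive definite Lyapunov functions, but it is not literally among the listed hypotheses.
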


\begin{lemma}[Exact two-power settling-time integral]
\label{lem:Gamma_comparison}
Let \(Y:[t_0,\infty)\rightarrow\mathbb R_{\geq0}\) be absolutely continuous and satisfy \(\dot Y\le-aY^{\frac{\gamma_1+1}{2}}-bY^{\frac{\gamma_2+1}{2}}\) almost everywhere, where \(a,b>0\) and \(0<\gamma_1<1<\gamma_2\). Then \(Y\) reaches zero in a time satisfying \(T(Y(t_0))\le\displaystyle\int_{0}^{Y(t_0)}\frac{ds}{a s^{\frac{\gamma_1+1}{2}}+b s^{\frac{\gamma_2+1}{2}}}<\dfrac{2\Gamma_{\!E}\!\left(\frac{\gamma_2-1}{\gamma_2-\gamma_1}\right)\Gamma_{\!E}\!\left(\frac{1-\gamma_1}{\gamma_2-\gamma_1}\right)}{(\gamma_2-\gamma_1)a^{\frac{\gamma_2-1}{\gamma_2-\gamma_1}}b^{\frac{1-\gamma_1}{\gamma_2-\gamma_1}}}\), where the last bound is independent of \(Y(t_0)\). Moreover, this bound is the \(r=1\) two-power specialization of the normalized predefined-time comparison in Theorem~\ref{thm:normalized_predefined_time}.
\end{lemma}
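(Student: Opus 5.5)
The plan is to use the standard comparison-plus-separation-of-variables argument, then evaluate the resulting integral with an Euler Beta function.

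\emph{Step 1: reaching time as an integral.} Set \(\phi(s):=a s^{\frac{\gamma_1+1}{2}}+b s^{\frac{\gamma_2+1}{2}}\), which is positive for \(s>0\). While \(Y(t)>0\), the differential inequality gives \(\dot Y/\phi(Y)\le-1\) almost everywhere. Define \(\Phi(y):=\int_0^{y}ds/\phi(s)\). Near zero the integrand behaves like \(s^{-\frac{\gamma_1+1}{2}}\), and since \(\frac{\gamma_1+1}{2}<1\) this is integrable, so \(\Phi\) is finite, absolutely continuous and increasing. The chain rule for absolutely continuous functions then gives
\[
\tfrac{d}{dt}\Phi(Y(t))\le-1
\]
almost everywhere on \(\{Y>0\}\). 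Integrating yields \(\Phi(Y(t))\le\Phi(Y(t_0))-(t-t_0)\), so \(Y\) must vanish no later than \(t_0+\Phi(Y(t_0))\). Since \(\dot Y\le0\) and \(Y\ge0\), once \(Y\) reaches zero it stays there. This proves the first inequality.

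\emph{Step 2: uniform bound.} Because \(\phi>0\), we have \(\Phi(Y(t_0))<\int_0^\infty ds/\phi(s)\) strictly. To evaluate the improper integral:
\begin{itemize}
\item Factor \(\phi(s)=a s^{\frac{\gamma_1+1}{2}}\bigl(1+\tfrac{b}{a}s^{\frac{\gamma_2-\gamma_1}{2}}\bigr)\).
\item Substitute \(u=\tfrac{b}{a}s^{\frac{\gamma_2-\gamma_1}{2}}\).
\item This reduces the integral to the Beta form \(\int_0^\infty u^{c-1}(1+u)^{-1}du=\Gamma_{\!E}(c)\Gamma_{\!E}(1-c)\).
\end{itemize}
Here the exponent is \(c=\frac{1-\gamma_1}{\gamma_2-\gamma_1}\in(0,1)\), which is exactly where the hypothesis \(0<\gamma_1<1<\gamma_2\) enters, and then \(1-c=\frac{\gamma_2-1}{\gamma_2-\gamma_1}\). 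Tracking the Jacobian \(\frac{2}{\gamma_2-\gamma_1}\) and the constant factors gives
\[
\int_0^\infty \frac{ds}{\phi(s)}=\frac{2}{(\gamma_2-\gamma_1)\,a}\left(\frac{a}{b}\right)^{c}\Gamma_{\!E}(c)\,\Gamma_{\!E}(1-c).
\]
Writing \(a^{-1}a^{c}=a^{-(1-c)}\) puts this in the stated form, with the expected powers of \(a\) and \(b\). The bound is manifestly independent of \(Y(t_0)\).

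\emph{Step 3: consistency with Theorem~\ref{thm:normalized_predefined_time}.} Take \(r=1\), \(p=\frac{\gamma_1+1}{2}\), \(q=\frac{\gamma_2+1}{2}\), \(\alpha=a\), \(\beta=b\). Then \(q-p=\frac{\gamma_2-\gamma_1}{2}\), \(\frac{1-p}{q-p}=c\) and \(\frac{q-1}{q-p}=1-c\), with \(\Gamma_{\!E}(1)=1\). So \(\gamma_{p,q,1}\) coincides with the integral computed in Step 2. Equivalently, setting \(T_s=\gamma_{p,q,1}\) makes the two hypotheses identical, which gives the specialization claim.

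The main obstacle is the bookkeeping in the substitution of Step 2, namely getting the exponents of \(a\) and \(b\) and the factor \(2/(\gamma_2-\gamma_1)\) right. A secondary point is justifying the chain rule for \(\Phi\circ Y\) at the boundary \(Y=0\), where \(1/\phi\) blows up. I would handle this by working on intervals where \(Y\ge\epsilon>0\) and then letting \(\epsilon\to0\).
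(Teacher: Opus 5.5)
Your proof is correct, but your Step~2 follows a different route from the paper's. For the first inequality the paper uses the same separation of variables as your Step~1, though only formally; your version makes it rigorous through $\Phi$ and the truncation $Y\ge\epsilon$. The paper never evaluates $\int_0^\infty ds/\bigl(as^{\frac{\gamma_1+1}{2}}+bs^{\frac{\gamma_2+1}{2}}\bigr)$. It gets the uniform bound by invoking Theorem~\ref{thm:normalized_predefined_time} with $r=1$, $p=\frac{\gamma_1+1}{2}$, $q=\frac{\gamma_2+1}{2}$, $\alpha=a$, $\beta=b$, and then simplifying the exponents. That single parameter identification delivers both the uniform bound and the ``Moreover'' clause; it is your Step~3, which you use only as a consistency check.

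The paper's route is shorter, but it has two weaknesses. It rests entirely on the cited comparison theorem, which is stated for a Lyapunov function along a forward-complete system rather than for an arbitrary scalar absolutely continuous $Y$. That theorem also only yields $T\le T_s=\gamma_{p,q,1}$, so the strict middle inequality $\int_0^{Y(t_0)}ds/\phi(s)<(\text{Gamma constant})$ in the displayed chain is asserted rather than derived.

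Your substitution $u=\frac{b}{a}s^{\frac{\gamma_2-\gamma_1}{2}}$, which reduces the integral to $\Gamma_{\!E}(c)\Gamma_{\!E}(1-c)$, is self-contained and fits the scalar hypothesis exactly. It shows the Gamma constant equals $\int_0^\infty ds/\phi(s)$, which makes the strict inequality immediate since the tail $\int_{Y(t_0)}^\infty ds/\phi(s)$ is positive. It also shows where $0<\gamma_1<1<\gamma_2$ is needed, namely $c=\frac{1-\gamma_1}{\gamma_2-\gamma_1}\in(0,1)$ for the Beta integral to converge. I checked your constants, the Jacobian $\frac{2}{\gamma_2-\gamma_1}$, and the exponents of $a$ and $b$; they match the stated bound and $\gamma_{p,q,1}$.
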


\begin{proof}
Whenever \(Y>0\), the dissipation inequality gives \(dt\le-dY/(aY^{\frac{\gamma_1+1}{2}}+bY^{\frac{\gamma_2+1}{2}})\), and integration from \(Y(t_0)\) to \(0\) yields \(T(Y(t_0))\le\displaystyle\int_{0}^{Y(t_0)}\frac{ds}{a s^{\frac{\gamma_1+1}{2}}+b s^{\frac{\gamma_2+1}{2}}}\). For the uniform bound, apply Theorem~\ref{thm:normalized_predefined_time} with \(r=1\), \(p=(\gamma_1+1)/2\), \(q=(\gamma_2+1)/2\), \(\alpha=a\), and \(\beta=b\). Since \(q-p=(\gamma_2-\gamma_1)/2\), \((1-p)/(q-p)=(1-\gamma_1)/(\gamma_2-\gamma_1)\), and \((q-1)/(q-p)=(\gamma_2-1)/(\gamma_2-\gamma_1)\), the normalized Gamma factor becomes exactly \(\dfrac{2\Gamma_{\!E}\!\left(\frac{\gamma_2-1}{\gamma_2-\gamma_1}\right)\Gamma_{\!E}\!\left(\frac{1-\gamma_1}{\gamma_2-\gamma_1}\right)}{(\gamma_2-\gamma_1)a^{\frac{\gamma_2-1}{\gamma_2-\gamma_1}}b^{\frac{1-\gamma_1}{\gamma_2-\gamma_1}}}\). Hence the settling-time bound is finite and independent of \(Y(t_0)\).
\end{proof}

\begin{remark}
\label{rem:two_power_quantitative}
Lemma~\ref{lem:Gamma_comparison} addresses the converse design question associated with Theorem~\ref{thm:normalized_predefined_time}: instead of asking whether a prescribed differential inequality guarantees a selected settling time, it evaluates the maximum uniform convergence time generated by the coefficients of a two-power differential equation and therefore permits the coefficients to be recovered from a desired time budget. Let \(A_\gamma:=(\gamma_2-1)/(\gamma_2-\gamma_1)\) and \(B_\gamma:=(1-\gamma_1)/(\gamma_2-\gamma_1)\), so that \(A_\gamma,B_\gamma\in(0,1)\) and \(A_\gamma+B_\gamma=1\). Lemma~\ref{lem:Gamma_comparison} can then be written as \(T(Y(t_0))<C_\gamma a^{-A_\gamma}b^{-B_\gamma}\), where \(C_\gamma:=2\Gamma_{\!E}(A_\gamma)\Gamma_{\!E}(B_\gamma)/(\gamma_2-\gamma_1)\). Hence \(a\) and \(b\) quantify, respectively, the low- and high-power contributions to the uniform settling-time bound, while their balance occurs at \(Y_\times=(a/b)^{2/(\gamma_2-\gamma_1)}\). In particular, the common scaling \((a,b)\mapsto(ca,cb)\) gives \(T\mapsto T/c\). For the critic dynamics developed later, with \(\mu=(\gamma_1+1)/2\), \(\nu=(\gamma_2+1)/2\), \(a=a_0(2\alpha_c)^\mu\), and \(b=b_0(2\alpha_c)^\nu\), one has \(\mu A_\gamma+\nu B_\gamma=1\); consequently, the critic convergence-time bound scales as \(1/\alpha_c\). This identity is the quantitative basis for selecting the critic learning gain directly from the assigned horizon rather than selecting the gain first and computing the settling-time bound afterward.
\end{remark}


\begin{lemma}[Local bounds of the ideal HJI value function]
\label{lem:local_value_bounds}
Suppose that \(V_i^*\) is positive definite on \(\Omega_i\), satisfies \(V_i^*(0)=0\), and \(V_i^*\in C^1(\Omega_i\setminus\{0\})\). Then there exist a function \(\underline{\alpha}_i\in\mathcal K_{\infty}\) and constants \(\bar c_i>0\) and \(c_{\nabla i}>0\) such that \(\underline{\alpha}_i(\|\chi_i\|)\le V_i^*(\chi_i)\le\bar c_i\|\chi_i\|^2\) and \(\|\nabla_{\chi_i}V_i^*(\chi_i)\|\le c_{\nabla i}\|\chi_i\|\) for all \(\chi_i\in\Omega_i^{r}\).
\end{lemma}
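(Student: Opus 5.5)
The argument is a compactness argument on the punctured set \(\Omega_i^{r}\) from Remark~\ref{remark3}: it is compact and bounded away from the origin, \(\inf_{\chi_i\in\Omega_i^{r}}\|\chi_i\|\ge r_{i,-}>0\). On this set \(V_i^*\) and \(\nabla_{\chi_i}V_i^*\) are continuous because \(V_i^*\in C^1(\Omega_i\setminus\{0\})\). The functions \(\|\chi_i\|^{-2}\) and \(\|\chi_i\|^{-1}\) are also continuous there and bounded by \(r_{i,-}^{-2}\) and \(r_{i,-}^{-1}\). Hence the ratios \(V_i^*(\chi_i)/\|\chi_i\|^2\) and \(\|\nabla_{\chi_i}V_i^*(\chi_i)\|/\|\chi_i\|\) are continuous on a compact set. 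By Weierstrass they attain finite maxima \(M_V\) and \(M_\nabla\). I would set \(\bar c_i:=\max\{M_V,1\}\) and \(c_{\nabla i}:=\max\{M_\nabla,1\}\), so that both constants are strictly positive even in degenerate cases. This immediately gives \(V_i^*\le\bar c_i\|\chi_i\|^2\) and \(\|\nabla_{\chi_i} V_i^*\|\le c_{\nabla i}\|\chi_i\|\) on \(\Omega_i^{r}\).

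For the lower bound I would use the standard construction for positive definite functions on a compact set. Let \(R_i:=\max_{\chi_i\in\Omega_i}\|\chi_i\|\). For \(s\in[0,R_i]\) define
\[
\psi_i(s):=\inf\{V_i^*(\chi_i):\chi_i\in\Omega_i,\ s\le\|\chi_i\|\le R_i\}.
\]
This function is nondecreasing in \(s\) and \(\psi_i(0)=0\), and for \(s>0\) it is strictly positive, since the infimum is taken over a compact set not containing the origin. One caveat is that \(V_i^*\) is only assumed continuous away from the origin. On the annulus, however, it is continuous, so the infimum is attained and \(\psi_i(s)>0\) holds for every \(s>0\). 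This is all that is needed on \(\Omega_i^{r}\).

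The remaining step is to turn \(\psi_i\) into a class-\(\mathcal K_\infty\) minorant, which I expect to be the main technical obstacle. The difficulty is that \(\psi_i\) need not be continuous or strictly increasing and is only defined on \([0,R_i]\). I would first take a continuous strictly increasing lower bound, for instance
\[
\tilde\psi_i(s):=\frac{s}{R_i}\cdot\frac{1}{s}\int_0^{s}\psi_i(\tau)\,d\tau
\]
on \((0,R_i]\), or use the classical lemma that any positive definite nondecreasing function admits a \(\mathcal K\) minorant. I would then extend it beyond \(R_i\) linearly, e.g. \(\underline\alpha_i(s):=\tilde\psi_i(R_i)+(s-R_i)\) for \(s>R_i\), which makes it radially unbounded. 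Since every \(\chi_i\in\Omega_i^{r}\) satisfies \(\|\chi_i\|\le R_i\), we get \(\underline\alpha_i(\|\chi_i\|)\le\psi_i(\|\chi_i\|)\le V_i^*(\chi_i)\). Restricting to \(\Omega_i^{r}\) is what makes the quadratic upper bounds valid without any smoothness or decay assumptions at the origin, which is exactly the point of Remark~\ref{remark3}.
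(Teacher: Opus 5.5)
Your proof is correct. The upper bounds follow the paper's argument exactly: Weierstrass applied to the continuous ratios \(V_i^*/\|\chi_i\|^2\) and \(\|\nabla_{\chi_i}V_i^*\|/\|\chi_i\|\) on the compact, origin-free set \(\Omega_i^r\); the \(\max\{\cdot,1\}\) safeguard is harmless.

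For the lower bound you take a different and heavier route. The paper reuses the same compactness argument. On \(\Omega_i^r\) the ratio \(V_i^*/\|\chi_i\|^2\) is continuous and strictly positive, so \(\underline c_{V_i}:=\min_{\Omega_i^r}V_i^*/\|\chi_i\|^2>0\), and \(\underline\alpha_i(s):=\underline c_{V_i}s^2\) is already in \(\mathcal K_\infty\). The ``main technical obstacle'' you anticipate therefore never arises on \(\Omega_i^r\).

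Your construction is nevertheless valid. Your averaged function simplifies to \(\tilde\psi_i(s)=R_i^{-1}\int_0^s\psi_i(\tau)\,d\tau\). It is continuous because \(\psi_i\) is monotone and bounded by \(\psi_i(R_i)\). It is strictly increasing because \(\psi_i>0\) on \((0,R_i]\). It lies below \(\psi_i(s)\) because \(\psi_i\) is nondecreasing and \(s\le R_i\).

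Your route buys something the paper's does not: the bound \(\underline\alpha_i(\|\chi_i\|)\le V_i^*(\chi_i)\) holds on all of \(\Omega_i\), including the terminal ball \(\|\chi_i\|<r_{i,-}\), with no quadratic-growth assumption at the origin. What it gives up is explicitness. The paper's quadratic minorant has the closed-form inverse \(\underline\alpha_i^{-1}(v)=\sqrt{v/\underline c_{V_i}}\), which makes the radius \(r_{\chi_i}=\max\{r_{i,-},\underline\alpha_i^{-1}(\bar V_i)\}\) in Theorem~\ref{thm:closed_loop_predefined_multi} directly computable. It also pairs symmetrically with the quadratic upper bound.
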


\begin{proof}
From Remark~\ref{remark3}, Since \(\Omega_i\) is compact and \(\{\chi_i:\|\chi_i\|\ge r_{i,-}\}\) is closed, \(\Omega_i^{r}\) is compact and \(0\notin\Omega_i^{r}\). Hence, the functions \(V_i^*(\chi_i)/\|\chi_i\|^2\) and \(\|\nabla_{\chi_i}V_i^*(\chi_i)\|/\|\chi_i\|\) are continuous on \(\Omega_i^{r}\). By the Weierstrass theorem, the constants \(\bar c_i:=\max_{\chi_i\in\Omega_i^{r}}V_i^*(\chi_i)/\|\chi_i\|^2<\infty\) and \(c_{\nabla i}:=\max_{\chi_i\in\Omega_i^{r}}\|\nabla_{\chi_i}V_i^*(\chi_i)\|/\|\chi_i\|<\infty\) exist, which directly give \(V_i^*(\chi_i)\le\bar c_i\|\chi_i\|^2\) and \(\|\nabla_{\chi_i}V_i^*(\chi_i)\|\le c_{\nabla i}\|\chi_i\|\). Moreover, positive definiteness of \(V_i^*\) and compactness of \(\Omega_i^{r}\) imply \(\underline c_{V_i}:=\min_{\chi_i\in\Omega_i^{r}}V_i^*(\chi_i)/\|\chi_i\|^2>0\). Choosing \(\underline{\alpha}_i(s):=\underline c_{V_i}s^2\in\mathcal K_{\infty}\) yields \(\underline{\alpha}_i(\|\chi_i\|)\le V_i^*(\chi_i)\), completing the proof.
\end{proof}

\begin{remark}[From a comparison condition to a constructive design]
\label{rem:comparison_to_design}
Theorem~\ref{thm:normalized_predefined_time} only guarantees an assigned
deadline once the required two-power Lyapunov inequality is available.
Existing fixed-time formation results such as
\cite{Gong2025SecureFormationFDI} impose this decay directly on the unknown
ideal value function, while
\cite{Kokotakis2025PredefinedTimeRL} does not construct it from a
graph-coupled HJI. Here, no such condition is assumed on \(V_i^\ast\).
Instead, \(Q_{ii}\) is designed from the measurable coordination error
\(\chi_i\) so that the HJI generates the required two-power terms, which
are recovered through integral learning and moving-window analysis to
synthesize the critic and closed-loop gains for the assigned time budgets.
\end{remark}



\begin{remark}[Cost-induced predefined-time structure]
\label{rem:cost_induced_predefined_time}
Motivated by the predefined-time HJB construction in \cite{Kokotakis2025PredefinedTimeRL}, the convergence-rate terms are embedded directly into the measurable state penalty of the local graphical game. Specifically,
\begin{equation}
Q_{ii}(\chi_i)
=
\chi_i^\top\mathbf Q_i\chi_i
+
\lambda_x\kappa_{i1}
\|\chi_i\|^{2\gamma_1}
+
\lambda_x\kappa_{i2}
\|\chi_i\|^{2\nu},
\label{eq:predefined_time_Qii_design}
\end{equation}
where \(\mathbf Q_i=\mathbf Q_i^\top>0\), \(\kappa_{i1},\kappa_{i2}>0\), and \(\lambda_x>0\). The scalar \(\lambda_x\) is retained explicitly because it will later be selected by inverse deadline design rather than fixed independently of the desired convergence time. This construction depends only on the measurable coordination error \(\chi_i\) and therefore does not introduce the unknown HJI value into the running cost. By Lemma~\ref{lem:local_value_bounds}, \(V_i^*(\chi_i)\le\bar c_i\|\chi_i\|^2\) for all \(\chi_i\in\Omega_i^r\). Hence,
\begin{equation}
\|\chi_i\|^{2\gamma_1}
\ge
\bar c_i^{-\gamma_1}
\bigl(V_i^*(\chi_i)\bigr)^{\gamma_1},
\qquad
\|\chi_i\|^{2\nu}
\ge
\bar c_i^{-\nu}
\bigl(V_i^*(\chi_i)\bigr)^\nu.
\label{eq:cost_to_value_predefined_terms}
\end{equation}
Define \(c_{i1}:=\kappa_{i1}\bar c_i^{-\gamma_1}\) and \(c_{i2}:=\kappa_{i2}\bar c_i^{-\nu}\). Therefore, whenever the HJI equation contributes \(-Q_{ii}(\chi_i)\) to \(\dot V_i^*\), it simultaneously produces the dissipative terms \(-\lambda_xc_{i1}(V_i^*)^{\gamma_1}-\lambda_xc_{i2}(V_i^*)^\nu\). The predefined-time structure is thus generated by the graphical-game cost instead of being postulated as an external property of the unknown value function. The exponent \(\gamma_1\) is subsequently lifted to \(\mu=(\gamma_1+1)/2\) by the derivative-limited moving-window inequality, yielding the same pair \((\mu,\nu)\) that appears in the critic dynamics and allowing Lemma~\ref{lem:Gamma_comparison} to be used for explicit deadline synthesis. To determine the specific values of \(\bar c_i\), please refer to~\cite{VuDoan2026PredefinedTimeResilientIRL}.
\end{remark}

For the secure input, use the saturation-compatible utility
\begin{equation}
\mathcal U_i(u_{ci})
=
2
\sum_{\ell=1}^{m}
\bar u_{i\ell}r_{i\ell}
\int_{0}^{u_{ci,\ell}}
\tanh^{-1}
\left(
\frac{s}{\bar u_{i\ell}}
\right)ds ,
\label{eq:sat_utility_hji}
\end{equation}
where \(\mathbf R_i=\operatorname{diag}(r_{i1},\ldots,r_{im})>0\). Hence,
\begin{equation}
\frac{\partial\mathcal U_i}{\partial u_{ci}}
=
2\bar{\mathbf U}_i\mathbf R_i
\tanh^{-1}
\left(
\bar{\mathbf U}_i^{-1}u_{ci}
\right).
\label{eq:utility_gradient_hji}
\end{equation}
The neighbor utility \(\mathcal U_{ij}(u_{cj})\) is defined similarly with a positive matrix \(\mathbf R_{ij}\).
The saddle-point policies are obtained directly from the stationarity conditions of the original Hamiltonian. From
\begin{equation}
\frac{\partial\bar{\mathcal H}_i}{\partial u_{ci}}
=
\mathbf G_{ii}^{\chi\,\top}(x_i)
\nabla V_i
+
\frac{\partial\mathcal U_i}{\partial u_{ci}}
=
0,
\label{eq:stationarity_control}
\end{equation}

One obtains
\(u_{ci}^\ast:=
-\bar{\mathbf U}_i
\tanh\!\left(
\frac{1}{2}
\mathbf R_i^{-1}
\bar{\mathbf U}_i^{-1}
\mathbf G_{ii}^{\chi\,\top}(x_i)
\nabla V_i^\ast
\right)\).
Thus, \(u_{ci}^\ast\in\mathbb U_i\) by construction. In particular,
\(|u_{ci,\ell}^\ast|<\bar u_{i\ell}\),
\(\ell=1,\ldots,m\), for every finite value gradient.

Similarly, the stationarity condition
\(\partial\bar{\mathcal H}_i/\partial\varpi_i
=
\mathbf B_i^{\chi\,\top}(x_i)\nabla V_i
-
2\gamma_i^2\mathbf T_{ii}\varpi_i
=
0\)
gives
\(\varpi_i^\ast:=
\frac{1}{2\gamma_i^2}
\mathbf T_{ii}^{-1}
\mathbf B_i^{\chi\,\top}(x_i)
\nabla V_i^\ast\).
For each \(j\in\mathcal N_i\), the corresponding neighboring adversarial policy is
\(\varpi_j^\ast:=
\frac{1}{2\gamma_i^2}
\mathbf T_{ij}^{-1}
\mathbf B_{ij}^{\chi\,\top}(x_j)
\nabla V_i^\ast\).
The worst-case disturbance and same-channel FDI signal are selected as
\(\omega_i^\ast:=\mathbf S_{\omega i}\varpi_i^\ast\) and
\(u_{ai}^\ast:=\mathbf S_{ui}\varpi_i^\ast\),
where \(\mathbf S_{\omega i}\) and \(\mathbf S_{ui}\) are constant selection matrices.

Substituting the above secure-control and adversarial saddle-point policies into the local HJI equation gives
\begin{equation}
\begin{aligned}
0
={}&
Q_{ii}(\chi_i)
+
\mathcal U_i(u_{ci}^\ast)
+
\sum_{j\in\mathcal N_i}
\mathcal U_{ij}(u_{cj})
\\
&+
\nabla V_i^{\ast\top}
\Big[
F_i^\chi(\xi_i)
+
\mathbf G_{ii}^\chi(x_i)u_{ci}^\ast
+
\mathbf G_{i\mathcal N}^\chi(x_{\mathcal N_i})
u_{c\mathcal N_i}
\Big]
\\
&+
\frac{1}{4\gamma_i^2}
\nabla V_i^{\ast\top}
\mathbf B_i^\chi(x_i)
\mathbf T_{ii}^{-1}
\mathbf B_i^{\chi\,\top}(x_i)
\nabla V_i^\ast
\\
&+
\frac{1}{4\gamma_i^2}
\sum_{j\in\mathcal N_i}
\nabla V_i^{\ast\top}
\mathbf B_{ij}^\chi(x_j)
\mathbf T_{ij}^{-1}
\mathbf B_{ij}^{\chi\,\top}(x_j)
\nabla V_i^\ast .
\end{aligned}
\label{eq:closed_cost_induced_hji}
\end{equation}

Along the ideal saddle trajectory, the HJI identity equivalently gives
\begin{equation}
\begin{aligned}
\dot V_i^\ast
={}&
-Q_{ii}(\chi_i)
-\mathcal U_i(u_{ci}^\ast)
-\sum_{j\in\mathcal N_i}
\mathcal U_{ij}(u_{cj})
\\
&+
\gamma_i^2
(\varpi_i^\ast)^\top
\mathbf T_{ii}\varpi_i^\ast
+
\gamma_i^2
\sum_{j\in\mathcal N_i}
(\varpi_j^\ast)^\top
\mathbf T_{ij}\varpi_j^\ast .
\end{aligned}
\label{eq:value_derivative_predefined_hji}
\end{equation}
Using \(\mathcal U_i(u_{ci}^\ast)\ge0\), \(\mathcal U_{ij}(u_{cj})\ge0\), \eqref{eq:predefined_time_Qii_design}, and \eqref{eq:cost_to_value_predefined_terms}, the state-cost contribution satisfies
\[
-Q_{ii}(\chi_i)
\le
-\chi_i^\top\mathbf Q_i\chi_i
-\lambda_xc_{i1}(V_i^\ast)^{\gamma_1}
-\lambda_xc_{i2}(V_i^\ast)^\nu.
\]
. The quadratic term \(-\chi_i^\top\mathbf Q_i\chi_i\) is retained because it also provides an additional local dissipation margin, whereas the mixed powers produce the low- and high-order terms used by the moving-window comparison.


The cost-induced HJI equation \eqref{eq:closed_cost_induced_hji} is a nonlinear first-order partial differential equation in \(V_i^\ast\). In general, solving this PDE analytically is intractable, especially because the drift \(F_i^\chi(\xi_i)\) is unknown and the local state \(\chi_i\) contains graph-coupled information. Therefore, instead of seeking an exact closed-form HJI solution, the value function is approximated by a critic neural network and learned through an integral Bellman--Isaacs identity over finite data windows, as developed in the next section.

\subsection{Integral Bellman--Isaacs Residual}

The drift \(F_i^\chi(\xi_i)\) is unknown, so \eqref{eq:closed_cost_induced_hji} cannot be solved directly. Integrating \eqref{eq:value_derivative_predefined_hji} over \([t-T_i,t]\) yields \(0=V_i^\ast(\chi_i(t))-V_i^\ast(\chi_i(t-T_i))+\int_{t-T_i}^{t}\big[Q_{ii}(\chi_i)+\mathcal U_i(u_{ci}^\ast)+\sum_{j\in\mathcal N_i}\mathcal U_{ij}(u_{cj})-\gamma_i^2(\varpi_i^\ast)^\top\mathbf T_{ii}\varpi_i^\ast-\gamma_i^2\sum_{j\in\mathcal N_i}(\varpi_j^\ast)^\top\mathbf T_{ij}\varpi_j^\ast\big]d\tau\). The value function is still an infinite-horizon HJI value, whereas the interval \([t-T_i,t]\) is used only to generate an integral learning identity.

\begin{assumption}[RBF approximation of the HJI solution]
\label{ass:value_regular}
On the compact trajectory-relevant set \(\Omega_i^r:=\{\chi_i\in\Omega_i:\|\chi_i\|\ge r_{i,-}\}\), the HJI solution \(V_i^\ast\) is positive definite and continuously differentiable. Moreover, there exist an ideal constant weight vector \(W_i^\ast\in\mathbb R^{N_i}\), a continuously differentiable RBF basis vector \(\phi_i(\chi_i)=\operatorname{col}\big(\phi_{i1}(\chi_i),\ldots,\phi_{iN_i}(\chi_i)\big)\), and an approximation error \(\varepsilon_i(\chi_i)\) such that \(V_i^\ast(\chi_i)=W_i^{\ast\top}\phi_i(\chi_i)+\varepsilon_i(\chi_i)\), \(\chi_i\in\Omega_i^r\). Each RBF basis can be chosen as \(\phi_{ik}(\chi_i)=\exp\big(-\|\chi_i-\zeta_{ik}\|^2/\sigma_{ik}^2\big)\), \(k=1,\ldots,N_i\), where \(\zeta_{ik}\) and \(\sigma_{ik}>0\) denote, respectively, the center and width of the \(k\)-th basis function. Furthermore, the approximation error and its gradient are bounded on \(\Omega_i^r\) by the Weierstrass theorem, namely \(|\varepsilon_i(\chi_i)|\le\bar\varepsilon_i\) and \(\|\nabla\varepsilon_i(\chi_i)\|\le\bar\varepsilon_{di}\), \(\chi_i\in\Omega_i^r\).
\end{assumption}

By Assumption~\ref{ass:value_regular}, the critic approximation is chosen as \(\hat V_i(\chi_i)=\hat W_i^\top\phi_i(\chi_i)\), with \(\nabla\hat V_i(\chi_i)=\nabla\phi_i^\top(\chi_i)\hat W_i\). The approximate policies are obtained by replacing the unavailable gradient \(\nabla V_i^\ast\) with \(\nabla\hat V_i\), yielding \(\hat u_{ci}=-\bar{\mathbf U}_i\tanh\big(\frac{1}{2}\mathbf R_i^{-1}\bar{\mathbf U}_i^{-1}\mathbf G_{ii}^{\chi\,\top}(x_i)\nabla\phi_i^\top\hat W_i\big)\), \(\hat\varpi_i=\frac{1}{2\gamma_i^2}\mathbf T_{ii}^{-1}\mathbf B_i^{\chi\,\top}(x_i)\nabla\phi_i^\top\hat W_i\), and \(\hat\varpi_j=\frac{1}{2\gamma_i^2}\mathbf T_{ij}^{-1}\mathbf B_{ij}^{\chi\,\top}(x_j)\nabla\phi_i^\top\hat W_i\).

Let \(\Delta\phi_i(t)=\phi_i(\chi_i(t))-\phi_i(\chi_i(t-T_i))\), \(T_i>0\). For compact notation, define the approximate integral running term as \(\hat r_i:=Q_{ii}(\chi_i)+\mathcal U_i(\hat u_{ci})+\sum_{j\in\mathcal N_i}\mathcal U_{ij}(\hat u_{cj})-\gamma_i^2\hat\varpi_i^\top\mathbf T_{ii}\hat\varpi_i-\gamma_i^2\sum_{j\in\mathcal N_i}\hat\varpi_j^\top\mathbf T_{ij}\hat\varpi_j\). Likewise, let \(r_i^*(\tau)\) denote the corresponding ideal running term obtained from \(\hat r_i(\tau)\) by replacing the learned policies \(\hat u_{ci}\), \(\hat u_{cj}\), \(\hat\varpi_i\), and \(\hat\varpi_j\) with their ideal saddle-point counterparts \(u_{ci}^*\), \(u_{cj}^*\), \(\varpi_i^*\), and \(\varpi_j^*\), respectively. Combining the preceding integral HJI identity with this approximate running term, the online Bellman--Isaacs residual is defined as \(\delta_i(t)=\hat W_i^\top\Delta\phi_i(t)+\int_{t-T_i}^{t}\hat r_i(\tau)d\tau\). This residual is the central data equation of the proposed IRL design. If \(\hat W_i=W_i^\ast\), the approximation error is zero, and the policies are exactly the saddle-point policies, then \(\delta_i(t)=0\) follows directly from the integral HJI identity. Hence, driving \(\delta_i(t)\) to zero is equivalent to enforcing the cost-induced HJI equation along the measured trajectory without requiring explicit knowledge of \(F_i^\chi(\xi_i)\).


To improve learning beyond the current data window, a finite experience stack is stored. Let \(\mathcal D_i:=\{(\Delta\phi_{ik},\rho_{ik})\}_{k=1}^{M_i}\), where \(\Delta\phi_{ik}:=\phi_i(\chi_i(t_k))-\phi_i(\chi_i(t_k-T_i))\) and \(\rho_{ik}:=\int_{t_k-T_i}^{t_k}\hat r_i(\tau)d\tau\). The corresponding recorded Bellman--Isaacs residual is \(\delta_{ik}:=\hat W_i^\top\Delta\phi_{ik}+\rho_{ik}\). The use of recorded data follows the experience-replay and concurrent-learning philosophy in adaptive optimal control. It replaces the need for persistent excitation by requiring only that the stored data contain enough independent information. 

For numerical conditioning and to avoid excessively large update increments, define the normalized regressors \(\psi_i:=\dfrac{\Delta\phi_i}{1+\Delta\phi_i^\top\Delta\phi_i}\) and \(\psi_{ik}:=\dfrac{\Delta\phi_{ik}}{1+\Delta\phi_{ik}^\top\Delta\phi_{ik}}\). The normalization does not change the zero of the Bellman--Isaacs residual; it only regularizes the learning direction used in the critic update.
Define \(\mu_i:=1+\Delta\phi_i^\top\Delta\phi_i\) and \(\mu_{ik}:=1+\Delta\phi_{ik}^\top\Delta\phi_{ik}\).

The following results provide the finite-window bounds required in the critic analysis.

\begin{lemma}[Finite-window boundedness]
\label{lem:bounded_learned_signals}
Suppose that \(\chi_i(\tau)\in\Omega_i^r\) for \(\tau\in[t-T_i,t]\) and that \(\hat W_i\) evolves according to \eqref{eq:irl_update_new}. Since \(\mu_i,\mu_{ik}\ge1\), the normalized regressors are bounded and the retained replay stack is finite. Hence, \(\dot{\hat W}_i\) is locally integrable and \(\hat W_i(t)=\hat W_i(t-T_i)+\int_{t-T_i}^{t}\dot{\hat W}_i(\tau)d\tau\) is finite on every finite window. Compactness of \(\Omega_i^r\), bounded \(\phi_i,\nabla\phi_i\), and continuity of the learned saddle policies then imply bounded learned signals and, consequently, a finite constant \(\bar r_{\Delta i}>0\) such that \(|\hat r_i(\tau)-r_i^\ast(\tau)|\le\bar r_{\Delta i}\) on \([t-T_i,t]\).
\end{lemma}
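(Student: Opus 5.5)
The plan is to prove the lemma in three stages: bound the critic update on the window, use that to bound \(\hat W_i\), and then bound the learned policies and the running-term mismatch \(\hat r_i-r_i^\ast\) by continuity on compact sets. The update law \eqref{eq:irl_update_new} is not yet displayed. I will assume only that it has the concurrent-learning form \(\dot{\hat W}_i=-\alpha_c\,\Phi_i(\psi_i\delta_i,\{\psi_{ik}\delta_{ik}\}_{k=1}^{M_i})\), possibly with power-type terms \(\spow{\cdot}{\mu},\spow{\cdot}{\nu}\). Here \(\Phi_i\) is continuous in its arguments and grows at most polynomially in \(\hat W_i\) through the residuals \(\delta_i,\delta_{ik}\).

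\textbf{Regressors and residuals.} Since \(\mu_i,\mu_{ik}\ge1\), the normalized regressors satisfy \(\|\psi_i\|\le\|\Delta\phi_i\|/(1+\|\Delta\phi_i\|^2)\le\tfrac12\), and likewise \(\|\psi_{ik}\|\le\tfrac12\). Because \(\chi_i(\tau)\in\Omega_i^r\) on the window and \(\Omega_i^r\) is compact (Remark~\ref{remark3}), the continuous RBF basis of Assumption~\ref{ass:value_regular} gives \(\bar\phi_i:=\max_{\Omega_i^r}\|\phi_i\|<\infty\) and \(\bar\phi_{di}:=\max_{\Omega_i^r}\|\nabla\phi_i\|<\infty\). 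Hence \(\|\Delta\phi_i\|\le2\bar\phi_i\). The stack \(\mathcal D_i\) has \(M_i\) entries, so all stored \(\Delta\phi_{ik}\) and \(\rho_{ik}\) are fixed finite numbers. The residuals \(\delta_i,\delta_{ik}\) are therefore affine in \(\hat W_i\), plus integrals of \(\hat r_i\).

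\textbf{Bounding \(\hat W_i\) on the window.} The vector field of the update is continuous in \(\hat W_i\) and measurable and locally bounded in \(\tau\), with polynomial growth. This gives a Carathéodory solution, so \(\hat W_i\) is absolutely continuous and \(\dot{\hat W}_i\in L^1_{\mathrm{loc}}\). The integral identity \(\hat W_i(t)=\hat W_i(t-T_i)+\int_{t-T_i}^t\dot{\hat W}_i\,d\tau\) then follows from the fundamental theorem of calculus for absolutely continuous functions. Finiteness of \(\hat W_i\) on the compact window needs a no-finite-escape argument. I would use the quadratic function \(\tfrac12\|\tilde W_i\|^2\) with \(\tilde W_i:=W_i^\ast-\hat W_i\). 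The replay term contributes \(-\psi_{ik}\psi_{ik}^\top\)-type dissipation plus bounded perturbations, which gives a Grönwall-type bound over a window of length \(T_i\). This step is the main obstacle. The term \(\int\hat r_i\) inside \(\delta_i\) depends on \(\hat W_i\) through the policies, so the bound must be closed in a loop. I would try the structural facts first: \(|\hat u_{ci}|<\bar{\mathbf U}_i\) uniformly, and \(\|\hat\varpi_i\|\) grows only linearly in \(\|\hat W_i\|\). Then \(|\hat r_i|\le c_0+c_1\|\hat W_i\|^2\), and the comparison lemma rules out escape on \([t-T_i,t]\). If the update contains the high power \(\nu>1\), I would exploit its sign-consistent dissipative direction, so that large \(\|\tilde W_i\|\) is pushed back.

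\textbf{Policy and running-term bounds.} With \(\bar W_i:=\sup_{[t-T_i,t]}\|\hat W_i\|<\infty\), continuity of \(\mathbf G_{ii}^\chi,\mathbf B_i^\chi,\mathbf B_{ij}^\chi\) on the compact operating region (Assumption~\ref{ass2}) bounds \(\hat\varpi_i\) and \(\hat\varpi_j\) by constants times \(\bar\phi_{di}\bar W_i\). The secure input satisfies \(\hat u_{ci}\in\mathbb U_i\) by the \(\tanh\) structure, so \(\mathcal U_i(\hat u_{ci})\) is finite after restricting the argument of \(\tanh\) to a compact set. The ideal policies are bounded similarly, using \(\|\nabla V_i^\ast\|\le c_{\nabla i}\|\chi_i\|\) from Lemma~\ref{lem:local_value_bounds}. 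Since \(Q_{ii}\) cancels in \(\hat r_i-r_i^\ast\), the difference is a sum of continuous functions evaluated on compact sets. Taking \(\bar r_{\Delta i}\) as the maximum of that sum gives \(|\hat r_i-r_i^\ast|\le\bar r_{\Delta i}\) on \([t-T_i,t]\).
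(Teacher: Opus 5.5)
Your three-stage skeleton matches the paper's route: bounded normalized regressors and a finite stack, then finiteness of \(\hat W_i\) on the window, then continuity on compact sets to produce \(\bar r_{\Delta i}\). That route is sketched in the statement itself, and the paper's proof only adds a pointer to Lemma~2 of \cite{VuDoan2026PredefinedTimeResilientIRL}. The one place you go further is the no-finite-escape argument in your second stage, and that step fails as sketched. Your own estimate \(|\hat r_i|\le c_0+c_1\|\hat W_i\|^2\) has its quadratic from the \(-\gamma_i^2\hat\varpi^\top\mathbf T\hat\varpi\) terms. It makes \(|\delta_i|\) of order \(T_i\sup_{[t-T_i,t]}\|\hat W_i\|^2\), so the term \(k_{i2}|\delta_i|^{\gamma_2}\) in \eqref{eq:irl_update_new}, with \(\gamma_2>1\), grows like \(\|\hat W_i\|^{2\gamma_2}\). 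A comparison inequality \(\dot y\le C(1+y^{p})\) with \(p>1\) admits solutions that blow up after a time of order \(y_0^{1-p}\). Hence Gr\"onwall or the comparison lemma need not cover even one window of length \(T_i\) when the initial weights are large, let alone every window. The sign structure does not rescue this either. Via Lemma~\ref{lem:perturbed_power}, the high-power term supplies dissipation of order \(\|\widetilde W_i\|^{\gamma_2+1}\) at best, and its replay part only after \(T_E\), when \(\bm\Psi_i\) has full rank. The perturbation it must absorb, \(|\int_{t-T_i}^{t}(\hat r_i-r_i^\ast)\,d\tau|^{\gamma_2+1}\), is of order \(\|\hat W_i\|^{2(\gamma_2+1)}\), so the estimate cannot close. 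Also, \(\delta_i\) depends on \(\hat W_i\) over the whole window, so the update is a retarded functional differential equation, and Carath\'eodory's ODE existence theorem does not apply verbatim.

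This step is, however, unnecessary for the lemma as stated and used. The paper proves no global existence of the critic trajectory. It reads the hypothesis that \(\hat W_i\) evolves according to \eqref{eq:irl_update_new} on \([t-T_i,t]\) as saying that a solution is defined there. An absolutely continuous function on a compact interval is bounded, so \(\bar W_i:=\max_{[t-T_i,t]}\|\hat W_i\|<\infty\) and \(\dot{\hat W}_i\in L^1([t-T_i,t])\) follow at once. Your third stage then finishes the proof with a window-dependent \(\bar r_{\Delta i}\), which is all that is claimed. A genuine bound on every window would need an extra ingredient, such as an a priori bound on \(\hat W_i\) or projection of the weights onto a bounded set, which the structure you invoke does not provide.

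One simplification for your third stage: since \(\int_0^1\tanh^{-1}(x)\,dx=\ln 2\), one has \(0\le\mathcal U_i(u)\le 2\ln 2\sum_{\ell}r_{i\ell}\bar u_{i\ell}^2\) on all of \(\mathbb U_i\), and analogously for \(\mathcal U_{ij}\) on \(\mathbb U_j\). This bounds the neighbour terms \(\mathcal U_{ij}(\hat u_{cj})\) without any bound on the neighbours' weights \(\hat W_j\), which your proposal never addresses. It also shows that \(\hat r_i-r_i^\ast\) depends on learned weights only through agent \(i\)'s own \(\hat W_i\), via \(\hat\varpi_i\) and \(\hat\varpi_j\).
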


\begin{proof}
    See~[\cite{VuDoan2026PredefinedTimeResilientIRL} Lemma~2]
    
\end{proof}

Define \(\widetilde W_i:=W_i^\ast-\hat W_i\). Let \(\dot\chi_i^\ast(\tau)\) denote the ideal saddle vector field evaluated at the measured state \(\chi_i(\tau)\), whereas \(\dot\chi_i(\tau)\) is generated by the actual measured trajectory, and define the trajectory mismatch \(\eta_i(t):=\int_{t-T_i}^{t}\nabla V_i^{\ast\top}(\chi_i(\tau))[\dot\chi_i(\tau)-\dot\chi_i^\ast(\tau)]d\tau\). Since the ideal HJI identity gives \(\nabla V_i^{\ast\top}\dot\chi_i^\ast+r_i^\ast=0\), equivalently \(\eta_i(t)=V_i^\ast(\chi_i(t))-V_i^\ast(\chi_i(t-T_i))+\int_{t-T_i}^{t}r_i^\ast(\tau)d\tau\); hence \(\eta_i(t)=0\) when the measured evolution coincides with the ideal saddle evolution. Using the online Bellman--Isaacs residual defined above, \(\delta_i(t)=\hat W_i^\top\Delta\phi_i(t)+\int_{t-T_i}^{t}\hat r_i(\tau)d\tau\). With \(\Delta\epsilon_i(t):=\epsilon_i(\chi_i(t))-\epsilon_i(\chi_i(t-T_i))\), one obtains \(W_i^{\ast\top}\Delta\phi_i(t)+\Delta\epsilon_i(t)+\int_{t-T_i}^{t}r_i^\ast(\tau)d\tau-\eta_i(t)=0\), and therefore \(\delta_i(t)=-\widetilde W_i^\top\Delta\phi_i(t)+\epsilon_{H_i}(t)\), where \(\epsilon_{H_i}(t):=-\Delta\epsilon_i(t)+\int_{t-T_i}^{t}[\hat r_i(\tau)-r_i^\ast(\tau)]d\tau+\eta_i(t)\). Hence, with \(\psi_i:=\Delta\phi_i/\mu_i\) and \(\bar\epsilon_{H_i}:=\epsilon_{H_i}/\mu_i\), \(\delta_i/\mu_i=-\widetilde W_i^\top\psi_i+\bar\epsilon_{H_i}\); analogously, \(\delta_{ik}/\mu_{ik}=-\widetilde W_i^\top\psi_{ik}+\bar\epsilon_{H_{ik}}\), \(k=1,\ldots,M_i\). Compactness of \(\Omega_i^r\) and boundedness of the admissible and learned signals imply a finite \(\bar c_{\eta_i}>0\) such that \(|\eta_i(t)|\le T_i\bar c_{\eta_i}\). Since \(|\Delta\epsilon_i(t)|\le2\bar\epsilon_i\), \(|\hat r_i-r_i^\ast|\le\bar r_{\Delta i}\), and \(\mu_i,\mu_{ik}\ge1\), it follows that \(|\bar\epsilon_{H_i}(t)|\le2\bar\epsilon_i+T_i(\bar r_{\Delta i}+\bar c_{\eta_i})\). Because the replay stack is finite, define \(\epsilon_{H_im}:=\max\{2\bar\epsilon_i+T_i(\bar r_{\Delta i}+\bar c_{\eta_i}),\max_{1\le k\le M_i}|\bar\epsilon_{H_{ik}}|\}\); then \(|\bar\epsilon_{H_i}(t)|\le\epsilon_{H_im}\) and \(|\bar\epsilon_{H_{ik}}|\le\epsilon_{H_im}\), \(k=1,\ldots,M_i\).

\begin{remark}
Lemma~\ref{lem:bounded_learned_signals} and the subsequent derivation establish the bounded-data decomposition required for the critic analysis. Specifically, Lemma~4 guarantees that the finite-window and replay quantities remain well defined and bounded, while the following analysis rewrites the Bellman residual as a critic-weight-error term plus a uniformly bounded mismatch accounting for value-function approximation, learned-to-ideal policy deviation, and trajectory mismatch. This form allows all nonideal effects to be treated as bounded perturbations in the subsequent fixed-/predefined-time Lyapunov analysis.
\end{remark}

\subsection{Fixed-Time Critic Update}

To expose the learning structure. Consider the residual objective \(\mathcal E_i(\hat W_i):=\dfrac{|\delta_i|^{\gamma_1+1}}{(\gamma_1+1)\mu_i}+\dfrac{|\delta_i|^{\gamma_2+1}}{(\gamma_2+1)\mu_i}+\sum_{k=1}^{M_i}\left(\dfrac{|\delta_{ik}|^{\gamma_1+1}}{(\gamma_1+1)\mu_{ik}}+\dfrac{|\delta_{ik}|^{\gamma_2+1}}{(\gamma_2+1)\mu_{ik}}\right)\), where \(0<\gamma_1<1<\gamma_2\). The exponent \(\gamma_1\) increases the learning action near the residual origin, while \(\gamma_2\) prevents slow learning when the residual is large. Thus, the residual objective is shaped in the same two-power spirit as a fixed-time Lyapunov construction.

Along the approximate saddle-point trajectory induced by
\(\hat u_{ci}\), \(\hat\varpi_i\), and
\(\hat\varpi_j\), the local closed-loop
dynamics are
\(\dot\chi_i=
F_i^\chi(\xi_i)
+\mathbf G_{ii}^\chi(x_i)\hat u_{ci}
+\mathbf G_{i\mathcal N_i}^\chi(x_{\mathcal N_i})u_{c\mathcal N_i}
+\mathbf B_i^\chi(x_i)\hat\varpi_i
+\sum_{j\in\mathcal N_i}\mathbf B_{ij}^\chi(x_j)\hat\varpi_j\).
Define the corresponding approximate Hamiltonian as
\(\widehat{\mathcal H}_i=
\hat r_i+
\hat W_i^\top\nabla\phi_i(\chi_i)
\big[
F_i^\chi(\xi_i)
+\mathbf G_{ii}^\chi(x_i)\hat u_{ci}
+\mathbf G_{i\mathcal N_i}^\chi(x_{\mathcal N_i})u_{c\mathcal N_i}
+\mathbf B_i^\chi(x_i)\hat\varpi_i
+\sum_{j\in\mathcal N_i}\mathbf B_{ij}^\chi(x_j)\hat\varpi_j
\big]\).
By construction of the approximate saddle policies
\(\hat u_{ci}
=
-\bar{\mathbf U}_i
\tanh\!\big[
\frac{1}{2}
\mathbf R_i^{-1}
\bar{\mathbf U}_i^{-1}
\mathbf G_{ii}^{\chi\top}(x_i)
\nabla\phi_i^\top(\chi_i)\hat W_i
\big]\).
Hence,
\(\tanh^{-1}\!\big(
\bar{\mathbf U}_i^{-1}\hat u_{ci}
\big)
=
-\frac{1}{2}
\mathbf R_i^{-1}
\bar{\mathbf U}_i^{-1}
\mathbf G_{ii}^{\chi\top}(x_i)
\nabla\phi_i^\top(\chi_i)\hat W_i\).
Since
\(\partial\mathcal U_i(u_{ci})/\partial u_{ci}
=
2\bar{\mathbf U}_i\mathbf R_i
\tanh^{-1}\!\big(
\bar{\mathbf U}_i^{-1}u_{ci}
\big)\),
it follows that
\(\partial\mathcal U_i(\hat u_{ci})/\partial\hat u_{ci}
=
2\bar{\mathbf U}_i\mathbf R_i
\tanh^{-1}\!\big(
\bar{\mathbf U}_i^{-1}\hat u_{ci}
\big)
=
-\bar{\mathbf U}_i\mathbf R_i
\mathbf R_i^{-1}\bar{\mathbf U}_i^{-1}
\mathbf G_{ii}^{\chi\top}(x_i)
\nabla\phi_i^\top(\chi_i)\hat W_i
=
-\mathbf G_{ii}^{\chi\top}(x_i)
\nabla\phi_i^\top(\chi_i)\hat W_i\),
where the last equality follows since
\(\bar{\mathbf U}_i\) and \(\mathbf R_i\) are diagonal.
Therefore,
\(\mathbf G_{ii}^{\chi\top}(x_i)
\nabla\phi_i^\top(\chi_i)\hat W_i
+\partial\mathcal U_i(\hat u_{ci})/\partial\hat u_{ci}
=0\).
Likewise, from
\(\hat\varpi_i=
\frac{1}{2\gamma_i^2}
\mathbf T_{ii}^{-1}
\mathbf B_i^{\chi\top}(x_i)
\nabla\phi_i^\top(\chi_i)\hat W_i\),
one has
\(2\gamma_i^2\mathbf T_{ii}\hat\varpi_i
=
\mathbf B_i^{\chi\top}(x_i)
\nabla\phi_i^\top(\chi_i)\hat W_i\),
and hence
\(\mathbf B_i^{\chi\top}(x_i)
\nabla\phi_i^\top(\chi_i)\hat W_i
-2\gamma_i^2\mathbf T_{ii}\hat\varpi_i=0\).
Similarly, for every \(j\in\mathcal N_i\),
\(\hat\varpi_j=
\frac{1}{2\gamma_i^2}
\mathbf T_{ij}^{-1}
\mathbf B_{ij}^{\chi\top}(x_j)
\nabla\phi_i^\top(\chi_i)\hat W_i\)
implies
\(2\gamma_i^2\mathbf T_{ij}\hat\varpi_j
=
\mathbf B_{ij}^{\chi\top}(x_j)
\nabla\phi_i^\top(\chi_i)\hat W_i\),
so that
\(\mathbf B_{ij}^{\chi\top}(x_j)
\nabla\phi_i^\top(\chi_i)\hat W_i
-2\gamma_i^2\mathbf T_{ij}\hat\varpi_j=0\).

Hence, differentiating \(\widehat{\mathcal H}_i\) with respect to
\(\hat W_i\), while the neighboring secure-control coupling
\(u_{c\mathcal N_i}\) is fixed with respect to \(\hat W_i\), gives
\(\partial\widehat{\mathcal H}_i/\partial\hat W_i
=
\nabla\phi_i(\chi_i)
\big[
F_i^\chi(\xi_i)
+\mathbf G_{ii}^\chi(x_i)\hat u_{ci}
+\mathbf G_{i\mathcal N_i}^\chi(x_{\mathcal N_i})u_{c\mathcal N_i}
+\mathbf B_i^\chi(x_i)\hat\varpi_i
+\sum_{j\in\mathcal N_i}\mathbf B_{ij}^\chi(x_j)\hat\varpi_j
\big]
+
(\partial\hat u_{ci}/\partial\hat W_i)^\top
\big[
\mathbf G_{ii}^{\chi\top}(x_i)
\nabla\phi_i^\top(\chi_i)\hat W_i
+\partial\mathcal U_i(\hat u_{ci})/\partial\hat u_{ci}
\big]
+
(\partial\hat\varpi_i/\partial\hat W_i)^\top
\big[
\mathbf B_i^{\chi\top}(x_i)
\nabla\phi_i^\top(\chi_i)\hat W_i
-2\gamma_i^2\mathbf T_{ii}\hat\varpi_i
\big]
+
\sum_{j\in\mathcal N_i}
(\partial\hat\varpi_j/\partial\hat W_i)^\top
\big[
\mathbf B_{ij}^{\chi\top}(x_j)
\nabla\phi_i^\top(\chi_i)\hat W_i
-2\gamma_i^2\mathbf T_{ij}\hat\varpi_j
\big]\).
The three bracketed terms are therefore identically zero, and thus
\(\partial\widehat{\mathcal H}_i/\partial\hat W_i
=
\nabla\phi_i(\chi_i)
\big[
F_i^\chi(\xi_i)
+\mathbf G_{ii}^\chi(x_i)\hat u_{ci}
+\mathbf G_{i\mathcal N_i}^\chi(x_{\mathcal N_i})u_{c\mathcal N_i}
+\mathbf B_i^\chi(x_i)\hat\varpi_i
+\sum_{j\in\mathcal N_i}\mathbf B_{ij}^\chi(x_j)\hat\varpi_j
\big]
=
\nabla\phi_i(\chi_i)\dot\chi_i\).
Consequently,
\(\partial\delta_i/\partial\hat W_i
=
\int_{t-T_i}^{t}
\nabla\phi_i(\chi_i(\tau))\dot\chi_i(\tau)\,d\tau
=
\phi_i(\chi_i(t))
-\phi_i(\chi_i(t-T_i))
=
\Delta\phi_i\).

Moreover, since \((\Delta\phi_{ik},\rho_{ik})\) are fixed recorded quantities in \(\mathcal D_i\) during replay, \(\partial\delta_{ik}/\partial\hat W_i=\Delta\phi_{ik}\). Thus,
\begin{equation}
\frac{\partial\delta_i}{\partial\hat W_i}=\Delta\phi_i,
\qquad
\frac{\partial\delta_{ik}}{\partial\hat W_i}=\Delta\phi_{ik}.
\label{eq:irl_semigradient_new}
\end{equation}
We have
\begin{equation}
\begin{aligned}
\frac{\partial\mathcal E_i}{\partial\hat W_i}
={}&
\psi_i\!\left[
|\delta_i|^{\gamma_1}\operatorname{sgn}(\delta_i)
+
|\delta_i|^{\gamma_2}\operatorname{sgn}(\delta_i)
\right]
\\
&+
\sum_{k=1}^{M_i}
\psi_{ik}\!\left[
|\delta_{ik}|^{\gamma_1}\operatorname{sgn}(\delta_{ik})
+
|\delta_{ik}|^{\gamma_2}\operatorname{sgn}(\delta_{ik})
\right].
\end{aligned}
\label{eq:irl_objective_gradient_basic}
\end{equation}
Here, \(\psi_i=\Delta\phi_i/\mu_i\) and \(\psi_{ik}=\Delta\phi_{ik}/\mu_{ik}\). The learning gains are then inserted to tune the contribution of the current and recorded residuals. For compact notation, define



The two-power critic update with experience replay is selected as
\begin{equation}
\begin{aligned}
\dot{\hat W}_i
=
-\mathbf\Gamma_i
\Bigg\{
&\psi_i
\Big[
k_{i1}|\delta_i|^{\gamma_1}\operatorname{sgn}(\delta_i)
+
k_{i2}|\delta_i|^{\gamma_2}\operatorname{sgn}(\delta_i)
\Big]
\\
&+
\sum_{k=1}^{M_i}
\psi_{ik}
\Big[
k_{i1}^r|\delta_{ik}|^{\gamma_1}\operatorname{sgn}(\delta_{ik})
+
k_{i2}^r|\delta_{ik}|^{\gamma_2}\operatorname{sgn}(\delta_{ik})
\Big]
\Bigg\}.
\end{aligned}
\label{eq:irl_update_new}
\end{equation}
Here, \(\mathbf\Gamma_i=\mathbf\Gamma_i^\top>0\) and \(k_{i1},k_{i2},k_{i1}^r,k_{i2}^r>0\) are designer-selected learning gains that independently weight the low- and high-power corrections of the current and replayed residuals, respectively. These gains are introduced for learning-rate shaping rather than being inherited from the plant or HJI model. For the predefined-time analysis developed below, select \(\mathbf\Gamma_i=\alpha_{c,i}\mathbf I_{N_i}\), where \(\alpha_{c,i}>0\) is determined from the assigned critic-learning horizon \(T_W\).


\begin{remark}
For each follower \(i\), predefined-time critic convergence is not an independent physical objective, but is required because the learned policy \(\hat u_i\) depends directly on \(\nabla_{\chi_i}\hat V_i(\chi_i)\). Hence, the critic error \(\widetilde W_i\) enters the closed-loop analysis through the policy-approximation error. Driving \(\widetilde W_i\) into a bounded residual set within an assigned initial-condition-independent horizon prevents an arbitrarily long learning transient and supports the subsequent practical predefined-time stability result for the multi-agent system.
\end{remark}

\section{Stability Analysis}
\label{sec:stability_analysis}

\begingroup
\setlength{\abovedisplayskip}{3pt}
\setlength{\belowdisplayskip}{3pt}
\setlength{\abovedisplayshortskip}{2pt}
\setlength{\belowdisplayshortskip}{2pt}
\setlength{\jot}{1pt}

From the remark~\ref{rem:two_power_quantitative} recall
\(\mu=(\gamma_1+1)/2\) and \(\nu=(\gamma_2+1)/2\), where
\(0<\gamma_1<1<\gamma_2\), so that \(1/2<\mu<1<\nu\),
\(A_\gamma=(\gamma_2-1)/(\gamma_2-\gamma_1)\) and
\(B_\gamma=(1-\gamma_1)/(\gamma_2-\gamma_1)\), so that
\(A_\gamma,B_\gamma\in(0,1)\),
\(A_\gamma+B_\gamma=1\), and
\(\mu A_\gamma+\nu B_\gamma=1\).
The exact two-power constant appearing repeatedly below is
\(C_\gamma:=2\Gamma_{\!E}(A_\gamma)\Gamma_{\!E}(B_\gamma)/(\gamma_2-\gamma_1)\).


For follower \(i\), recall
\(\widetilde W_i:=W_i^\ast-\hat W_i\)
denote the critic weight-estimation error. From the data stack~\(\mathcal D_i\)~,let the retained informative replay matrix be
\(\bm\Psi_i:=[\,\bm\psi_{i1}\ \cdots\ \bm\psi_{iM_i}\,]\)
and define
\(\varsigma_i:=\sigma_{\min}(\bm\Psi_i)\).
Let \(T_{E,i}\) denote the first time at which the replay stack of follower \(i\) satisfies the finite-informativity condition (\(\varsigma_i>0\)), and set
\(T_E:=\max_{1\le i\le N}T_{E,i}\).
Hence, for every \(t\ge T_E\), all retained local replay stacks are informative. Here, \(T_E\) is a designer-induced data-informativity time, determined by the selected excitation mechanism and replay-stack construction; hence, by prescribing the excitation protocol and data-collection schedule, the designer determines the resulting value of \(T_E\) at which the finite-informativity condition is first achieved. In practice, the informative replay data may also be collected offline prior to closed-loop operation, in which case the replay stack is already informative at initialization and one may take \(T_E=0\). Nevertheless, for generality, the present analysis treats \(T_E\) as part of the online operational phase so that both pre-collected and online-acquired informative data are covered within the same framework.

\subsection{Predefined-Time Critic Learning}

\begin{assumption}[Retained replay informativity]
\label{ass:retained_replay}
For every follower \(i\), once \(t\ge T_{E,i}\), the full-rank replay stack
\(\bm\Psi_i\) is retained and satisfies
\(\varsigma_i=\sigma_{\min}(\bm\Psi_i)>0\).
\end{assumption}

To make the inverse time design explicit, select the critic gain of follower \(i\) as a scalar
\(\alpha_{c,i}>0\) and consider
\begin{equation}
V_{c,i}:=\frac{1}{2\alpha_{c,i}}\widetilde W_i^\top\widetilde W_i.
\label{eq:critic_local_Lyapunov}
\end{equation}
Consequently,
\(\|\widetilde W_i\|^2=2\alpha_{c,i}V_{c,i}\).
Throughout the analysis below, \(\mathbf\Gamma_i=\alpha_{c,i}\mathbf I_{N_i}\) as specified in \eqref{eq:irl_update_new}, and \(\delta_i\) and \(\delta_{ik}\) denote the normalized online and recorded residuals used in the critic update.

\begin{theorem}[Predefined-time integral critic]
\label{thm:critic_predefined_multi}
Suppose that Assumption~\ref{ass:retained_replay} holds. Let \(T_W>0\) be an assigned critic-learning horizon measured from the data-informativity instant \(T_E\), independently of the initial critic-weight errors.

\emph{(i) Exact residual case:}
Suppose that, for every follower \(i\),
\(\bar\epsilon_{H_i}(t)\equiv0\) and
\(\bar\epsilon_{H_{ik}}=0\), \(k=1,\ldots,M_i\). Let \(A_\gamma,B_\gamma,\gamma_1,\gamma_2\) be as defined in Lemma~\ref{lem:Gamma_comparison}~and~Remark~\ref{rem:two_power_quantitative}.
If
\begin{equation}
\alpha_{c,i}
\ge
\frac{\Gamma_{\!E}(A_\gamma)\Gamma_{\!E}(B_\gamma)}
{T_W(\gamma_2-\gamma_1)\varsigma_i^2
(k_{i1}^r)^{A_\gamma}(k_{i2}^r)^{B_\gamma}
M_i^{\frac{(1-\gamma_1)(1-\gamma_2)}
{2(\gamma_2-\gamma_1)}}},
\label{eq:alpha_ci_exact}
\end{equation}
then
\begin{equation}
\widetilde W_i(t)=0,
\qquad
t\ge T_E+T_W,
\qquad
i=1,\ldots,N.
\label{eq:multi_exact_critic_deadline}
\end{equation}

\emph{(ii) Nonzero residual case:}
For the known residual bound
\(\epsilon_{H_i m}>0\) defined above, select
\(\theta_W\in(0,1)\).
Define
\(\underline h_{i1}:=2^{-(\gamma_1+1)}
k_{i1}^r\varsigma_i^{\gamma_1+1}\) and
\(\underline h_{i2}:=2^{-(\gamma_2+1)}
k_{i2}^rM_i^{(1-\gamma_2)/2}\varsigma_i^{\gamma_2+1}\).
If
\begin{equation}
\alpha_{c,i}
\ge
\frac{\Gamma_{\!E}(A_\gamma)\Gamma_{\!E}(B_\gamma)}
{(1-\theta_W)T_W(\gamma_2-\gamma_1)}
\underline h_{i1}^{-A_\gamma}
\underline h_{i2}^{-B_\gamma},
\label{eq:alpha_ci_practical}
\end{equation}
then \(V_{c,i}\) enters, no later than \(T_E+T_W\), the forward-invariant set
\(\mathcal B_{W_i}:=\{V_{c,i}\in\mathbb R_{\ge0}:V_{c,i}\le\bar V_{c,i}\}\),
where \(\bar V_{c,i}\) is the unique nonnegative solution of
\begin{equation}
h_{i1}\bar V_{c,i}^{\mu}
+h_{i2}\bar V_{c,i}^{\nu}
=
\frac{\Delta_{W_i}}{\theta_W}.
\label{eq:Vbar_ci_root}
\end{equation}
Here,
\begin{equation}
h_{i1}:=\underline h_{i1}(2\alpha_{c,i})^\mu,
\qquad
h_{i2}:=\underline h_{i2}(2\alpha_{c,i})^\nu,
\label{eq:critic_hi_definition}
\end{equation}
and
\begin{equation}
\begin{aligned}
\Delta_{W_i}
:={}&
(1+2\,3^{\gamma_1})
(k_{i1}+M_i k_{i1}^r)
\epsilon_{H_i m}^{\gamma_1+1}
\\
&+
(1+2\,3^{\gamma_2})
(k_{i2}+M_i k_{i2}^r)
\epsilon_{H_i m}^{\gamma_2+1}.
\end{aligned}
\label{eq:Delta_Wi}
\end{equation}
Consequently,
\begin{equation}
\|\widetilde W_i(t)\|
\le
\bar W_{c,i}
:=
\sqrt{2\alpha_{c,i}\bar V_{c,i}},
\qquad
t\ge T_E+T_W.
\label{eq:local_post_learning_critic_ball}
\end{equation}

\end{theorem}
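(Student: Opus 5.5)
The plan is a Lyapunov argument on \(V_{c,i}\) in \eqref{eq:critic_local_Lyapunov}. It reduces the critic dynamics to the two-power inequality of Lemma~\ref{lem:Gamma_comparison}, and the gain conditions \eqref{eq:alpha_ci_exact} and \eqref{eq:alpha_ci_practical} are then read off from the \(1/\alpha_{c,i}\) scaling in Remark~\ref{rem:two_power_quantitative}.

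\emph{Step 1: derivative of \(V_{c,i}\).} Since \(W_i^\ast\) is constant, \(\dot{\widetilde W}_i=-\dot{\hat W}_i\). With \(\mathbf\Gamma_i=\alpha_{c,i}\mathbf I\), \eqref{eq:irl_update_new} gives \(\dot V_{c,i}=\widetilde W_i^\top\{\cdots\}\). Write the normalized residuals as \(e_i:=-\widetilde W_i^\top\psi_i+\bar\epsilon_{H_i}\) and \(e_{ik}:=-\widetilde W_i^\top\psi_{ik}+\bar\epsilon_{H_{ik}}\), using the decomposition preceding Theorem~\ref{thm:critic_predefined_multi}. Then every summand has the form \(\widetilde W_i^\top\psi\,|e|^{\gamma}\operatorname{sgn}(e)=(\bar\epsilon-e)|e|^{\gamma}\operatorname{sgn}(e)\), which equals \(-|e|^{\gamma+1}+\bar\epsilon|e|^\gamma\operatorname{sgn}(e)\).

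In the exact case (i) all \(\bar\epsilon\) vanish. Dropping the nonpositive online terms yields
\[
\dot V_{c,i}\le-k_{i1}^r\sum_k|\widetilde W_i^\top\psi_{ik}|^{\gamma_1+1}-k_{i2}^r\sum_k|\widetilde W_i^\top\psi_{ik}|^{\gamma_2+1}.
\]

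\emph{Step 2: replay informativity.} Let \(z_k:=\widetilde W_i^\top\psi_{ik}\), so that \(\sum_k z_k^2=\|\bm\Psi_i^\top\widetilde W_i\|^2\ge\varsigma_i^2\|\widetilde W_i\|^2\) by Assumption~\ref{ass:retained_replay}.

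For the exponent \(p=\gamma_1+1<2\), concavity gives \(\sum_k|z_k|^{p}\ge(\sum_k z_k^2)^{p/2}\). For \(p=\gamma_2+1>2\), the power-mean inequality gives \(\sum_k|z_k|^p\ge M_i^{(2-p)/2}(\sum_k z_k^2)^{p/2}\), which is the source of \(M_i^{(1-\gamma_2)/2}\). Substituting \(\|\widetilde W_i\|^2=2\alpha_{c,i}V_{c,i}\) produces
\[
\dot V_{c,i}\le-a\,V_{c,i}^{\mu}-b\,V_{c,i}^{\nu},
\]
with \(a,b\) proportional to \((2\alpha_{c,i})^\mu\) and \((2\alpha_{c,i})^\nu\).

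Lemma~\ref{lem:Gamma_comparison} together with \(\mu A_\gamma+\nu B_\gamma=1\) then bounds the settling time after \(T_E\) by a constant times \(1/\alpha_{c,i}\). Requiring this to be \(\le T_W\) is exactly a condition of the form \eqref{eq:alpha_ci_exact}. I would reconcile the precise exponent of \(M_i\) and the powers of \(\varsigma_i\) stated there by tracking the power-mean constants carefully.

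\emph{Step 3: nonzero residual case (ii).} The perturbation terms must be absorbed; this is the main obstacle.

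For each summand I would use Young-type splitting: \(|e|\ge\tfrac12|z|-|\bar\epsilon|\) when \(|z|\ge 2|\bar\epsilon|\), and a direct bound on the small-\(|z|\) region. The goal is an estimate of the form
\[
-|e|^{\gamma+1}+|\bar\epsilon||e|^\gamma\le-2^{-(\gamma+1)}|z|^{\gamma+1}+c_\gamma|\bar\epsilon|^{\gamma+1},
\]
where \(|e|\le|z|+|\bar\epsilon|\) and \((x+y)^\gamma\le\) a multiple of \(x^\gamma+y^\gamma\) should produce the constant \(1+2\cdot3^{\gamma}\). For \(\gamma_1<1\) I expect this to need a case split on \(|z|\lessgtr 2\epsilon_{H_im}\). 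Summing over the online and \(M_i\) replay terms with \(|\bar\epsilon|\le\epsilon_{H_im}\) gives \(\dot V_{c,i}\le-h_{i1}V_{c,i}^\mu-h_{i2}V_{c,i}^\nu+\Delta_{W_i}\), with \(h_{i1},h_{i2}\) as in \eqref{eq:critic_hi_definition}.

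Outside \(\mathcal B_{W_i}\) we have \(\Delta_{W_i}\le\theta_W(h_{i1}V^\mu+h_{i2}V^\nu)\), since the left side of \eqref{eq:Vbar_ci_root} is increasing, so its root is unique. Hence \(\dot V_{c,i}\le-(1-\theta_W)(h_{i1}V^\mu+h_{i2}V^\nu)\) there. Lemma~\ref{lem:Gamma_comparison} with coefficients \((1-\theta_W)h_{i1},(1-\theta_W)h_{i2}\) and the \(1/\alpha_{c,i}\) scaling then shows that \eqref{eq:alpha_ci_practical} ensures entry by \(T_E+T_W\).

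Forward invariance follows because \(\dot V_{c,i}<0\) on the boundary \(V_{c,i}=\bar V_{c,i}\). Finally, \eqref{eq:local_post_learning_critic_ball} follows from \(\|\widetilde W_i\|^2=2\alpha_{c,i}V_{c,i}\).
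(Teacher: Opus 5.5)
Your proposal is correct and follows essentially the paper's own proof. It uses the same derivative of \(V_{c,i}\) with the online terms discarded, and the same replay-coercivity step (concavity for \(\gamma_1+1<2\), power mean for \(\gamma_2+1>2\), giving \(M_i^{(1-\gamma_2)/2}\)). It uses the same two-case split \(|z|\gtrless2|\bar\epsilon|\) that yields \(2^{-(\gamma+1)}\) and \(1+2\cdot3^{\gamma}\) (Lemma~\ref{lem:perturbed_power}). It then applies Lemma~\ref{lem:Gamma_comparison} with \(\mu A_\gamma+\nu B_\gamma=1\) and \((\gamma_1+1)A_\gamma+(\gamma_2+1)B_\gamma=2\), which gives exactly the \(\varsigma_i^2\) and \(M_i\) exponents in \eqref{eq:alpha_ci_exact}.

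A minor tidy-up: in the large-\(|z|\) case, the bound you need is \(|e|\ge|z|-|\bar\epsilon|\ge\tfrac12|z|\), not \(|e|\ge\tfrac12|z|-|\bar\epsilon|\). Keeping the sign of \(e\) then gives \(z\spow{e}{\gamma}=-|z|\,|e|^{\gamma}\le-2^{-\gamma}|z|^{\gamma+1}\) directly. Relaxing to \(|\bar\epsilon||e|^{\gamma}\) also works, but it needs the slack term near \(|z|=2|\bar\epsilon|\). The case split is needed for every \(\gamma>0\), not only \(\gamma_1<1\).
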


\begin{proof}
Since \(W_i^\ast\) is constant, \(\dot{\widetilde W}_i=-\dot{\hat W}_i\). Differentiating \eqref{eq:critic_local_Lyapunov} and substituting the critic update~\eqref{eq:irl_update_new} give
\(
\dot V_{c,i}
=
\widetilde W_i^\top\bm\psi_i(t)
[
k_{i1}\spow{\delta_i(t)}{\gamma_1}
+
k_{i2}\spow{\delta_i(t)}{\gamma_2}
]
+
\sum_{k=1}^{M_i}
\widetilde W_i^\top\bm\psi_{ik}
[
k_{i1}^r\spow{\delta_{ik}(t)}{\gamma_1}
+
k_{i2}^r\spow{\delta_{ik}(t)}{\gamma_2}
].
\)

In the exact-residual case, \(\delta_i(t)=-\widetilde W_i^\top\bm\psi_i(t)\) and \(\delta_{ik}(t)=-\widetilde W_i^\top\bm\psi_{ik}\). Hence,
\(
\dot V_{c,i}
=
-
k_{i1}|\widetilde W_i^\top\bm\psi_i(t)|^{\gamma_1+1}
-
k_{i2}|\widetilde W_i^\top\bm\psi_i(t)|^{\gamma_2+1}
-
k_{i1}^r\sum_{k=1}^{M_i}
|\widetilde W_i^\top\bm\psi_{ik}|^{\gamma_1+1}
-
k_{i2}^r\sum_{k=1}^{M_i}
|\widetilde W_i^\top\bm\psi_{ik}|^{\gamma_2+1}.
\)
The online terms are nonpositive and may be discarded. By the replay coercivity result,
\(
\dot V_{c,i}
\le
-k_{i1}^r\varsigma_i^{\gamma_1+1}
\|\widetilde W_i\|^{\gamma_1+1}
-
k_{i2}^rM_i^{\frac{1-\gamma_2}{2}}
\varsigma_i^{\gamma_2+1}
\|\widetilde W_i\|^{\gamma_2+1}.
\)
Using \(\|\widetilde W_i\|^2=2\alpha_{c,i}V_{c,i}\), this becomes
\(
\dot V_{c,i}
\le
-k_{i1}^r\varsigma_i^{\gamma_1+1}
(2\alpha_{c,i})^\mu
V_{c,i}^{\mu}
-
k_{i2}^rM_i^{\frac{1-\gamma_2}{2}}
\varsigma_i^{\gamma_2+1}
(2\alpha_{c,i})^\nu
V_{c,i}^{\nu}.
\)

Applying Lemma~\ref{lem:Gamma_comparison} gives
\(
T_{c,i}
<
\frac{
\Gamma_{\!E}(A_\gamma)
\Gamma_{\!E}(B_\gamma)
}{
\alpha_{c,i}
(\gamma_2-\gamma_1)
\varsigma_i^2
(k_{i1}^r)^{A_\gamma}
(k_{i2}^r)^{B_\gamma}
}
M_i^{
\frac{
(1-\gamma_1)(\gamma_2-1)
}{
2(\gamma_2-\gamma_1)
}
}.
\)
Indeed, \(\mu A_\gamma+\nu B_\gamma=1\) gives \((2\alpha_{c,i})^{\mu A_\gamma+\nu B_\gamma}=2\alpha_{c,i}\), while \((\gamma_1+1)A_\gamma+(\gamma_2+1)B_\gamma=2\). Therefore, condition \eqref{eq:alpha_ci_exact} guarantees \(T_{c,i}\le T_W\). Since \(T_{E,i}\le T_E\), \(T_{E,i}+T_{c,i}\le T_E+T_W\), which proves \eqref{eq:multi_exact_critic_deadline}.

\begin{lemma}[Perturbed signed-power inequality]
\label{lem:perturbed_power}
For every \(\gamma>0\) and \(z,\epsilon\in\mathbb R\), \(z\spow{-z+\epsilon}{\gamma}\le -2^{-(\gamma+1)}|z|^{\gamma+1}+\left(1+2\,3^\gamma\right)|\epsilon|^{\gamma+1}\). For \(\epsilon=0\), the sharper identity \(z\spow{-z}{\gamma}=-|z|^{\gamma+1}\) holds.
\end{lemma}

\begin{proof}
If \(|\epsilon|\le|z|/2\), then \(-z+\epsilon\) has the sign opposite to \(z\) and \(|-z+\epsilon|\ge|z|/2\), yielding the negative term above. If \(|\epsilon|>|z|/2\), then \(|z|<2|\epsilon|\) and \(|-z+\epsilon|<3|\epsilon|\), which yields the residual term. Combining the two cases proves the result.
\end{proof}

Consider now the nonzero-residual case. From the normalized residual relations established above, we have \(\bar\delta_i(t)=-\widetilde W_i^\top\bm\psi_i(t)+\bar\epsilon_{H_i}(t)\) and \(\bar\delta_{ik}(t)=-\widetilde W_i^\top\bm\psi_{ik}+\bar\epsilon_{H_{ik}}\), where \(\bar\delta_i(t):=\delta_i(t)/\mu_i(t)\) and \(\bar\delta_{ik}:=\delta_{ik}/\mu_{ik}\).

Applying Lemma~\ref{lem:perturbed_power} with \(\gamma=\gamma_1\) and \(\gamma=\gamma_2\) to every term in the preceding derivative expression, using \(|\bar\epsilon_{H_i}(t)|\le\epsilon_{H_i m}\) and \(|\bar\epsilon_{H_{ik}}|\le\epsilon_{H_i m}\), and discarding the remaining nonpositive online terms yield
\(
\dot V_{c,i}
\le
-
2^{-(\gamma_1+1)}
k_{i1}^r
\sum_{k=1}^{M_i}
|\widetilde W_i^\top\bm\psi_{ik}|^{\gamma_1+1}
-
2^{-(\gamma_2+1)}
k_{i2}^r
\sum_{k=1}^{M_i}
|\widetilde W_i^\top\bm\psi_{ik}|^{\gamma_2+1}
+
\Delta_{W_i}.
\)
Replay coercivity then gives
\(
\dot V_{c,i}
\le
-
2^{-(\gamma_1+1)}
k_{i1}^r
\varsigma_i^{\gamma_1+1}
\|\widetilde W_i\|^{\gamma_1+1}
-
2^{-(\gamma_2+1)}
k_{i2}^r
M_i^{\frac{1-\gamma_2}{2}}
\varsigma_i^{\gamma_2+1}
\|\widetilde W_i\|^{\gamma_2+1}
+
\Delta_{W_i}.
\)
Hence, using \(\|\widetilde W_i\|^2=2\alpha_{c,i}V_{c,i}\),
\begin{equation}
\dot V_{c,i}
\le
-h_{i1}V_{c,i}^{\mu}
-h_{i2}V_{c,i}^{\nu}
+\Delta_{W_i}.
\label{eq:Vci_practical_comparison}
\end{equation}

The left-hand side of \eqref{eq:Vbar_ci_root}, regarded as a function of \(\bar V_{c,i}\), is continuous, strictly increasing on \(\mathbb R_{\ge0}\), vanishes at zero, and diverges as \(\bar V_{c,i}\rightarrow\infty\). Hence, the root is unique. At \(V_{c,i}=\bar V_{c,i}\), one has \(\Delta_{W_i}=\theta_W(h_{i1}\bar V_{c,i}^{\mu}+h_{i2}\bar V_{c,i}^{\nu})\). For every \(V_{c,i}>\bar V_{c,i}\), strict monotonicity gives \(\Delta_{W_i}<\theta_W(h_{i1}V_{c,i}^{\mu}+h_{i2}V_{c,i}^{\nu})\), and therefore
\begin{equation}
\dot V_{c,i}
\le
-(1-\theta_W)
\left(
h_{i1}V_{c,i}^{\mu}
+
h_{i2}V_{c,i}^{\nu}
\right).
\label{eq:Vci_outside_BWi}
\end{equation}

Lemma~\ref{lem:Gamma_comparison} now yields
\(
T_{c,i,\mathcal B_{W_i}}
<
C_\gamma/
[
(1-\theta_W)
h_{i1}^{A_\gamma}
h_{i2}^{B_\gamma}
].
\)
Using \(\mu A_\gamma+\nu B_\gamma=1\) and \eqref{eq:critic_hi_definition}, one obtains
\(
h_{i1}^{A_\gamma}
h_{i2}^{B_\gamma}
=
2\alpha_{c,i}
\underline h_{i1}^{A_\gamma}
\underline h_{i2}^{B_\gamma}.
\)
Since \(C_\gamma=2\Gamma_{\!E}(A_\gamma)\Gamma_{\!E}(B_\gamma)/(\gamma_2-\gamma_1)\), it follows that
\(
T_{c,i,\mathcal B_{W_i}}
<
\frac{
\Gamma_{\!E}(A_\gamma)
\Gamma_{\!E}(B_\gamma)
}{
(1-\theta_W)
\alpha_{c,i}
(\gamma_2-\gamma_1)
\underline h_{i1}^{A_\gamma}
\underline h_{i2}^{B_\gamma}
},
\)~where \(T_{c,i,\mathcal B_{W_i}}\) denotes the first-entry time of
\(V_{c,i}(t)\) into the residual set \(\mathcal B_{W_i}\).
Condition \eqref{eq:alpha_ci_practical} therefore gives \(T_{c,i,\mathcal B_{W_i}}\le T_W\).

At the boundary \(V_{c,i}=\bar V_{c,i}\), \eqref{eq:Vci_practical_comparison} together with the defining relation \(\Delta_{W_i}=\theta_W(h_{i1}\bar V_{c,i}^{\mu}+h_{i2}\bar V_{c,i}^{\nu})\) implies \(\dot V_{c,i}\le-(1-\theta_W)\Delta_{W_i}/\theta_W\le0\). Hence, \(\mathcal B_{W_i}\) is forward invariant. Finally, \(\|\widetilde W_i(t)\|^2=2\alpha_{c,i}V_{c,i}(t)\le2\alpha_{c,i}\bar V_{c,i}\), which proves \eqref{eq:local_post_learning_critic_ball}.
\end{proof}

\begin{remark}[Assigned critic-learning horizon]
\label{rem:multi_critic_deadline}
The assigned quantity \(T_W\) is a design horizon measured after the data-informativity instant. The critic dynamics evolve from initialization, but the replay coercivity required by the proof is guaranteed only after \(T_E\). Thus, \(T_E\) is not an additional critic settling time generated by the learning law; it represents the finite time required for the measured data to become informative. Once this occurs, \(\alpha_{c,i}\) is related explicitly to the remaining learning horizon through \eqref{eq:alpha_ci_exact} or \eqref{eq:alpha_ci_practical}.
\end{remark}

\subsection{Predefined-Time Closed-Loop Stability}

The preceding theorem establishes the local learning result. We next propagate the critic error through the graph-coupled learned policies and combine it with the cost-induced HJI dissipation.


Define the stacked critic error \(\widetilde{\mathbf W}:=\operatorname{col}\{\widetilde W_1,\ldots,\widetilde W_N\}\) and, from Theorem~\ref{thm:critic_predefined_multi}, define \(\bar W_c:=(\sum_{i=1}^{N}\bar W_{c,i}^2)^{1/2}\). Then
\begin{equation}
\|\widetilde{\mathbf W}(t)\|
\le
\bar W_c,
\qquad
t\ge T_E+T_W.
\label{eq:global_post_learning_bound}
\end{equation}


For each local graphical game, define the actual-versus-reconstructed adversarial mismatches \(\Delta\varpi_{r,i}:=\varpi_i-\hat\varpi_i\) and \(\Delta\varpi_{r,ij}:=\varpi_j-\hat\varpi_j^{\,|i}\), \(j\in\mathcal N_i\), where \(\hat\varpi_j^{\,|i}\) denotes agent \(i\)'s learned reconstruction of the neighboring saddle-point adversarial policy \(\varpi_j^*\) defined above.


\begin{assumption}[Bounded local adversarial mismatch]
\label{ass:multi_adversarial_mismatch}
On the certified compact operating region, there exist known finite constants \(\bar\varpi_{r,i}\ge0\) and \(\bar\varpi_{r,ij}\ge0\) such that \(\|\Delta\varpi_{r,i}(t)\|\le\bar\varpi_{r,i}\) and \(\|\Delta\varpi_{r,ij}(t)\|\le\bar\varpi_{r,ij}\).
\end{assumption}

Let \(r_{\Omega_i}:=\max_{\xi\in\Omega_i^r}\|\xi\|\) and \(\bar p_{V_i}:=c_{\nabla i}r_{\Omega_i}\). Lemma~\ref{lem:local_value_bounds} gives \(\|\nabla V_i^\ast(\chi_i)\|\le c_{\nabla i}\|\chi_i\|\le\bar p_{V_i}\). By the RBF approximation, \(\|\nabla V_i^\ast-\nabla\hat V_i\|\le\bar\phi_{g,i}\|\widetilde W_i\|+\bar\varepsilon_{g,i}\), where \(\bar\phi_{g,i}:=\max_{\chi_i\in\Omega_i^r}\|\nabla\phi_i(\chi_i)\|\) and \(\bar\varepsilon_{g,i}:=\max_{\chi_i\in\Omega_i^r}\|\nabla\varepsilon_i(\chi_i)\|\).

Define \(\bar G_{ii}:=\max_{\Omega_i}\|\mathbf G_{ii}^{\chi}(x_i)\|\), \(\bar G_{i\mathcal N}:=\max_{\Omega_i}\|\mathbf G_{i\mathcal N}^{\chi}(x_{\mathcal N_i})\|\), \(\bar B_i:=\max_{\Omega_i}\|\mathbf B_i^\chi(x_i)\|\), and \(\bar B_{ij}:=\max_{\Omega_j}\|\mathbf B_{ij}^\chi(x_j)\|\). Since \(\tanh(\cdot)\) is globally one-Lipschitz, define

\begin{equation}
\begin{aligned}
\ell_{u,iW}
&:=
\frac12
\|\bar{\mathbf U}_i\|
\|
\mathbf R_i^{-1}
\bar{\mathbf U}_i^{-1}
\|
\bar G_{ii}\bar\phi_{g,i},
\\
\ell_{u,i0}
&:=
\frac12
\|\bar{\mathbf U}_i\|
\|
\mathbf R_i^{-1}
\bar{\mathbf U}_i^{-1}
\|
\bar G_{ii}\bar\varepsilon_{g,i}.
\end{aligned}
\label{eq:ell_ui_bounds}
\end{equation}

Then \(\|\hat u_{ci}-u_{ci}^\ast\|\le\ell_{u,iW}\|\widetilde W_i\|+\ell_{u,i0}\). For the neighboring secure-input stack, define \(\ell_{u,\mathcal N_iW}:=(\sum_{j\in\mathcal N_i}\ell_{u,jW}^2)^{1/2}\) and \(\ell_{u,\mathcal N_i0}:=(\sum_{j\in\mathcal N_i}\ell_{u,j0}^2)^{1/2}\). Consequently,
\[
\|
\hat u_{c\mathcal N_i}
-
u_{c\mathcal N_i}^\ast
\|
\le
\ell_{u,\mathcal N_iW}
\|
\widetilde{\mathbf W}
\|
+
\ell_{u,\mathcal N_i0}.
\]

Similarly, define
\[
\ell_{\varpi,iW}
:=
\frac{
\|\mathbf T_{ii}^{-1}\|
\bar B_i
\bar\phi_{g,i}
}{
2\gamma_i^2
},
\qquad
\ell_{\varpi,i0}
:=
\frac{
\|\mathbf T_{ii}^{-1}\|
\bar B_i
\bar\varepsilon_{g,i}
}{
2\gamma_i^2
},
\]
and, for \(j\in\mathcal N_i\),
\[
\ell_{\varpi,ijW}
:=
\frac{
\|\mathbf T_{ij}^{-1}\|
\bar B_{ij}
\bar\phi_{g,i}
}{
2\gamma_i^2
},
\qquad
\ell_{\varpi,ij0}
:=
\frac{
\|\mathbf T_{ij}^{-1}\|
\bar B_{ij}
\bar\varepsilon_{g,i}
}{
2\gamma_i^2
}.
\]
Using \(\varpi_i-\varpi_i^\ast=\hat\varpi_i-\varpi_i^\ast+\Delta\varpi_{r,i}\), one has \(\|\varpi_i-\varpi_i^\ast\|\le\ell_{\varpi,iW}\|\widetilde W_i\|+\ell_{\varpi,i0}+\bar\varpi_{r,i}\). Likewise, \(\|\varpi_j-\varpi_j^{\ast|i}\|\le\ell_{\varpi,ijW}\|\widetilde W_i\|+\ell_{\varpi,ij0}+\bar\varpi_{r,ij}\).

Collect the critic-dependent and critic-independent coefficients as
\[
\begin{aligned}
L_{W_i}
:={}&
\bar G_{ii}\ell_{u,iW}
+
\bar G_{i\mathcal N}\ell_{u,\mathcal N_iW}
+
\bar B_i\ell_{\varpi,iW}
+
\sum_{j\in\mathcal N_i}
\bar B_{ij}\ell_{\varpi,ijW},
\\
L_{0_i}
:={}&
\bar G_{ii}\ell_{u,i0}
+
\bar G_{i\mathcal N}\ell_{u,\mathcal N_i0}
+
\bar B_i
(
\ell_{\varpi,i0}
+
\bar\varpi_{r,i}
)
\\
&+
\sum_{j\in\mathcal N_i}
\bar B_{ij}
(
\ell_{\varpi,ij0}
+
\bar\varpi_{r,ij}
).
\end{aligned}
\]
Define \(c_{W_i}:=\bar p_{V_i}L_{W_i}\) and \(c_{0_i}:=\bar p_{V_i}L_{0_i}\). It follows that


\(\big|\nabla V_i^{\ast\top}\big[
\mathbf G_{ii}^{\chi}(\hat u_{ci}-u_{ci}^\ast)
+\mathbf G_{i\mathcal N}^{\chi}
(\hat u_{c\mathcal N_i}-u_{c\mathcal N_i}^\ast)
+\mathbf B_i^\chi(\varpi_i-\varpi_i^\ast)
+\sum_{j\in\mathcal N_i}
\mathbf B_{ij}^\chi
(\varpi_j-\varpi_j^{\ast|i})
\big]\big|
\le
c_{W_i}\|\widetilde{\mathbf W}\|+c_{0_i}\).


For every \(p>1\), \(b>0\), \(c\ge0\), and \(s\ge0\), the optimized Young inequality
\(cs\le bs^p+(p-1)p^{-p/(p-1)}b^{-1/(p-1)}c^{p/(p-1)}\)
will be used. For arbitrary \(b_{i1},b_{i2}>0\), define
\(\Delta_{x_i}:=
c_{0_i}
+\gamma_1(\gamma_1+1)^{-\frac{\gamma_1+1}{\gamma_1}}
b_{i1}^{-\frac1{\gamma_1}}
\left(\frac{c_{W_i}}{2}\right)^{\frac{\gamma_1+1}{\gamma_1}}
+\gamma_2(\gamma_2+1)^{-\frac{\gamma_2+1}{\gamma_2}}
b_{i2}^{-\frac1{\gamma_2}}
\left(\frac{c_{W_i}}{2}\right)^{\frac{\gamma_2+1}{\gamma_2}}\).

Splitting \(c_{W_i}\|\widetilde{\mathbf W}\|\) into two equal parts and applying the preceding optimized Young inequality with \(p=\gamma_1+1\) and \(p=\gamma_2+1\) yield
\(\big|\nabla V_i^{\ast\top}\big[\mathbf G_{ii}^{\chi}(\hat u_{ci}-u_{ci}^\ast)+\mathbf G_{i\mathcal N}^{\chi}(\hat u_{c\mathcal N_i}-u_{c\mathcal N_i}^\ast)+\mathbf B_i^\chi(\varpi_i-\varpi_i^\ast)+\sum_{j\in\mathcal N_i}\mathbf B_{ij}^\chi(\varpi_j-\varpi_j^{\ast|i})\big]\big|
\le b_{i1}\|\widetilde{\mathbf W}\|^{\gamma_1+1}
+b_{i2}\|\widetilde{\mathbf W}\|^{\gamma_2+1}
+\Delta_{x_i}\).

By the cost-induced predefined-time HJI construction of Section~\ref{Sec3}, the local ideal value along the actual learned closed-loop trajectory therefore satisfies

\begin{equation}
\begin{aligned}
\dot V_i^\ast
\le{}&
-\lambda_x c_{i1}(V_i^\ast)^{\gamma_1}
-\lambda_x c_{i2}(V_i^\ast)^{\nu}
+b_{i1}\|\widetilde{\mathbf W}\|^{\gamma_1+1}
\\
&\quad
+b_{i2}\|\widetilde{\mathbf W}\|^{\gamma_2+1}
+\Delta_{x_i}.
\end{aligned}
\label{eq:Vi_learned_predefined}
\end{equation}


For each follower, define the moving-window ideal-value energy
\begin{equation}
\mathcal I_i(t)
:=
\int_{t-T_i}^{t}V_i^\ast(\chi_i(\tau))\,d\tau,
\qquad
T_i>0.
\label{eq:local_value_window}
\end{equation}
Since \(\Omega_i^r\) is compact and \(V_i^\ast(\chi_i)\le\bar c_i\|\chi_i\|^2\), define \(\bar I_i:=T_i\bar c_i r_{\Omega_i}^2\), so that \(0\le\mathcal I_i(t)\le\bar I_i\). Boundedness of the post-learning closed-loop vector field on the compact operating region also implies the existence of \(L_{\dot V_i}>0\) such that
\begin{equation}
|\dot V_i^\ast(\chi_i(t))|
\le
L_{\dot V_i}.
\label{eq:local_Vdot_uniform_bound}
\end{equation}

\begin{lemma}[Derivative-limited low-order window bound~\cite{VuDoan2026FixedTimeMASIRL},~\cite{VuDoan2026PredefinedTimeResilientIRL}]
\label{lem:multi_low_window}
Let \(y:[t-T,t]\to\mathbb R_{\ge0}\) be absolutely continuous and satisfy \(|\dot y(\tau)|\le L_{\dot y}\) and \(\int_{t-T}^{t}y(\tau)d\tau\le\bar I\), where \(L_{\dot y},\bar I>0\). For \(0<\gamma_1<1\), one has
\(\int_{t-T}^{t}y^{\gamma_1}(\tau)d\tau
\ge
\rho_{\gamma_1,T}
\big(\int_{t-T}^{t}y(\tau)d\tau\big)^\mu\),
where
\(\rho_{\gamma_1,T}:=
\min\{2^{\gamma_1-1}L_{\dot y}^{-\frac{1-\gamma_1}{2}},
\,4^{\gamma_1-1}T^{1-\gamma_1}\bar I^{\frac{\gamma_1-1}{2}}\}\).
\end{lemma}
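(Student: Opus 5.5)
The plan is to prove the bound with $I:=\int_{t-T}^{t}y(\tau)d\tau$ and $\mu=(\gamma_1+1)/2$, as in the statement, using two elementary lower bounds on $\int y^{\gamma_1}$ and a case split on how large the peak of $y$ is compared with the window length. If $I=0$ there is nothing to prove, so assume $I>0$. Let $M:=\max_{[t-T,t]}y=y(\tau^\ast)$, which exists by continuity; then $M>0$.

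\emph{Peak bound.} Since $0\le y\le M$ and $\gamma_1-1<0$, we have $y^{\gamma_1}=y\,y^{\gamma_1-1}\ge y\,M^{\gamma_1-1}$ wherever $y>0$. Integrating gives $\int y^{\gamma_1}\ge M^{\gamma_1-1}I$.

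\emph{Jensen bound.} Concavity of $s\mapsto s^{\gamma_1}$ gives $\int y^{\gamma_1}\ge T(I/T)^{\gamma_1}=T^{1-\gamma_1}I^{\gamma_1}$.

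\emph{Derivative-limited bound.} From $|\dot y|\le L_{\dot y}$ we get $y(\tau)\ge M-L_{\dot y}|\tau-\tau^\ast|$. Since $\tau^\ast$ lies in the window, one side of $\tau^\ast$ inside $[t-T,t]$ has length at least $T/2$. On that side, $y\ge M/2$ on a subinterval of length at least $\min\{M/(2L_{\dot y}),T/2\}$.

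\emph{Case A: $M/(2L_{\dot y})\le T/2$.} The derivative-limited bound gives $\int y^{\gamma_1}\ge (M/2)^{\gamma_1}M/(2L_{\dot y})=aM^{\gamma_1+1}$ with $a:=2^{-(\gamma_1+1)}L_{\dot y}^{-1}$. Combined with the peak bound,
$$\int y^{\gamma_1}\ge \max\{aM^{\gamma_1+1},\,M^{\gamma_1-1}I\}\ge \bigl(aM^{\gamma_1+1}\bigr)^{w}\bigl(M^{\gamma_1-1}I\bigr)^{1-w}$$
for any $w\in[0,1]$. To eliminate the unknown $M$, I choose $w$ so that $(\gamma_1+1)w+(\gamma_1-1)(1-w)=0$, i.e.\ $w=(1-\gamma_1)/2$ and $1-w=\mu$. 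This yields
$$\int y^{\gamma_1}\ge 2^{-(1-\gamma_1^2)/2}L_{\dot y}^{-(1-\gamma_1)/2}I^{\mu}.$$
It remains to check that $2^{-(1-\gamma_1^2)/2}\ge 2^{\gamma_1-1}$. This is equivalent to $(1-\gamma_1)(1+\gamma_1)/2\le 1-\gamma_1$, i.e.\ $(1+\gamma_1)/2\le1$, which holds. So the first entry of $\rho_{\gamma_1,T}$ is a valid constant in Case~A.

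\emph{Case B: $M/(2L_{\dot y})>T/2$.} Here I would use the Jensen bound directly. Write $I^{\gamma_1}=I^{\mu}I^{\gamma_1-\mu}$; since $\gamma_1-\mu=(\gamma_1-1)/2<0$ and $I\le\bar I$, we get $I^{\gamma_1-\mu}\ge\bar I^{(\gamma_1-1)/2}$. Hence
$$\int y^{\gamma_1}\ge T^{1-\gamma_1}\bar I^{(\gamma_1-1)/2}I^{\mu}\ge 4^{\gamma_1-1}T^{1-\gamma_1}\bar I^{\frac{\gamma_1-1}{2}}I^{\mu},$$
because $4^{\gamma_1-1}\le1$. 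So the second entry of $\rho_{\gamma_1,T}$ is valid in Case~B. Taking the minimum of the two constants covers both cases and proves the lemma.

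\emph{Main obstacle.} The delicate step is Case~A: the pointwise derivative bound controls only a neighbourhood of the peak, while the target exponent $\mu$ must come out independent of the unknown peak value $M$. The weighted geometric mean with $w=(1-\gamma_1)/2$ is exactly what cancels $M$ and produces $I^\mu$. One must also verify the boundary geometry: the half-window argument guarantees an interval of length $\min\{M/(2L_{\dot y}),T/2\}$ even when $\tau^\ast$ is at an endpoint of the window.
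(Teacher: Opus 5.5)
Your Case~A is correct; the weighted geometric mean with \(w=(1-\gamma_1)/2\) even gives the slightly sharper constant \(2^{-(1-\gamma_1^2)/2}\). Case~B, however, rests on a false inequality. Since \(s\mapsto s^{\gamma_1}\) is \emph{concave}, Jensen's inequality gives \(\frac1T\int_{t-T}^{t}y^{\gamma_1}d\tau\le (I/T)^{\gamma_1}\). That is the \emph{upper} bound \(\int_{t-T}^{t}y^{\gamma_1}d\tau\le T^{1-\gamma_1}I^{\gamma_1}\), not the lower bound you use. By strict concavity, the inequality you wrote actually fails for every non-constant \(y\), so it fails inside Case~B as well. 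A narrow spike shows that no bound of this form can hold without the derivative limit: if \(y\approx I/\epsilon\) on an interval of length \(\epsilon\) and \(y\approx0\) elsewhere, then \(\int y^{\gamma_1}\approx I^{\gamma_1}\epsilon^{1-\gamma_1}\to0\) while \(I\) stays fixed. A warning sign is that your Case~B never uses its own hypothesis \(M>L_{\dot y}T\). If it were valid, it would give the constant \(T^{1-\gamma_1}\bar I^{(\gamma_1-1)/2}\) for every nonnegative \(y\), making both the \(L_{\dot y}\)-entry of \(\rho_{\gamma_1,T}\) and the factor \(4^{\gamma_1-1}\) pointless. So the delicate case is B, not A.

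The repair uses only ingredients you already have. In Case~B, your derivative-limited bound gives \(y\ge M/2\) on a subinterval of length at least \(T/2\). Hence \(I\ge MT/4\), i.e.\ \(M\le4I/T\). Since \(\gamma_1-1<0\), inserting this into your peak bound gives \(\int_{t-T}^{t}y^{\gamma_1}d\tau\ge M^{\gamma_1-1}I\ge(4I/T)^{\gamma_1-1}I=4^{\gamma_1-1}T^{1-\gamma_1}I^{\gamma_1}\). Your step \(I^{\gamma_1}\ge\bar I^{(\gamma_1-1)/2}I^{\mu}\) then produces exactly the second entry of \(\rho_{\gamma_1,T}\); this is where \(4^{\gamma_1-1}\) comes from. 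The same mechanism reproduces the first entry in Case~A: \(I\ge M^2/(4L_{\dot y})\), i.e.\ \(M\le2\sqrt{L_{\dot y}I}\), inserted into the peak bound gives exactly \(2^{\gamma_1-1}L_{\dot y}^{-(1-\gamma_1)/2}\). The paper itself gives no proof here and defers to the cited works. Still, the constants in \(\rho_{\gamma_1,T}\) are exactly those produced by this ``peak bound plus triangle-mass lower bound on \(I\)'' argument in the two cases.
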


For \(y(\tau)=V_i^\ast(\chi_i(\tau))\), define
\(\rho_{i,T}:=
\min\{2^{\gamma_1-1}L_{\dot V_i}^{-\frac{1-\gamma_1}{2}},
\,4^{\gamma_1-1}T_i^{1-\gamma_1}\bar I_i^{\frac{\gamma_1-1}{2}}\}\).
Thus,
\(\int_{t-T_i}^{t}(V_i^\ast)^{\gamma_1}d\tau
\ge
\rho_{i,T}\mathcal I_i^\mu\).
Since \(\nu>1\), Jensen's inequality gives
\(\int_{t-T_i}^{t}(V_i^\ast)^\nu d\tau
\ge
T_i^{1-\nu}\mathcal I_i^\nu\).

Define \(T_{\max}:=\max_{1\le i\le N}T_i\) and \(t_0:=T_E+T_W+T_{\max}\). For every \(t\ge t_0\), the complete local interval \([t-T_i,t]\) lies in the post-learning critic regime for every follower. Hence,
\(\int_{t-T_i}^{t}\|\widetilde{\mathbf W}(\tau)\|^{\gamma_1+1}d\tau
\le
T_i\bar W_c^{\gamma_1+1}\)
and
\(\int_{t-T_i}^{t}\|\widetilde{\mathbf W}(\tau)\|^{\gamma_2+1}d\tau
\le
T_i\bar W_c^{\gamma_2+1}\).

Differentiating \eqref{eq:local_value_window} gives
\(\dot{\mathcal I}_i(t)
=
V_i^\ast(\chi_i(t))
-
V_i^\ast(\chi_i(t-T_i))
=
\int_{t-T_i}^{t}\dot V_i^\ast(\chi_i(\tau))d\tau\).
Define
\(a_i:=\lambda_xc_{i1}\rho_{i,T}\),
\(b_i:=\lambda_xc_{i2}T_i^{1-\nu}\), and
\(\Delta_{I_i}:=
T_i[b_{i1}\bar W_c^{\gamma_1+1}
+b_{i2}\bar W_c^{\gamma_2+1}
+\Delta_{x_i}]\).
Substitution of \eqref{eq:Vi_learned_predefined} into the preceding derivative identity, followed by the low-order moving-window bound, Jensen's high-order bound, and the post-learning critic-window bounds, gives
\(\dot{\mathcal I}_i
\le
-a_i\mathcal I_i^\mu
-b_i\mathcal I_i^\nu
+\Delta_{I_i}\)
for \(t\ge t_0\).


Define the aggregate moving-window value energy
\(\mathcal I_\chi:=\sum_{i=1}^{N}\mathcal I_i\).
Since \(0<\mu<1\),
\(\sum_{i=1}^{N}\mathcal I_i^\mu\ge\mathcal I_\chi^\mu\),
whereas, since \(\nu>1\),
\(\sum_{i=1}^{N}\mathcal I_i^\nu\ge N^{1-\nu}\mathcal I_\chi^\nu\).
Define
\(a_\chi:=\min_{1\le i\le N}a_i\),
\(b_\chi:=N^{1-\nu}\min_{1\le i\le N}b_i\), and
\(\Delta_I:=\sum_{i=1}^{N}\Delta_{I_i}\).
Then
\begin{equation}
\dot{\mathcal I}_\chi
\le
-a_\chi\mathcal I_\chi^\mu
-b_\chi\mathcal I_\chi^\nu
+\Delta_I.
\label{eq:Ichi_comparison}
\end{equation}

Define the aggregate critic Lyapunov function
\(V_c:=\sum_{i=1}^{N}V_{c,i}\).
Summing \eqref{eq:Vci_practical_comparison} gives
\(\dot V_c
\le
-\sum_{i=1}^{N}h_{i1}V_{c,i}^{\mu}
-\sum_{i=1}^{N}h_{i2}V_{c,i}^{\nu}
+\sum_{i=1}^{N}\Delta_{W_i}\).
Define
\(h_1:=\min_{1\le i\le N}h_{i1}\),
\(h_2:=N^{1-\nu}\min_{1\le i\le N}h_{i2}\), and
\(\Delta_W:=\sum_{i=1}^{N}\Delta_{W_i}\).
Using the same power-sum inequalities,
\begin{equation}
\dot V_c
\le
-h_1V_c^\mu
-h_2V_c^\nu
+\Delta_W.
\label{eq:network_Vc_comparison}
\end{equation}

To analyze the complete learned network, consider the mixed integral--pointwise Lyapunov-like functional
\begin{equation}
\mathcal J(t)
:=
\mathcal I_\chi(t)+V_c(t).
\label{eq:network_J}
\end{equation}

Adding \eqref{eq:Ichi_comparison} and \eqref{eq:network_Vc_comparison} yields
\(\dot{\mathcal J}\le-a_\chi\mathcal I_\chi^\mu-b_\chi\mathcal I_\chi^\nu-h_1V_c^\mu-h_2V_c^\nu+\Delta_I+\Delta_W\).
For \(0<\mu<1\), \(\mathcal I_\chi^\mu+V_c^\mu\ge(\mathcal I_\chi+V_c)^\mu=\mathcal J^\mu\), whereas
\(\mathcal I_\chi^\nu+V_c^\nu\ge2^{1-\nu}(\mathcal I_\chi+V_c)^\nu=2^{1-\nu}\mathcal J^\nu\)~(~See~\cite{VuDoan2026FixedTimeMASIRL}~).
Define \(C_1:=\min\{a_\chi,h_1\}\), \(C_2:=2^{1-\nu}\min\{b_\chi,h_2\}\), and
\(\Delta_J:=\Delta_I+\Delta_W\). Then
\(\dot{\mathcal J}\le-C_1\mathcal J^\mu-C_2\mathcal J^\nu+\Delta_J\) for \(t\ge t_0\).


Define the exact two-power integral constant
\(\mathfrak C_{\mu,\nu}:=
\dfrac{
\Gamma_{\!E}\!\left(\frac{\nu-1}{\nu-\mu}\right)
\Gamma_{\!E}\!\left(\frac{1-\mu}{\nu-\mu}\right)
}{
\nu-\mu
}\).
For a predefined-time guarantee measured from initialization, assume that the data-informativity time admits a known uniform bound \(T_E\le\bar T_E\). Let the designer assign \(T_p>0\) satisfying \(T_p>\bar T_E+T_{\max}\). Choose any \(\vartheta_T\in(0,1)\) and allocate
\(T_W:=\vartheta_T(T_p-\bar T_E-T_{\max})\) and
\(T_X:=(1-\vartheta_T)(T_p-\bar T_E-T_{\max})\), so that
\(\bar T_E+T_W+T_{\max}+T_X=T_p\).
Select \(\theta_J\in(0,1)\) and define
\(\Lambda_X:=\dfrac{\mathfrak C_{\mu,\nu}}{(1-\theta_J)T_X}\).

The state-side coefficients satisfy
\(a_\chi=\lambda_x\min_i\{c_{i1}\rho_{i,T}\}\) and
\(b_\chi=\lambda_xN^{1-\nu}\min_i\{c_{i2}T_i^{1-\nu}\}\).
Hence, the sufficient conditions \(a_\chi\ge\Lambda_X\) and
\(b_\chi\ge2^{\nu-1}\Lambda_X\) are enforced by
\(\lambda_x\ge
\max\left\{
\dfrac{\Lambda_X}{\min_i\{c_{i1}\rho_{i,T}\}},
\dfrac{(2N)^{\nu-1}\Lambda_X}{\min_i\{c_{i2}T_i^{1-\nu}\}}
\right\}\).

Likewise, the critic-side requirements \(h_1\ge\Lambda_X\) and
\(h_2\ge2^{\nu-1}\Lambda_X\) are guaranteed if, for every follower,
\(h_{i1}\ge\Lambda_X\) and
\(h_{i2}\ge(2N)^{\nu-1}\Lambda_X\).
Using \eqref{eq:critic_hi_definition}, define
\(\alpha_{X,i}:=
\dfrac12
\max\left\{
\left(\dfrac{\Lambda_X}{\underline h_{i1}}\right)^{1/\mu},
\left(\dfrac{(2N)^{\nu-1}\Lambda_X}{\underline h_{i2}}\right)^{1/\nu}
\right\}\).
Thus, in the practical residual case, choose
\(\alpha_{c,i}\ge\max\{\alpha_{W,i},\alpha_{X,i}\}\), where
\(\alpha_{W,i}:=
\dfrac{
\Gamma_{\!E}(A_\gamma)
\Gamma_{\!E}(B_\gamma)
}{
(1-\theta_W)T_W(\gamma_2-\gamma_1)
}
\underline h_{i1}^{-A_\gamma}
\underline h_{i2}^{-B_\gamma}\).

\begin{theorem}[Predefined-time practical stability]
\label{thm:closed_loop_predefined_multi}
Consider the nonlinear leader--follower system governed by the local
dynamics \eqref{eq:hji_local_dynamics_game}, under the
saturation-compatible learned secure policies and the predefined-time
critic updates. Suppose that
Assumptions~\ref{assk}--\ref{ass:retained_replay} and
Assumption~\ref{ass:multi_adversarial_mismatch} hold, that all closed-loop
trajectories remain inside the certified compact operating region, and
that the local ideal HJI values satisfy
Lemma~\ref{lem:local_value_bounds}. Suppose further that the
data-informativity time admits the known uniform bound
\(T_E\le\bar T_E\). Let the designer assign a deadline
\(T_p>\bar T_E+T_{\max}\), select \(T_W\) and \(T_X\) according to the
preceding deadline allocation, and choose
\(\theta_W,\theta_J\in(0,1)\). Select the critic gains
\(\alpha_{c,i}\), \(i=1,\ldots,N\), and the state-side gain
\(\lambda_x\) according to the corresponding gain conditions defined
above.

Then the mixed network functional \(\mathcal J(t)\) enters, no later than
the designer-assigned deadline \(T_p\), the compact forward-invariant set
\(\mathcal B_J:=\{\mathcal J\in\mathbb R_{\ge0}:
\mathcal J\le\bar{\mathcal J}\}\), where
\(\bar{\mathcal J}\) is the unique nonnegative solution of
\begin{equation}
C_1
\bar{\mathcal J}^{\mu}
+
C_2
\bar{\mathcal J}^{\nu}
=
\frac{\Delta_J}{\theta_J}.
\label{eq:network_Jbar_root}
\end{equation}

In particular, define
\(\bar V_i:=\max\{2L_{\dot V_i}T_i,2\bar{\mathcal J}/T_i\}\).
Then, for every \(t\ge T_p\),

In particular, define
\(r_{\chi_i}:=\max\{r_{i,-},\underline\alpha_i^{-1}(\bar V_i)\}\),
\(i=1,\ldots,N\).
Then, for every \(t\ge T_p\),
\(\|\chi_i(t)\|\le r_{\chi_i}\), \(i=1,\ldots,N\).

Consequently,
\(\|\chi(t)\|\le r_\chi:=(\sum_{i=1}^{N}r_{\chi_i}^2)^{1/2}\),
and the leader-rooted formation error satisfies
\begin{equation}
\|e(t)\|
\le
r_e
:=
\|\mathbf H^{-1}\otimes\mathbf I_n\|r_\chi,
\qquad
t\ge T_p.
\label{eq:e_predefined_bound}
\end{equation}
Moreover,
\begin{equation}
\|\widetilde W_i(t)\|
\le
r_{W_i}
:=
\min\!\left\{
\bar W_{c,i},
\sqrt{2\alpha_{c,i}\bar{\mathcal J}}
\right\},
\qquad
t\ge T_p.
\label{eq:Wi_final_bound}
\end{equation}
Thus, with
\(\mathbf\Gamma(t):=\operatorname{col}\{e(t),\widetilde W_1(t),\ldots,\widetilde W_N(t)\}\),
the learned closed loop is practically predefined-time stable with respect to
\begin{equation}
\begin{aligned}
\mathcal B_\Gamma
:=
\Big\{
&\operatorname{col}\{e,\widetilde W_1,\ldots,\widetilde W_N\}:
\|e\|\le r_e,\\
&\|\widetilde W_i\|\le r_{W_i},
\quad i=1,\ldots,N
\Big\}.
\end{aligned}
\label{eq:B_Gamma}
\end{equation}

If, in addition,
\(\bar\epsilon_{H_i}(t)\equiv0\) and
\(\bar\epsilon_{H_{ik}}=0\),
\(i=1,\ldots,N\), \(k=1,\ldots,M_i\), and
\(\bar\varepsilon_{g,i}=0\),
\(\bar\varpi_{r,i}=0\),
\(\bar\varpi_{r,ij}=0\)
for all admissible \(i\) and \(j\in\mathcal N_i\), define
\(r_{\chi,-}:=(\sum_{i=1}^{N}r_{i,-}^{2})^{1/2}\) and
\(r_{e,-}:=\|\mathbf H^{-1}\otimes\mathbf I_n\|r_{\chi,-}\).
Then \(\chi(t)\in\overline{\mathcal B}_{r_{\chi,-}}(0)\),
\(e(t)\in\overline{\mathcal B}_{r_{e,-}}(0)\), and
\(\widetilde W_i(t)=0\) for \(i=1,\ldots,N\) and all \(t\ge T_p\).

\end{theorem}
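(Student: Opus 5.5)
The plan is to work with the mixed inequality already derived before the statement, \(\dot{\mathcal J}\le-C_1\mathcal J^\mu-C_2\mathcal J^\nu+\Delta_J\) for \(t\ge t_0=T_E+T_W+T_{\max}\). It holds because by then the critic has entered \(\mathcal B_{W_i}\) (Theorem~\ref{thm:critic_predefined_multi}(ii), which applies since \(\alpha_{c,i}\ge\alpha_{W,i}\)), so every window \([t-T_i,t]\) lies in the post-learning regime. I would then argue exactly as in the nonzero-residual part of that theorem. The left side of \eqref{eq:network_Jbar_root} is continuous, strictly increasing, zero at zero and unbounded, so \(\bar{\mathcal J}\) exists and is unique. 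For \(\mathcal J>\bar{\mathcal J}\), monotonicity gives \(\Delta_J<\theta_J(C_1\mathcal J^\mu+C_2\mathcal J^\nu)\), hence \(\dot{\mathcal J}\le-(1-\theta_J)(C_1\mathcal J^\mu+C_2\mathcal J^\nu)\). At the boundary, \(\dot{\mathcal J}\le-(1-\theta_J)\Delta_J/\theta_J\le0\), which gives forward invariance of \(\mathcal B_J\).

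For the timing, the gain conditions on \(\lambda_x\) and \(\alpha_{c,i}\ge\alpha_{X,i}\) ensure \(a_\chi,h_1\ge\Lambda_X\) and \(b_\chi,h_2\ge2^{\nu-1}\Lambda_X\). Hence \(C_1\ge\Lambda_X\) and \(C_2\ge\Lambda_X\). Lemma~\ref{lem:Gamma_comparison} (with \(\gamma_1,\gamma_2\) replaced by \(2\mu-1,2\nu-1\)) bounds the entry time from \(t_0\) by \(2\mathfrak C_{\mu,\nu}/[(1-\theta_J)\cdot 2\,C_1^{A}C_2^{B}]\). Here the exponents satisfy \(A+B=1\) and the factor \(2/(\gamma_2-\gamma_1)=1/(\nu-\mu)\) is absorbed into \(\mathfrak C_{\mu,\nu}\). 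The bound is therefore at most \(\mathfrak C_{\mu,\nu}/[(1-\theta_J)\Lambda_X]=T_X\). Since \(T_E\le\bar T_E\), entry happens by \(\bar T_E+T_W+T_{\max}+T_X=T_p\), independently of initial conditions. I also need to check the constant bookkeeping between \(C_\gamma\) and \(\mathfrak C_{\mu,\nu}\). If a factor of 2 is off, it can be absorbed by the slack in \(\theta_J\) or by the sufficient inequalities.

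Next I convert the bound on \(\mathcal J\) into pointwise bounds; this is the step I expect to be the main obstacle. For \(t\ge T_p\), \(\mathcal I_i(t)\le\mathcal J\le\bar{\mathcal J}\) and \(V_{c,i}\le\bar{\mathcal J}\). The second gives \(\|\widetilde W_i\|\le\sqrt{2\alpha_{c,i}\bar{\mathcal J}}\), which combined with \eqref{eq:local_post_learning_critic_ball} yields \eqref{eq:Wi_final_bound}. For the state, I would use the derivative bound \eqref{eq:local_Vdot_uniform_bound}. If \(V_i^\ast(\chi_i(t))=v\), then on the last \(\min\{T_i,v/(2L_{\dot V_i})\}\) of the window \(V_i^\ast\ge v/2\). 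Thus \(\bar{\mathcal J}\ge\mathcal I_i\ge(v/2)\min\{T_i,v/(2L_{\dot V_i})\}\). In the case \(v\ge 2L_{\dot V_i}T_i\) this gives \(v\le2\bar{\mathcal J}/T_i\); otherwise \(v<2L_{\dot V_i}T_i\). Either way \(v\le\bar V_i\). Then Lemma~\ref{lem:local_value_bounds} gives \(\underline\alpha_i(\|\chi_i\|)\le\bar V_i\) on \(\Omega_i^r\). Points with \(\|\chi_i\|<r_{i,-}\) are covered trivially, which explains the \(\max\) in \(r_{\chi_i}\). Stacking the components gives \(r_\chi\), and \(\|e\|\le\|\mathbf H^{-1}\otimes\mathbf I_n\|\|\chi\|\) gives \eqref{eq:e_predefined_bound} and \(\mathcal B_\Gamma\). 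The delicate point is that the window argument needs \(\chi_i\) to stay in \(\Omega_i^r\) on the relevant sub-window. I would handle this by the case split of Remark~\ref{remark3}: any sub-window visit to \(\|\chi_i\|<r_{i,-}\) is already within the claimed radius.

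Finally, in the exact case all residual sources vanish. Then \(\epsilon_{H_im}\) is zero, so \(\Delta_{W_i}=0\), and Theorem~\ref{thm:critic_predefined_multi}(i) gives \(\widetilde W_i=0\) after \(T_E+T_W\). In addition \(c_{0_i}=0\) and \(c_{W_i}\|\widetilde{\mathbf W}\|=0\), so \(\Delta_{x_i}\) can be taken as zero by letting \(b_{i1},b_{i2}\) be arbitrary with \(\widetilde{\mathbf W}=0\), and \(\Delta_J=0\). Hence \(\bar{\mathcal J}=0\) and \(\mathcal I_i\) reaches zero by \(T_p\). On \(\Omega_i^r\) we have \(V_i^\ast>0\), so vanishing window energy forces \(\chi_i\) outside \(\Omega_i^r\), i.e.\ \(\|\chi_i\|\le r_{i,-}\). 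Stacking these bounds yields \(r_{\chi,-}\) and \(r_{e,-}\).
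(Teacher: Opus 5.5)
Your proposal is correct and follows the paper's proof essentially step for step. It uses the same unique-root and \(\theta_J\)-splitting argument, and the two-power settling-time bound with \(C_1,C_2\ge\Lambda_X\) giving entry by \(\bar T_E+T_W+T_{\max}+T_X=T_p\). It also uses the same derivative-limited window argument yielding \(\bar V_i\), the critic bound from \(V_{c,i}\le\mathcal J\) combined with \eqref{eq:local_post_learning_critic_ball}, and \(\Delta_J=0\) in the exact case. Two small remarks. Your constant bookkeeping is already exact, since \(2/(\gamma_2-\gamma_1)=1/(\nu-\mu)\) as you computed, so no appeal to slack in \(\theta_J\) is needed (and that slack would not be available anyway, because \(T_X\) is defined with the given \(\theta_J\)). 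Your worry about staying in \(\Omega_i^r\) on the sub-window is also unnecessary: \eqref{eq:local_Vdot_uniform_bound} is assumed along the whole trajectory in the compact operating region, so \(\Omega_i^r\) is needed only for the final \(\underline\alpha_i\) step at time \(t\).
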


\begin{proof}
By Theorem~\ref{thm:critic_predefined_multi}, the critic-gain requirement specified above guarantees that every critic has entered its post-learning residual set no later than \(T_E+T_W\). Since \(T_i\le T_{\max}\), every moving interval \([t-T_i,t]\) lies entirely inside the post-learning critic regime for \(t\ge t_0=T_E+T_W+T_{\max}\).~By the definition of \(\alpha_{X,i}\) above, \(h_{i1}\ge\Lambda_X\) and \(h_{i2}\ge(2N)^{\nu-1}\Lambda_X\) for all \(i\). Hence, \(h_1=\min_i h_{i1}\ge\Lambda_X\) and \(h_2=N^{1-\nu}\min_i h_{i2}\ge2^{\nu-1}\Lambda_X\). Likewise, the preceding selection of \(\lambda_x\) gives \(a_\chi\ge\Lambda_X\) and \(b_\chi\ge2^{\nu-1}\Lambda_X\). Therefore, by the definitions of \(C_1\) and \(C_2\), \(C_1\ge\Lambda_X\) and \(C_2\ge\Lambda_X\).~The function \(s\mapsto C_1s^\mu+C_2s^\nu\) is continuous, strictly increasing on \(\mathbb R_{\ge0}\), vanishes at zero, and diverges as \(s\rightarrow\infty\). Hence, (61) admits a unique nonnegative solution. Whenever \(\mathcal J>\bar{\mathcal J}\), one has \(\Delta_J<\theta_J(C_1\mathcal J^\mu+C_2\mathcal J^\nu)\). Substitution into the preceding mixed-functional comparison inequality gives
\(\dot{\mathcal J}\le-(1-\theta_J)(C_1\mathcal J^\mu+C_2\mathcal J^\nu)\).~By the exact two-power settling-time integral, the time required after \(t_0\) for \(\mathcal J\) to enter \(\mathcal B_J\) satisfies
\(T_{\mathcal J}<\dfrac{\mathfrak C_{\mu,\nu}}{(1-\theta_J)C_1^{\frac{\nu-1}{\nu-\mu}}C_2^{\frac{1-\mu}{\nu-\mu}}}\).
Since
\((\nu-1)/(\nu-\mu)+(1-\mu)/(\nu-\mu)=1\)
and \(C_1,C_2\ge\Lambda_X\), one has
\(C_1^{\frac{\nu-1}{\nu-\mu}}C_2^{\frac{1-\mu}{\nu-\mu}}\ge\Lambda_X\).
Therefore,
\(T_{\mathcal J}\le\dfrac{\mathfrak C_{\mu,\nu}}{(1-\theta_J)\Lambda_X}=T_X\).
Because \(T_E\le\bar T_E\),
\(T_E+T_W+T_{\max}+T_{\mathcal J}
\le\bar T_E+T_W+T_{\max}+T_X=T_p\).
Hence, \(\mathcal J\) enters \(\mathcal B_J\) no later than the designer-assigned deadline \(T_p\).~At the boundary \(\mathcal J=\bar{\mathcal J}\), the preceding mixed-functional comparison inequality together with (61) gives
\(\dot{\mathcal J}\le-(1-\theta_J)\Delta_J/\theta_J\le0\).
Thus, \(\mathcal B_J\) is forward invariant.~It remains to recover pointwise bounds from the finite-window quantities. For every \(t\ge T_p\),
\(\mathcal I_i(t)\le\mathcal I_\chi(t)\le\mathcal J(t)\le\bar{\mathcal J}\).
If \(V_i^\ast(\chi_i(t))\le2L_{\dot V_i}T_i\), then the first alternative in the definition of \(\bar V_i\) already holds. Otherwise,
\(V_i^\ast(\chi_i(t))>2L_{\dot V_i}T_i\).
For every \(\tau\in[t-T_i,t]\), \eqref{eq:local_Vdot_uniform_bound} gives~\(V_i^\ast(\chi_i(\tau))
\ge V_i^\ast(\chi_i(t))
-L_{\dot V_i}|t-\tau|
\ge V_i^\ast(\chi_i(t))
-L_{\dot V_i}T_i
>\frac12V_i^\ast(\chi_i(t))\).
Thus,
\(\bar{\mathcal J}\ge\mathcal I_i(t)>(T_i/2)V_i^\ast(\chi_i(t))\),
which implies
\(V_i^\ast(\chi_i(t))<2\bar{\mathcal J}/T_i\).
Combining the two cases yields
\(V_i^\ast(\chi_i(t))
\le
\max\{2L_{\dot V_i}T_i,2\bar{\mathcal J}/T_i\}
=
\bar V_i\).~By Lemma~\ref{lem:local_value_bounds},
\(\underline\alpha_i(\|\chi_i(t)\|)\le V_i^\ast(\chi_i(t))\le\bar V_i\)
whenever \(\chi_i(t)\in\Omega_i^r\).
Since \(\underline\alpha_i\in\mathcal K_\infty\),
\(\|\chi_i(t)\|\le\underline\alpha_i^{-1}(\bar V_i)\)
on the nonterminal comparison region. If instead
\(\chi_i(t)=0\) or \(0<\|\chi_i(t)\|<r_{i,-}\), the corresponding
terminal condition of Remark~\ref{remark3} has already
been reached and no comparison estimate is required. Thus, with
\(r_{\chi_i}:=\max\{r_{i,-},
\underline\alpha_i^{-1}(\bar V_i)\}\),
one has \(\|\chi_i(t)\|\le r_{\chi_i}\), which establishes the
pointwise bound stated above. Stacking all local coordination errors gives
\(\|\chi(t)\|^2=\sum_{i=1}^{N}\|\chi_i(t)\|^2
\le\sum_{i=1}^{N}r_{\chi_i}^2\).~Finally,
\(\chi=(\mathbf H\otimes\mathbf I_n)e\).
Since the leader is globally reachable, \(\mathbf H\) is nonsingular. Therefore,
\(e=(\mathbf H^{-1}\otimes\mathbf I_n)\chi\), and
\(\|e(t)\|
\le
\|\mathbf H^{-1}\otimes\mathbf I_n\|\|\chi(t)\|\),
which proves the formation-error bound stated above. Moreover,
\(V_{c,i}(t)\le V_c(t)\le\mathcal J(t)\le\bar{\mathcal J}\),
and hence
\(\|\widetilde W_i(t)\|\le\sqrt{2\alpha_{c,i}\bar{\mathcal J}}\).
Combining this with the post-learning bound
\eqref{eq:local_post_learning_critic_ball}
gives the critic-error bound stated above.~Consider finally the exact case. Under the exact critic conditions in the theorem, Theorem~\ref{thm:critic_predefined_multi} gives
\(\widetilde W_i(t)=0\) for
\(t\ge T_E+T_W\), \(i=1,\ldots,N\).
Under
\(\bar\varepsilon_{g,i}=0\),
\(\bar\varpi_{r,i}=0\), and
\(\bar\varpi_{r,ij}=0\),
the learned secure policies coincide with their ideal saddle realizations and the actual adversarial channels coincide with the corresponding saddle channels after the critic deadline. Hence, \(\Delta_{W_i}=0\). Moreover, under the exact matching conditions, the original policy- and adversarial-mismatch terms vanish identically, so the Young-relaxed residual bound \(\Delta_{x_i}\) is no longer invoked. Consequently, the local state comparison reduces to its exact zero-residual form, and therefore \(\Delta_J=0\). For \(t\ge t_0\), while the local coordination errors remain in their nonterminal comparison regions, \(\dot{\mathcal J}\le-C_1\mathcal J^\mu-C_2\mathcal J^\nu\). The same inverse gain conditions imply that the associated exact two-power settling-time budget is no larger than \(T_X\). In view of Remark~\ref{remark3}, this comparison is required only until the corresponding terminal neighborhoods are reached. Consequently, in the exact-residual case, each local coordination error reaches \(\mathcal B_{r_{i,-}}(0)\) no later than \(T_p\), while \(\widetilde W_i(t)=0\) for all \(t\ge T_E+T_W\). Using \(e=(\mathbf H^{-1}\otimes\mathbf I_n)\chi\) together with the radii defined in the theorem gives the corresponding terminal bound for the leader-rooted formation error. This establishes the exact-residual terminal-region conclusion.
\end{proof}

\begin{remark}[Deadline accounting]
\label{rem:deadline_accounting}
The decomposition
\(T_p=\bar T_E+T_W+T_{\max}+T_X\)
is a conservative worst-case accounting used to certify the prescribed
deadline, rather than a sequential description of the physical closed-loop
evolution. The state and critic evolve simultaneously, and the formation
error may decrease substantially while the critic is still adapting.
Accordingly, actual convergence may occur before \(T_p\); the theorem only
guarantees that the certified residual set is reached no later than the
designer-assigned deadline.
\end{remark}

\subsection{Discussion}

The inner radius \(r_{i,-}>0\) represents a certification tradeoff rather than a physical limitation of the closed-loop system. Since the comparison bounds of Lemma~\ref{lem:local_value_bounds} are established on \(\{\chi_i\in\Omega_i^r:\|\chi_i\|\ge r_{i,-}\}\), the present analysis certifies predefined-time convergence to a prescribed neighborhood of the origin rather than to \(\chi_i=0\). Decreasing \(r_{i,-}\) tightens this neighborhood, but enlarges the computable bound \(\bar c_i\), which scales conservatively as \(r_{i,-}^{-2}\) (see~\cite{VuDoan2026PredefinedTimeResilientIRL}), and may require larger state-penalty or learning gains, increasing conservatism, control effort, and numerical sensitivity. If the trajectory enters \(\|\chi_i\|<r_{i,-}\), the outer-region comparison is no longer required because the prescribed formation objective has already been achieved. If convergence to the origin is additionally desired, a two-region design may switch inside \(\mathcal B_{r_{i,-}}(0)\) to a local stabilizing controller or a conventional ADP/IRL scheme without the outer-region two-power comparison.

\section{Simulation}

\FloatBarrier

\subsection{Simulation Setup}

Consider one leader and four nonlinear followers evolving in the two-dimensional plane. The state of follower \(i\) is \(x_i=\operatorname{col}(p_{ix},p_{iy},v_{ix},v_{iy})\in\mathbb R^4\), and the follower dynamics are \(\dot x_i=f_i(t,x_i)+\mathbf g_i(x_i)u_{ci}+\mathbf g_i(x_i)u_{ai}+\mathbf d_i\omega_i\), \(i\in\{1,\ldots,4\}\), where \(u_{ci}\in\mathbb R^2\) is the secure control input, \(u_{ai}\in\mathbb R^2\) is the actuator-side FDI signal, and \(\omega_i\in\mathbb R^2\) is the external disturbance. The unknown nonlinear drift is \(f_i(t,x_i)=\operatorname{col}(v_{ix},v_{iy},F_{ix},F_{iy})\), where \(F_{ix}=-0.30v_{ix}+0.16\sin(0.45p_{ix})+0.05\sin(p_{iy}v_{ix}/8)+0.035\cos(0.80t+\vartheta_i)+c_i\), \(F_{iy}=-0.28v_{iy}+0.15\sin(0.42p_{iy})+0.05\sin(p_{ix}v_{iy}/8)+0.035\sin(0.70t+\vartheta_i)-c_i\), \(c_i=0.02v_{ix}v_{iy}/[1+0.20(v_{ix}^2+v_{iy}^2)]\), and \(\vartheta_i=0.7(i-1)\). The state-dependent input matrix is \(\mathbf g_i(x_i)=\operatorname{col}(\mathbf 0_{2\times2},\operatorname{diag}(g_{ix},g_{iy}))\), where \(g_{ix}=1+0.08\cos[0.30p_{ix}+0.20(i-1)]\) and \(g_{iy}=0.95+0.08\sin[0.25p_{iy}-0.20(i-1)]\), while \(\mathbf d_i=\operatorname{col}(\mathbf 0_{2\times2},\mathbf I_2)\). The disturbance is \(\omega_i(t)=\operatorname{col}(\omega_{ix},\omega_{iy})\), where \(\omega_{ix}=e^{-\lambda_\omega t}\{0.045\sin([1.30+0.10(i-1)]t)+0.018\cos([2.20+0.07(i-1)]t+0.30(i-1))\}\) and \(\omega_{iy}=e^{-\lambda_\omega t}\{0.040\cos([1.50+0.08(i-1)]t)+0.020\sin([2.45+0.09(i-1)]t+0.20(i-1))\}\), with \(\lambda_\omega=0.06\). The FDI attack is activated at \(t_a=8\,\mathrm{s}\), with \(u_{ai}(t)=\mathbf 0_2\) for \(t<t_a\) and \(u_{ai}(t)=\operatorname{col}(u_{ai,x},u_{ai,y})\) for \(t\ge t_a\). Specifically, \(u_{ai,x}=e^{-\lambda_a(t-t_a)}[a_{xi}^{(1)}\sin(\omega_{xi}^{(1)}t+\vartheta_i)+a_{xi}^{(2)}\cos(\omega_{xi}^{(2)}t+\varphi_i)]\) and \(u_{ai,y}=e^{-\lambda_a(t-t_a)}[a_{yi}^{(1)}\cos(\omega_{yi}^{(1)}t+\varphi_i)+a_{yi}^{(2)}\sin(\omega_{yi}^{(2)}t+\vartheta_i)]\), where \(\lambda_a=0.08\), \(\vartheta_i=0.7(i-1)\), and \(\varphi_i=0.45+0.30(i-1)\). The heterogeneous attack amplitudes are \(a_x^{(1)}=\operatorname{col}(0.32,0.46,0.39,0.54)\), \(a_x^{(2)}=\operatorname{col}(0.15,0.11,0.18,0.13)\), \(a_y^{(1)}=\operatorname{col}(0.41,0.34,0.52,0.47)\), and \(a_y^{(2)}=\operatorname{col}(0.12,0.19,0.14,0.17)\), whereas the corresponding frequency vectors are \(\omega_x^{(1)}=\operatorname{col}(0.83,1.07,0.76,1.23)\), \(\omega_x^{(2)}=\operatorname{col}(1.64,1.41,1.89,1.52)\), \(\omega_y^{(1)}=\operatorname{col}(0.91,1.18,0.87,1.31)\), and \(\omega_y^{(2)}=\operatorname{col}(1.72,1.55,1.96,1.63)\).

The leader state is \(x_0=\operatorname{col}(p_{0x},p_{0y},v_{0x},v_{0y})\), with \(\dot p_{0x}=v_{0x}\), \(\dot p_{0y}=v_{0y}\), \(\dot v_{0x}=-0.035p_{0x}-0.160v_{0x}+0.480\cos(0.32t)\), and \(\dot v_{0y}=-0.030p_{0y}-0.180v_{0y}+0.420\sin(0.28t)\). The desired offsets are \(h_1=\operatorname{col}(-10,-6,0,0)\), \(h_2=\operatorname{col}(10,-6,0,0)\), \(h_3=\operatorname{col}(-10,6,0,0)\), and \(h_4=\operatorname{col}(10,6,0,0)\). The model is used solely to generate simulation data and is not required by the proposed learning controller. \(\mathbf A\) is the adjacency matrix of the cycle graph \(\mathcal C_4\), i.e., \(a_{12}=a_{21}=a_{23}=a_{32}=a_{34}=a_{43}=a_{41}=a_{14}=1\), with all remaining entries equal to zero. The leader-pinning matrix is \(\mathbf B_0=\operatorname{diag}(1.4,0,0,1.2)\), where \(\mathbf L\) denotes the Laplacian matrix and \(\mathbf H=\mathbf L+\mathbf B_0\).

\FloatBarrier
For each follower, the critic is implemented by a six-node Gaussian RBFNN, \(\hat V_i(\chi_i)=\hat W_i^\top\phi_i(\chi_i)\), where \(\hat W_i\in\mathbb R^6\) and \(\chi_i=\operatorname{col}(\chi_{pi},\chi_{vi})\in\mathbb R^4\). For numerical conditioning, the critic input is normalized as \(\bar\chi_i=\operatorname{col}(\chi_{pi}/5,\chi_{vi}/3)\), and \(\phi_i(\chi_i)=\operatorname{col}(\phi_{i1},\ldots,\phi_{i6})\), with \(\phi_{ik}(\chi_i)=\exp(-\|\bar\chi_i-c_{ik}\|^2/\sigma_{ik}^2)\). The Gaussian centers are distributed over the normalized operating region with overlapping widths. The initial conditions are \(x_0(0)=\operatorname{col}(0,0,1.2,1.6)\), \(x_1(0)=\operatorname{col}(-5,-10,2.2,0.8)\), \(x_2(0)=\operatorname{col}(6,-1,0.5,2.5)\), \(x_3(0)=\operatorname{col}(-4,9,2.0,2.3)\), and \(x_4(0)=\operatorname{col}(5,2,0.3,1.0)\). To evaluate the initial-condition-independent convergence property, the initial formation-error perturbations are additionally scaled by \(s\in\{0.5,1.0,1.5,2.0,2.5\}\). The simulation is conducted over \(t\in[0,60]\,\mathrm{s}\) using a fourth-order Runge--Kutta method with \(T_s=0.01\,\mathrm{s}\). The integral Bellman window is \(T_i=0.25\,\mathrm{s}\). Replay data are sampled every \(0.05\,\mathrm{s}\) over the pre-attack interval \(t\in[0,8]\,\mathrm{s}\), with at most \(80\) samples retained thereafter to preserve the required initial excitation. The critic is initialized at \(\hat W_i(0)=50\mathbf 1_6\) (see our prior work~\cite{VuDoan2026FixedTimeResilientIRL}, Sec.~V, for a systematic initial-weight selection procedure), with \(\gamma_1=0.65\), \(\gamma_2=1.70\), \(k_{i1}=1.20\), \(k_{i2}=0.18\), \(k_{i1}^{r}=1.20\), and \(k_{i2}^{r}=0.18\). The input penalty and actuator bound are \(\mathbf R_i=0.08\mathbf I_2\) and \(\bar{\mathbf U}_i=40\mathbf I_2\), respectively.

\FloatBarrier

\subsection{Simulation Results}
\label{sec:simulation_results}



\begin{figure}[H]
    \centering
    \includegraphics[width=\columnwidth]
    {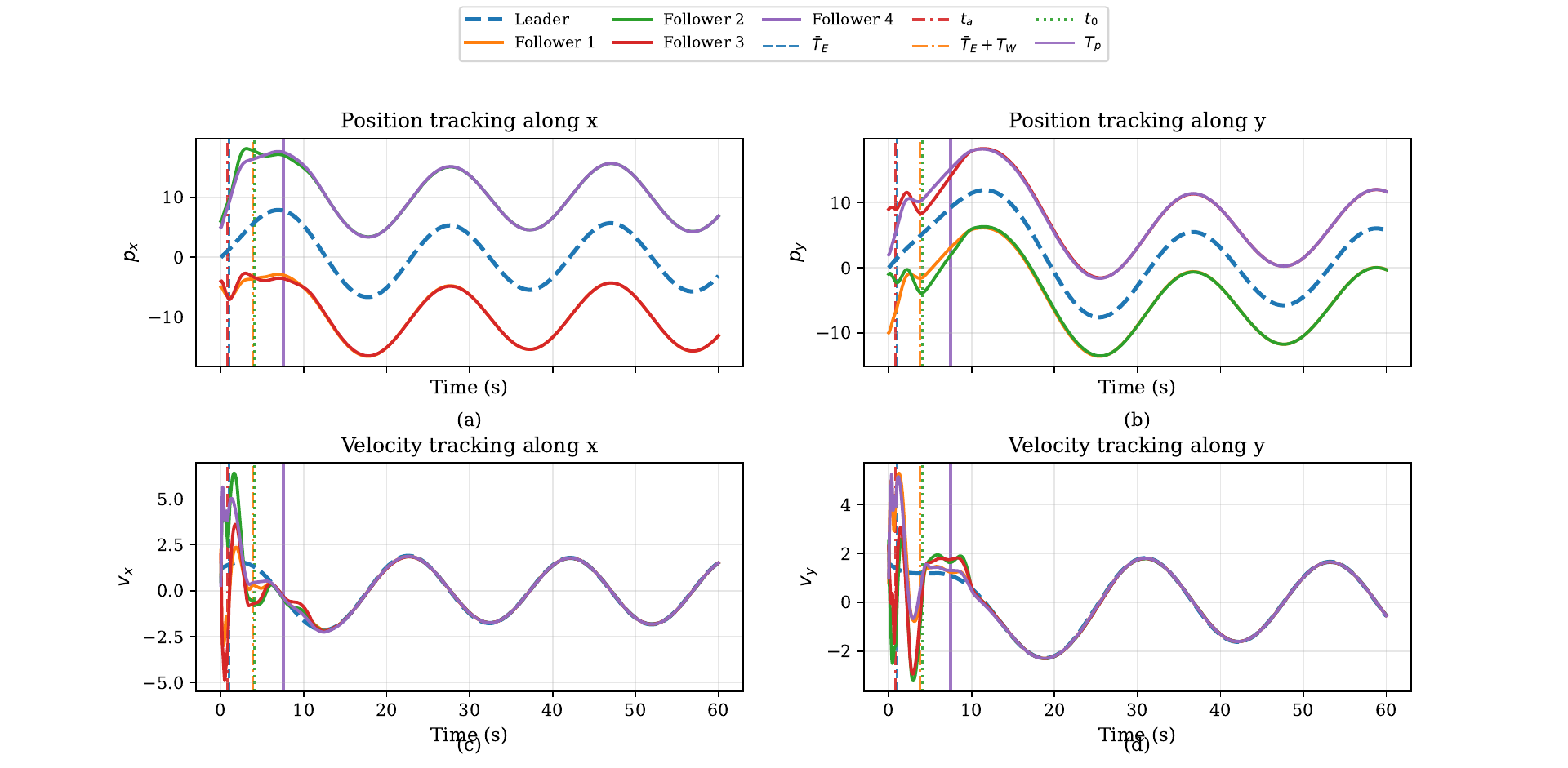}
    \caption{Position and velocity trajectories of the leader and
    followers together with the predefined-time milestones.}
    \label{fig:state_tracking}
\end{figure}




\begin{figure}[H]
    \centering
    \includegraphics[
        width=0.95\columnwidth,
        keepaspectratio
    ]{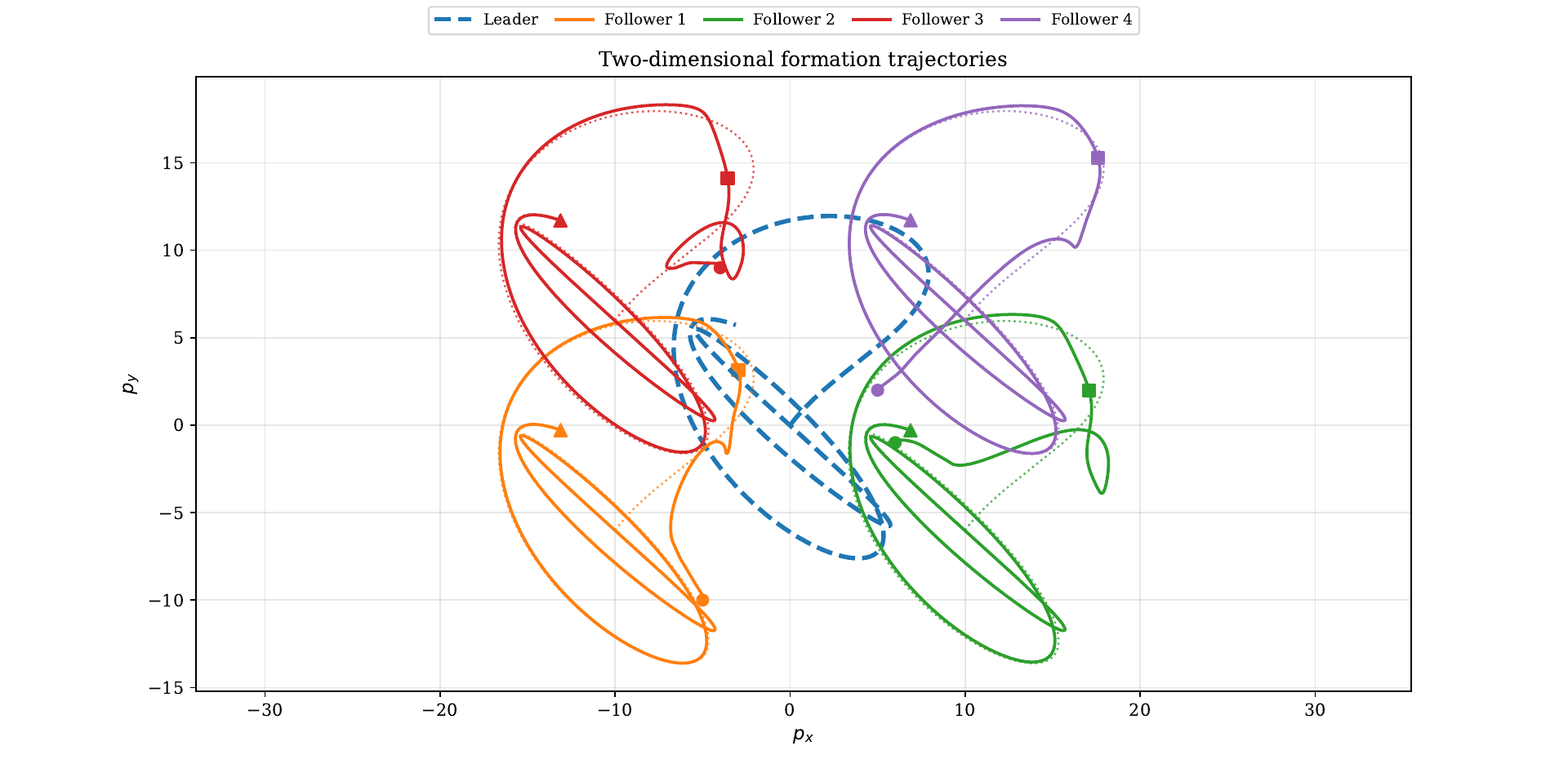}
    \caption{Two-dimensional trajectories of the leader and four
    followers; dotted curves denote the desired follower trajectories.}
    \label{fig:formation_2d}
\end{figure}

Figures~\ref{fig:state_tracking} and~\ref{fig:formation_2d} jointly illustrate the closed-loop formation behavior in both state and physical configuration spaces. As the assigned deadline \(T_p\) is approached, the follower velocities become nearly synchronized with the leader and the formation errors enter a small neighborhood of the origin while the prescribed relative offsets are maintained. This is consistent with the practical predefined-time result, which guarantees entrance into a residual ball rather than exact zero convergence under nonideal conditions. Its radius depends on the approximation residuals and the admissible disturbance/attack bounds. The zero-sum game represents a mathematically defined worst-case interaction; actual exogenous disturbances need not coincide exactly with the corresponding saddle realization. Hence, the small nonzero residual is both theoretically expected and physically meaningful.

The trajectories also expose the deadline--effort tradeoff: reducing \(T_p\) requires stronger closed-loop dissipation and larger state-side gains, yielding faster convergence but greater control effort and sensitivity to actuator limits. Thus, \(T_p\) should be regarded not only as a convergence specification but also as a design variable constrained by available actuation and the disturbance environment.


\begin{figure}[H]
    \centering
    \includegraphics[width=\columnwidth]
    {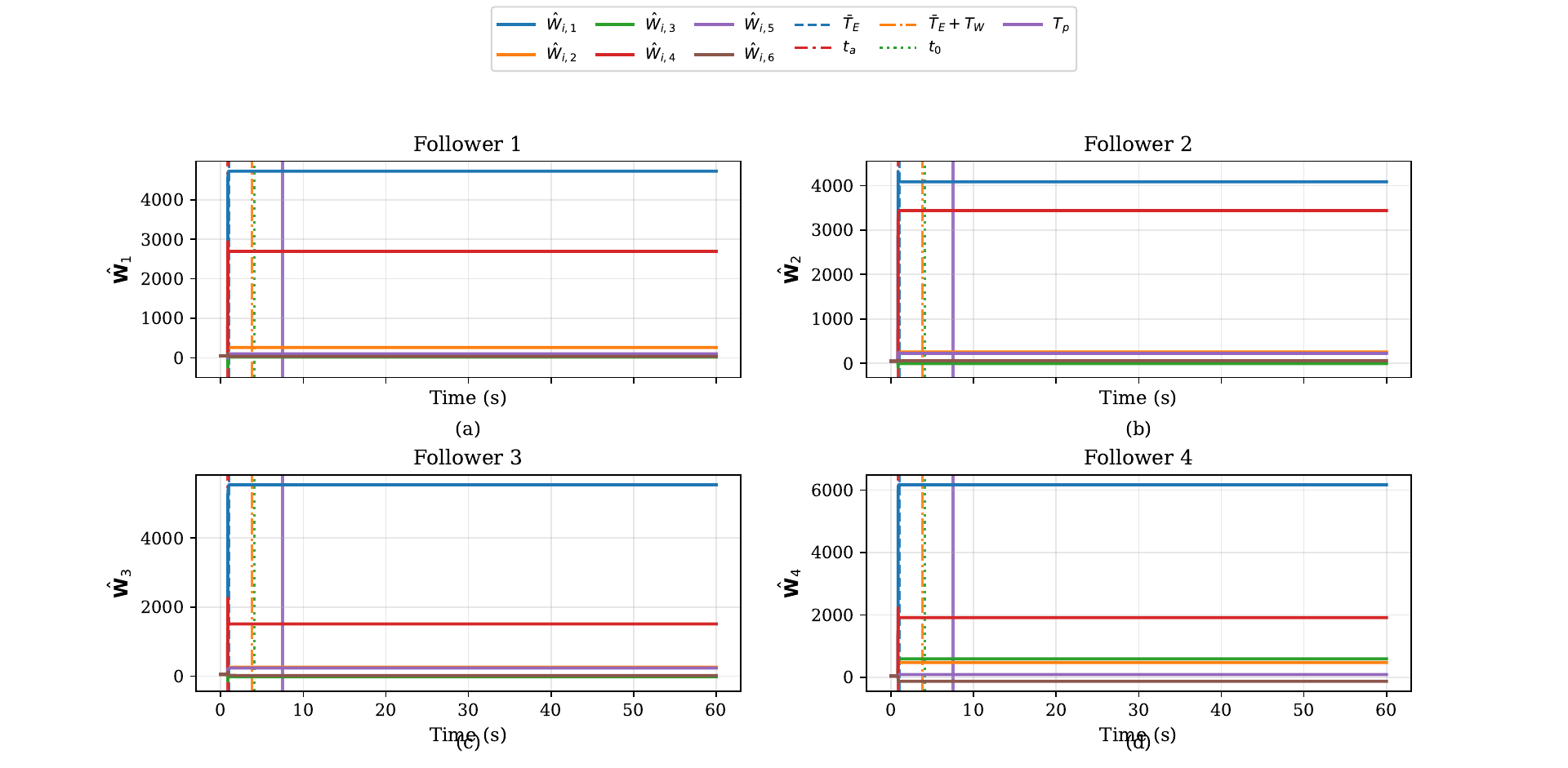}
    \caption{Critic-weight trajectories of the four followers together
    with the predefined-time learning milestones.}
    \label{fig:critic_weights}
\end{figure}

Figure~\ref{fig:critic_weights} shows a very short critic transient. Once the
informative replay stack becomes available, the dominant weight components
rapidly reach an essentially stationary bounded regime. This behavior results
from the proposed learning architecture rather than any imposed weight-freezing
mechanism. The retained replay stack preserves an algebraic information source
after the finite excitation interval, so the recorded Bellman--Isaacs residuals
continue to contribute to the update. The two-power law combines a high-order
correction for large critic errors with a low-order correction near the residual
set, while the critic gain \(\alpha_{c,i}\) is synthesized inversely from the
assigned horizon \(T_W\); hence, a shorter learning budget produces a more
aggressive initial adaptation rate. The different steady weight magnitudes,
including dominant components of order \(10^3\), are expected because the agents
have different local regressors and value-function approximations, and their
numerical equality is neither required nor implied by the theory. Quantitatively,
the visible adaptation is concentrated within approximately the first
\(5\!-\!7\,\mathrm{s}\), after which the retained weights remain essentially
stationary over the remainder of the \(60\,\mathrm{s}\) simulation. The dominant
components settle at approximately \(4\times10^3\) to \(6\times10^3\) across the
four followers, while the remaining components converge to substantially smaller
levels. Moreover, no appreciable post-learning drift is observed after the
assigned critic-learning interval, providing numerical support for the bounded
finite-window/replay analysis and the prescribed learning deadline.

This behavior also reveals an implementation tradeoff: reducing \(T_W\)
accelerates learning but increases the critic gain, parameter excursion, and
numerical stiffness. Thus, a short prescribed horizon should be balanced against
basis conditioning, sampling rate, and finite-precision implementation,
motivating conditioning-aware replay selection and gain regularization for
larger critic architectures.


\begin{figure}[H]
    \centering
    \includegraphics[width=\columnwidth]
    {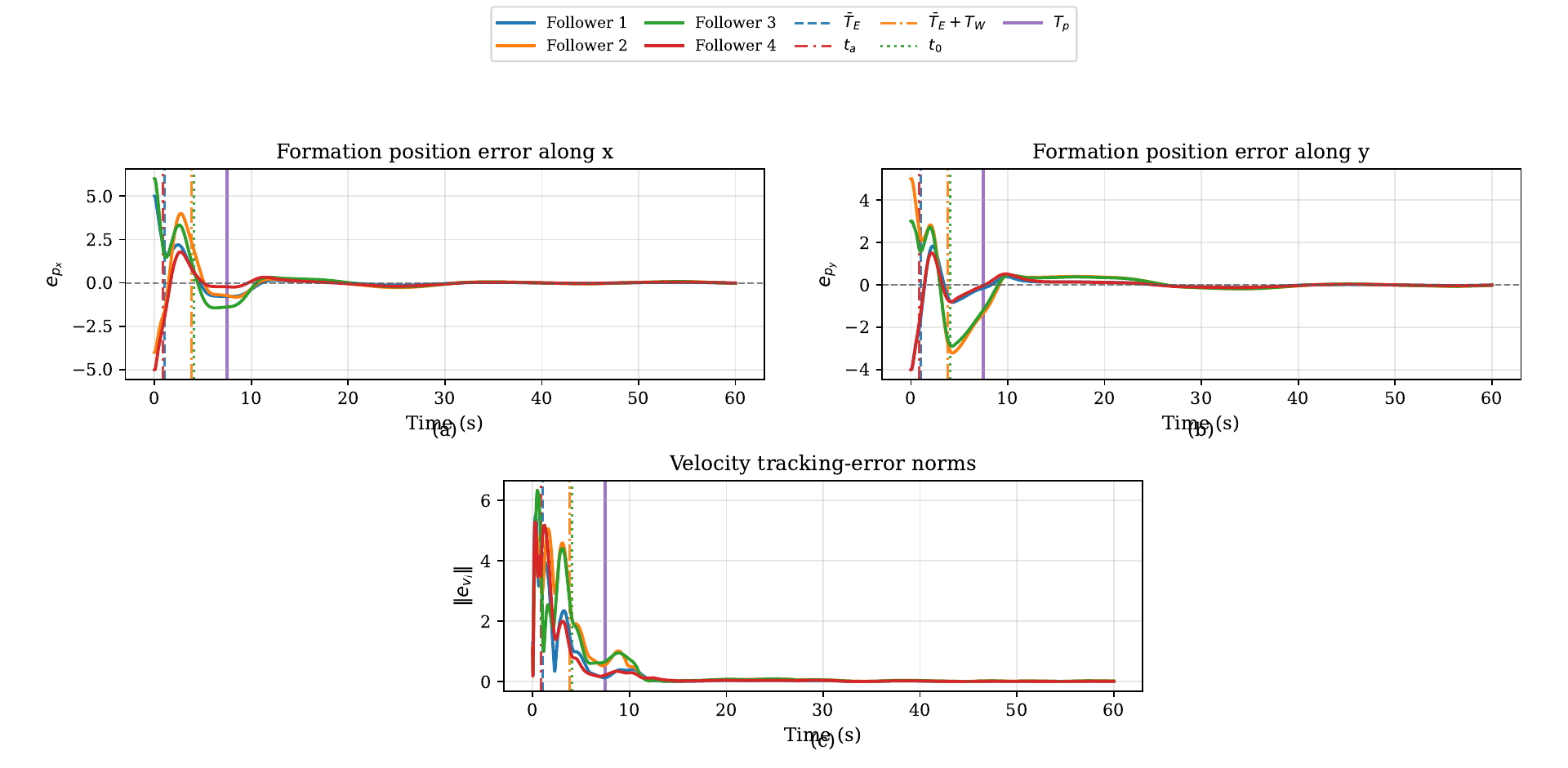}
    \caption{Formation-position errors and velocity-tracking error norms
    of the four followers.}
    \label{fig:formation_errors}
\end{figure}

Figure~\ref{fig:formation_errors} directly verifies the closed-loop
regulation result. The initial position errors reach approximately
\(5\)--\(6\), whereas the largest velocity-error norm is of the same
order. Despite the subsequent FDI and disturbance inputs, all error
channels decrease rapidly and enter a small neighborhood of the origin
before \(T_p\). Small residual excursions remain during the nonideal
transient, which is consistent with the practical predefined-time result:
the theorem guarantees entrance into a residual set determined by critic,
approximation, disturbance, and adversarial mismatches rather than exact
zero convergence under nonvanishing perturbations. As the finite-energy
exogenous signals decay, the observed error neighborhood contracts
accordingly.



\begin{figure}[H]
    \centering
    \includegraphics[
        width=\columnwidth,
        height=0.13\textheight
    ]{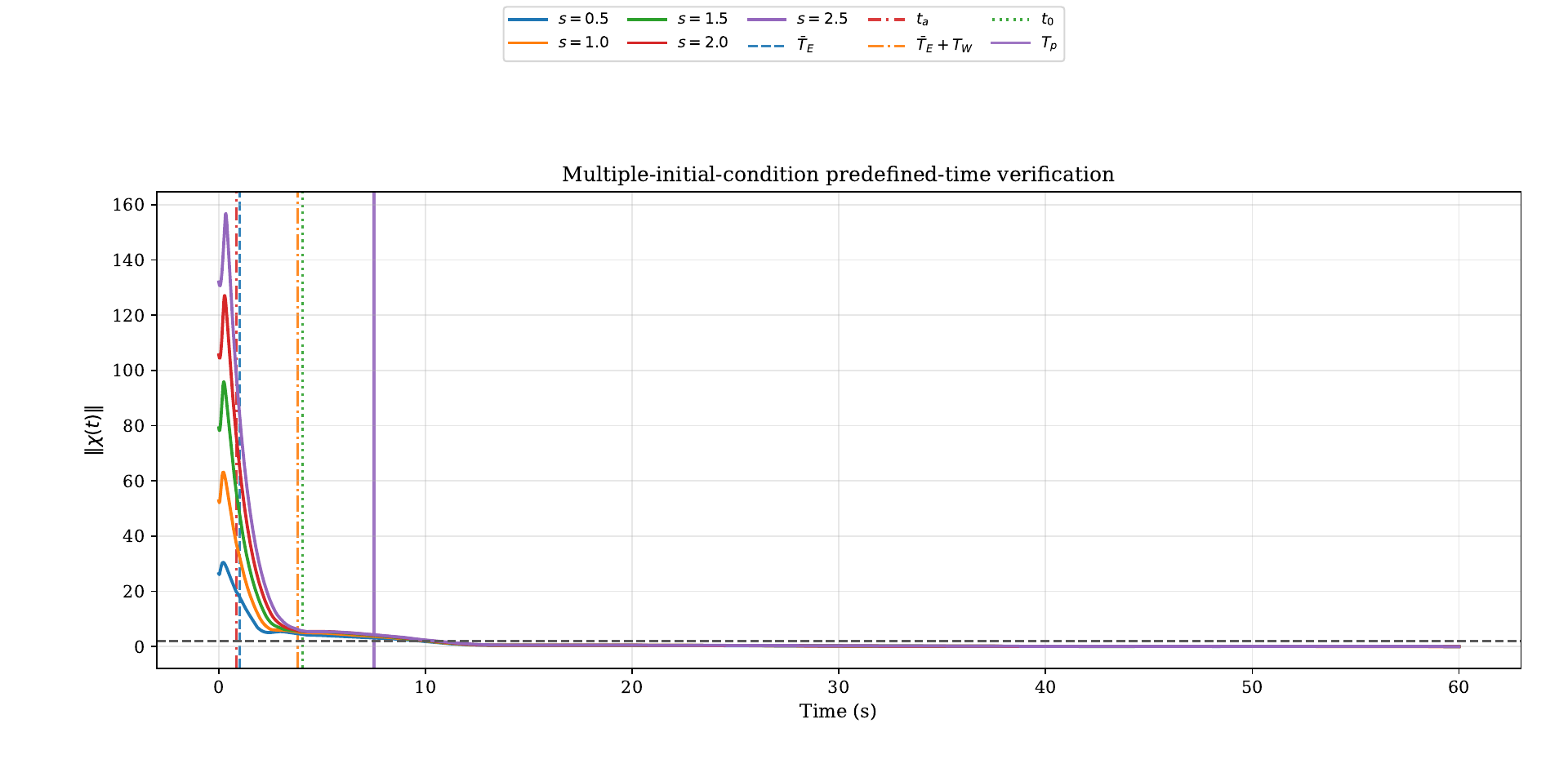}
    \caption{Predefined-time verification under five initial-error
    scaling factors.}
    \label{fig:multiple_initial_conditions}
\end{figure}



The initial-condition independence of the prescribed horizon is examined
in Fig.~\ref{fig:multiple_initial_conditions} using
\(s\in\{0.5,1.0,1.5,2.0,2.5\}\). Although the initial value of
\(\|\chi(t)\|\) increases from approximately \(30\) to more than \(150\),
all trajectories exhibit the same qualitative two-power decay and enter the
small residual region within the single designer-selected deadline \(T_p\).
This highlights the key distinction from conventional finite-time behavior:
the deadline is fixed before operation and is not recomputed from the
initial condition.

Larger initial conditions mainly increase the early control demand and the
high-order dissipation magnitude, rather than proportionally extending the
convergence interval. This is consistent with the role of the superlinear
term in preventing large initial errors from generating arbitrarily long
transients.


\begin{figure}[H]
    \centering
    \includegraphics[width=\columnwidth]
    {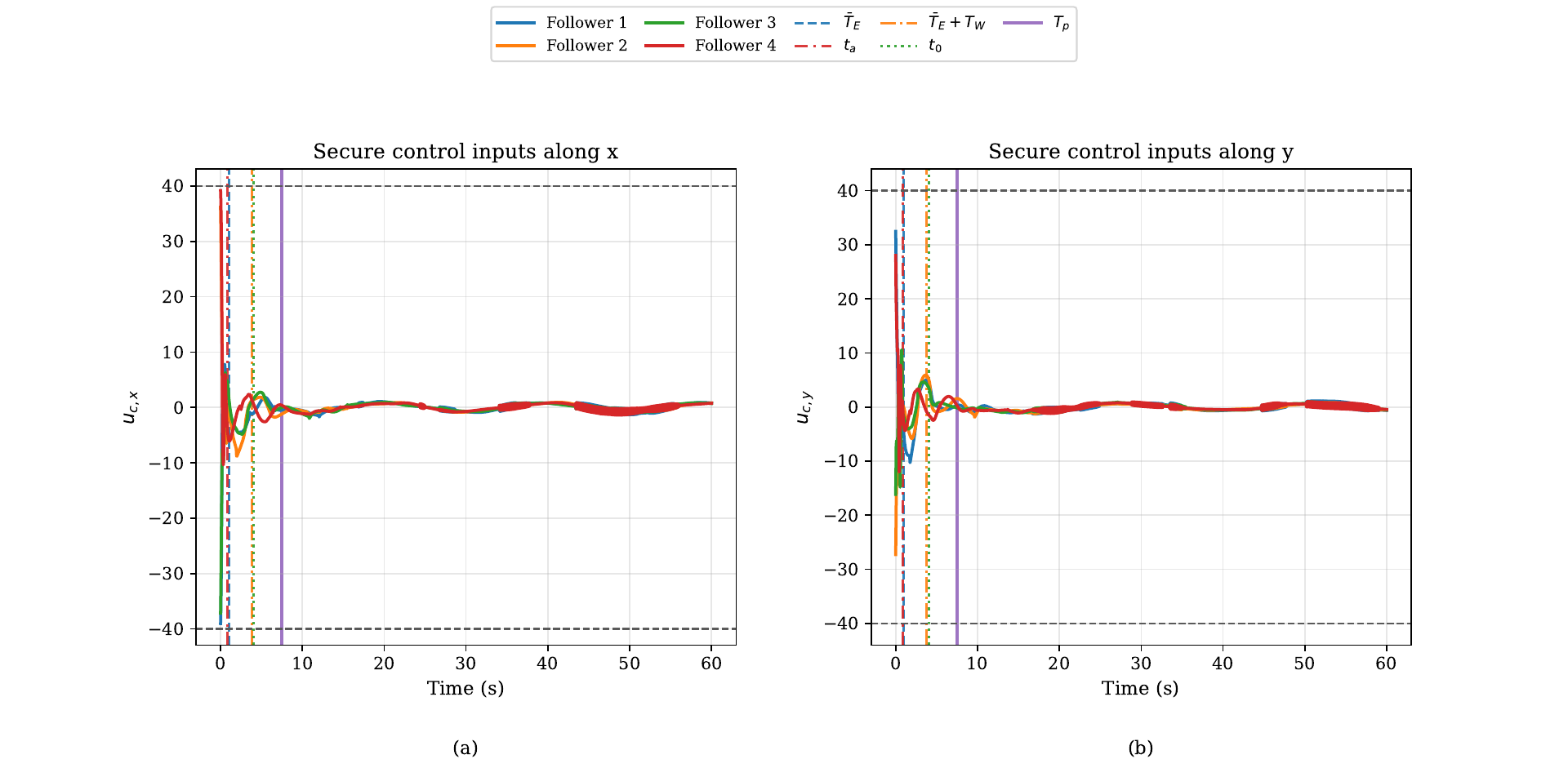}
    \caption{Secure control inputs generated by the
    saturation-compatible policies.}
    \label{fig:secure_control}
\end{figure}

The secure inputs are reported in Fig.~\ref{fig:secure_control}. Several
channels approach the actuator boundary during the severe initial
transient, whereas the subsequent control effort remains much smaller.
Crucially, \(|u_{ci,\ell}(t)|<\bar u_{i\ell},
    \qquad
    \bar u_{i\ell}=40,\)
is preserved throughout the simulation, in agreement with the bounded
hyperbolic-tangent HJI policy. Hence, the predefined-time response is not
obtained by allowing an unbounded control action.

This result also gives a physical interpretation to the inverse-time
design. Decreasing \(T_p\) increases the required state-side gains, but
once the secure command approaches \(\bar u_{i\ell}\), further gain
increases cannot provide proportional actuator authority. Therefore,
although the mathematical synthesis permits the deadline to be assigned
through the derived gain inequalities, a physically meaningful deadline
must remain compatible with actuator capability. Quantifying this minimum
achievable deadline under hard input limits is an important practical
extension of the present theory.


\begin{figure}[H]
    \centering
    \includegraphics[width=\columnwidth]
    {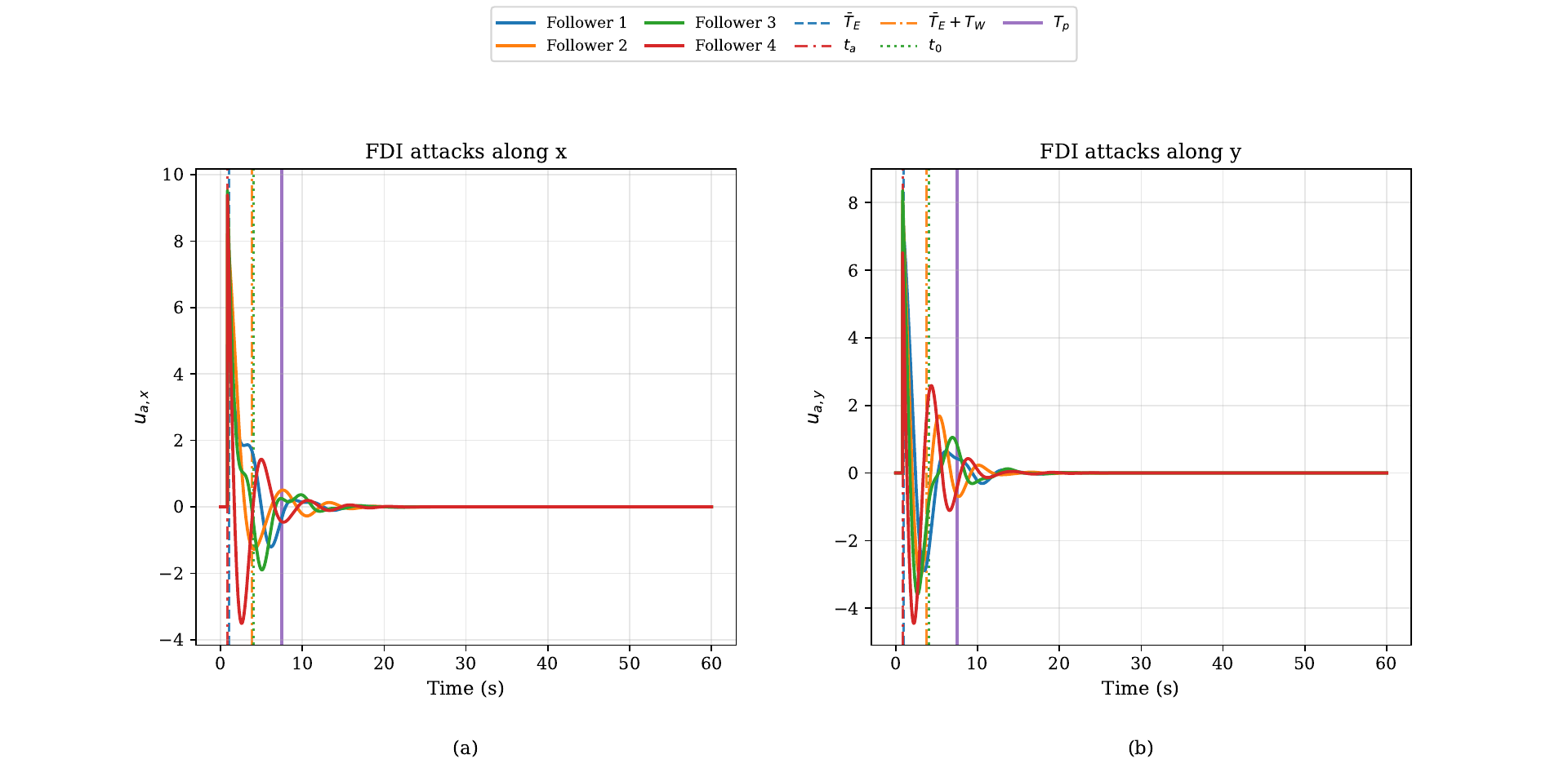}
    \caption{Actuator-side FDI attack signals applied immediately after
    the replay-data acquisition stage.}
    \label{fig:fdi_attacks}
\end{figure}


\begin{figure}[H]
    \centering
    \includegraphics[width=\columnwidth]
    {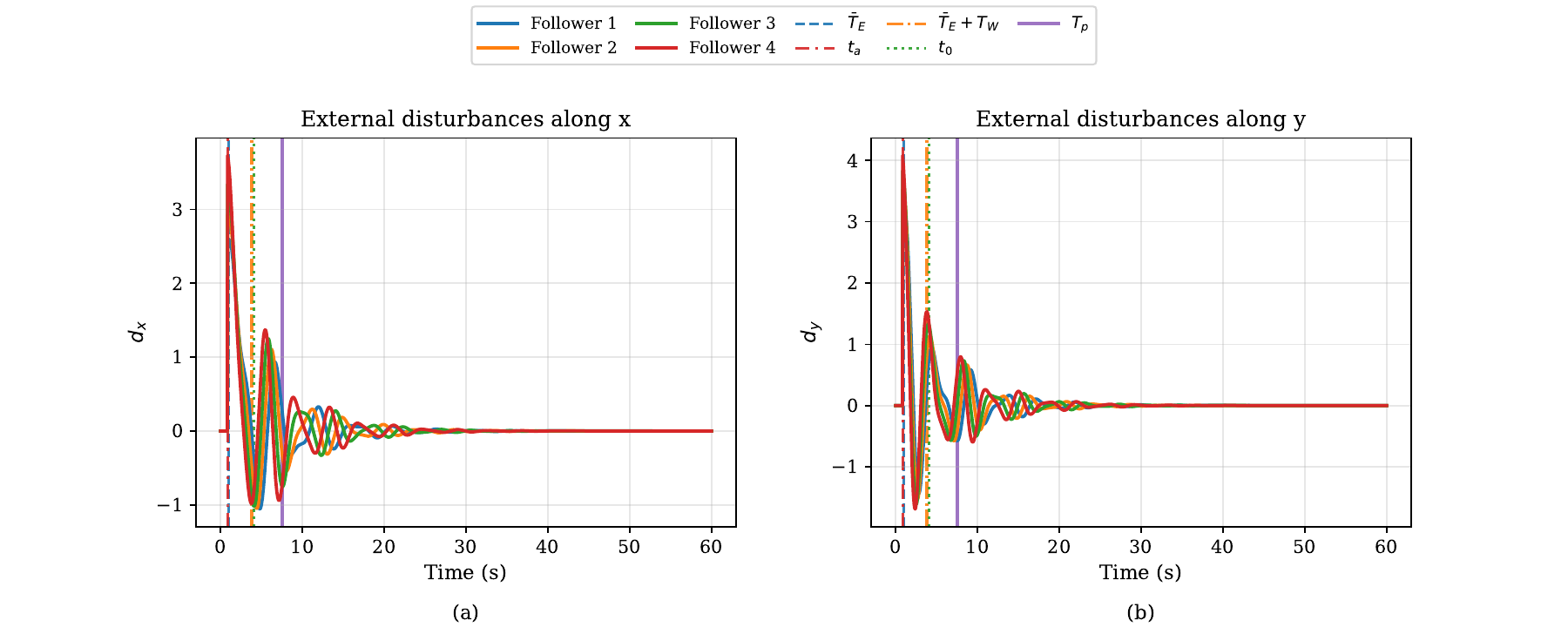}
    \caption{External disturbances acting on the followers, where $d_{i,x}$ and $d_{i,y}$ denote the $x$- and $y$-components of the disturbance signal $\boldsymbol{\omega}_i$, respectively.}
    \label{fig:external_disturbances}
\end{figure}

Figures~\ref{fig:fdi_attacks} and~\ref{fig:external_disturbances} jointly
characterize the adversarial environment used to evaluate the resilient
closed loop. The actuator-side FDI signals and follower-dependent external
disturbances are both bounded, exponentially decaying, and finite-energy,
hence consistent with the assumed
\(L_2([0,\infty))\cap L_\infty([0,\infty))\) signal classes. Both
perturbations are active during the certified transient rather than after
formation has been established. Nevertheless, as verified by
Figs.~\ref{fig:state_tracking} and~\ref{fig:formation_errors}, the state and
formation errors remain bounded and contract as the perturbation energy
decays, consistently with the disturbance- and attack-dependent residual
terms appearing in the Lyapunov analysis.

Overall, the simulations verify the main theoretical mechanisms rather than
only nominal tracking. Finite replay informativity enables sustained
two-power critic adaptation without persistent excitation, while the learned
bounded HJI policies preserve the secure actuator constraints. The resulting
state--critic closed loop remains practically regulated under simultaneous
FDI attacks and disturbances, and widely different initial coordination
errors satisfy the same prescribed deadline, confirming the
initial-condition-independent comparison result.

The results further reveal a tradeoff among deadline, learning aggressiveness,
and actuation. Smaller \(T_W\) and \(T_p\) require larger inverse-designed
critic and state gains, yielding faster convergence at the cost of larger
critic excursions, numerical stiffness, and actuator utilization. Hence,
deadline selection should account jointly for actuator authority, replay
conditioning, sampling rate, and admissible adversarial energy. This motivates
explicit lower bounds on physically feasible predefined times and tighter
residual-set estimates, together with future extensions to sampled-data
communication, finite-rate actuators, measurement noise, and experimental
networked robotic platforms.

\endgroup

\section{Conclusion}
\label{sec:conclusion}


This paper proposed a predefined-time resilient IRL framework for unknown
nonlinear leader--follower multi-agent systems under actuator constraints,
disturbances, and FDI attacks. A local zero-sum HJI game unified secure and
adversarial channels, while a saturation-aware nonquadratic utility ensured
bounded policies. Critic-only learning was achieved through an integral
Bellman--Isaacs residual without drift knowledge, and replay data removed the
need for persistent excitation. A two-power cost, predefined-time critic
learning, and pointwise--integral Lyapunov analysis yielded explicit gains
ensuring practical critic and formation convergence within a designer-assigned,
initial-condition-independent deadline. Simulations verified the results under
different initial conditions, actuator constraints, disturbances, and FDI
attacks. Future work will consider communication constraints, switching
topologies, and experiments.

\section*{Acknowledgment}
The authors would like to thank Ho Chi Minh City University of Technology (HCMUT) and Vietnam National University Ho Chi Minh City (VNU-HCM) for supporting this research.

\raggedbottom

\makeatletter
\def\@IEEEBIOskipN{0.5\baselineskip}
\makeatother

\vspace{0.3em}
\section*{Author Biographies}
\vspace{-0.6em}


\begin{IEEEbiography}
[{\includegraphics[
    width=1in,
    height=1.25in,
    clip,
    keepaspectratio
]{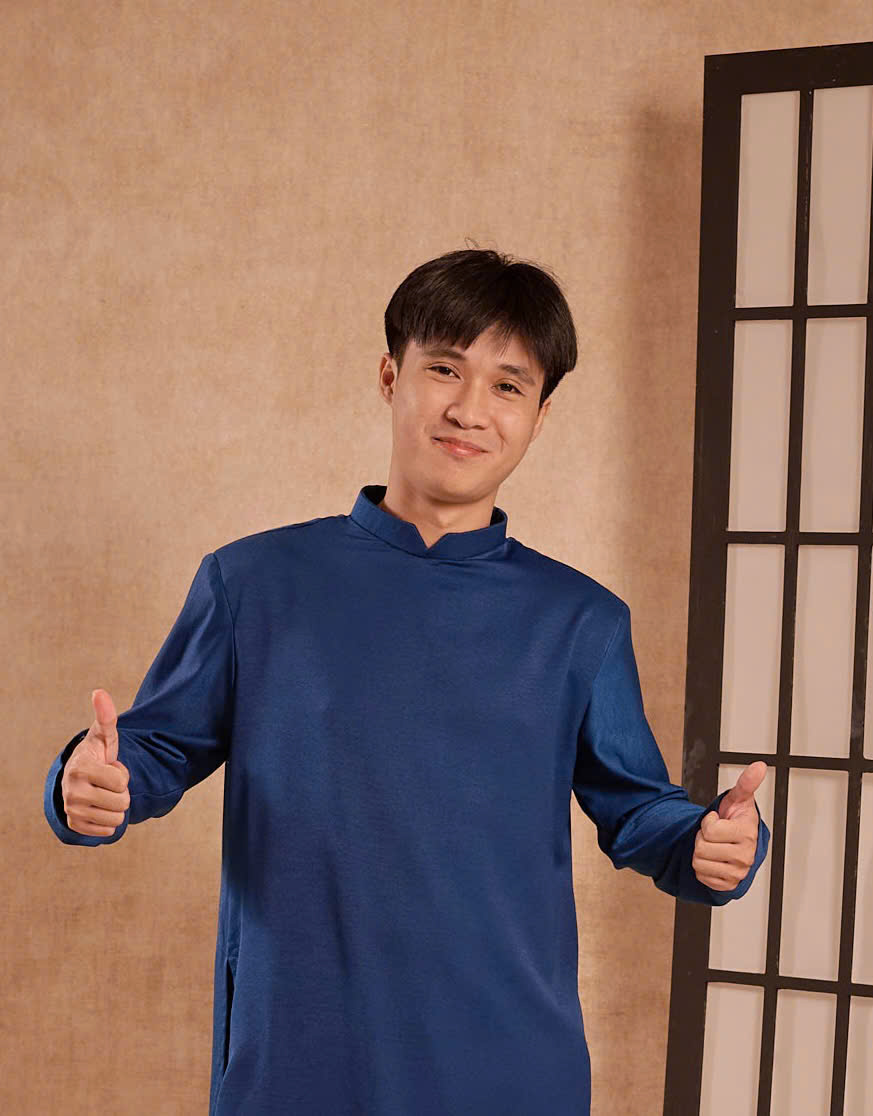}}]
{Tien Dat Vu}
Tien Dat Vu is currently a senior undergraduate student in the Vietnamese–French Program in Mechatronics Engineering at Ho Chi Minh City University of Technology (HCMUT), Vietnam National University Ho Chi Minh City (VNU-HCM), Ho Chi Minh City, Vietnam. His research interests include control theory, learning-based and nonlinear control, adaptive and robust control (Convex optimization in control), reinforcement learning, data-driven control, secure and resilient control, optimal control of uncertain nonlinear systems, multi-agent systems, and artificial intelligence for dynamical systems. His broader research interests also include Riemannian geometry, contraction theory, formal methods, and their applications to nonlinear dynamical systems and robotics.
\end{IEEEbiography}


\begin{IEEEbiography}
[{\includegraphics[
    width=1in,
    height=1.25in,
    clip,
    keepaspectratio
]{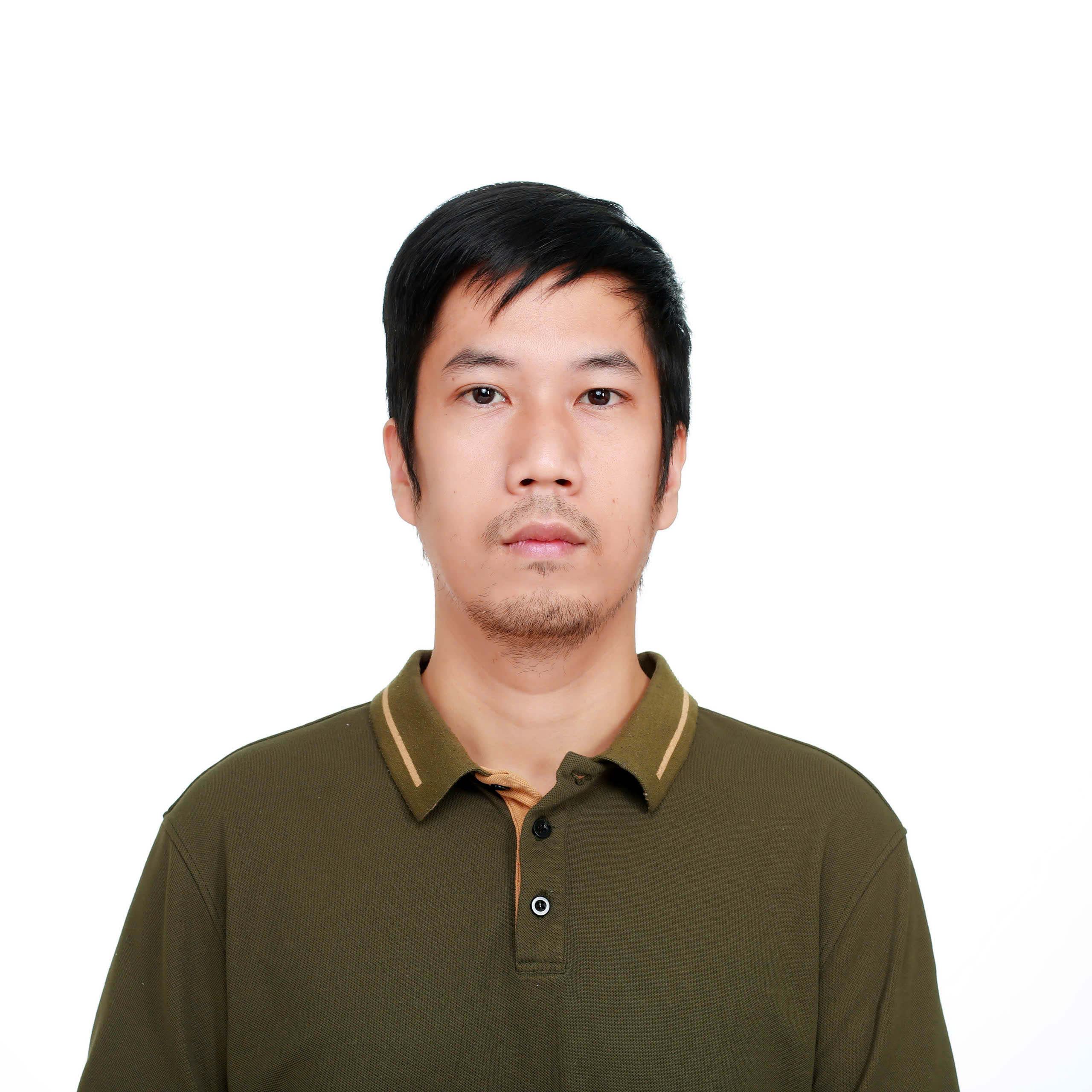}}]
{Nhat Minh Doan}
received the B.S. degree in Mechanical Engineering from
Bucknell University, Lewisburg, PA, USA, in 2015, and the
Ph.D. degree from Keio University, Tokyo, Japan, in 2021.
He is currently a Lecturer with the Faculty of Mechanical
Engineering, Ho Chi Minh City University of Technology
(HCMUT), Vietnam National University Ho Chi Minh City
(VNU-HCM), Ho Chi Minh City, Vietnam. His research interests
include unmanned aerial vehicle design and control, wind
turbine systems, multi-agent systems, networked and distributed
control, and the modeling, analysis, and control of complex
mechanical and networked dynamical systems.
\end{IEEEbiography}

\vspace{0.5em}

\end{document}